\documentclass[12pt]{article}

\usepackage{amssymb,amsthm,amsmath}
\usepackage{algorithm}
\usepackage{algpseudocode}
\usepackage{thmtools,thm-restate}
  
\makeatletter
 \setlength{\parindent}{0pt}
 \addtolength{\partopsep}{-2mm}
 \setlength{\parskip}{5pt plus 1pt}
 % \addtolength{\theorempreskipamount}{-1mm}
 % \addtolength{\theorempostskipamount}{-1mm}
 \addtolength{\abovedisplayskip}{-3mm}
 \addtolength{\textheight}{35pt}
 %\addtolength{\footskip}{-20pt}
\makeatother

\usepackage{subfig}
\usepackage[usenames,dvipsnames]{xcolor}
\usepackage[colorlinks,citecolor=blue,linkcolor=BrickRed]{hyperref}\usepackage[colorlinks,citecolor=blue,linkcolor=BrickRed]{hyperref}
\usepackage{fullpage}
\usepackage{graphicx}
\usepackage{xspace}
\usepackage{enumitem}
\usepackage{thmtools,thm-restate}
\usepackage{wrapfig}
\usepackage{comment}
\usepackage{cleveref}

\graphicspath{{figures/}}

\newtheorem{theorem}{Theorem}[section]
\newtheorem{corollary}[theorem]{Corollary}
\newtheorem{conjecture}[theorem]{Conjecture}
\newtheorem{lemma}[theorem]{Lemma}

\newtheorem{definition}[theorem]{Definition}

\newif\ifFULL
\FULLtrue

\newcommand{\IGNORE}[1]{}

\usepackage{tikz}
\usetikzlibrary{arrows}
\usetikzlibrary{arrows.meta}
\usetikzlibrary{shapes}
\usetikzlibrary{backgrounds}
\usetikzlibrary{positioning}
\usetikzlibrary{decorations.markings}
\usetikzlibrary{patterns}
\usetikzlibrary{calc}
\usetikzlibrary{fit}
\usetikzlibrary{snakes}
\tikzset{
    %Define standard arrow tip
    >=stealth',
    % Define arrow style
    pil/.style={
           ->,
           thick,
           shorten <=2pt,
           shorten >=2pt,}
}
\tikzset{->-/.style={decoration={
  markings,
  mark=at position .5 with {\arrow{>}}},postaction={decorate}}}

\newcommand{\N}{\mathbb{N}}

\newcommand{\setR}{\mathbb{R}}

\newcommand{\E}{\mathbb{E}}

\newcommand{\eat}[1]{}
\newcommand{\hide}[1]{{\Large \color{red} Contents here are hidden! To reveal contents, remove this command.}}
\newcommand{\inn}[2]{\langle {#1}, {#2} \rangle}

\newcommand{\tr}{\mathsf{tr}}
\newcommand{\diag}{\mathsf{diag}}

\newcommand{\op}{\ensuremath{\mathrm{op}}\xspace}

\newcommand{\vol}{\mathsf{vol}}

\newcommand{\rank}{\mathsf{rank}}

\allowdisplaybreaks

{\hspace*{\fill}$\Box$\par}

\title{A New Framework for Matrix Discrepancy:\\ Partial Coloring Bounds via Mirror Descent}
%\title{A New Framework for Partial Coloring:\\ Improved Bounds for Matrix Discrepancy}

\author{Daniel Dadush\thanks{Centrum Wiskunde \& Informatica, Amsterdam, Netherlands. \texttt{dadush@cwi.nl.}}\\
\and 
Haotian Jiang\thanks{University of Washington, Seattle, USA. \texttt{jhtdavid@cs.washington.edu.}} \\
\and 
Victor Reis\thanks{University of Washington, Seattle, USA. \texttt{voreis@cs.washington.edu. }}}
\date{}

\begin{document}

%\begin{titlepage}
%  \maketitle
  
%  \begin{abstract}
%\input{abstract}

%  \end{abstract}
%  \thispagestyle{empty}
%\end{titlepage}

\maketitle
\pagenumbering{roman}
\vspace{-0.2cm}
\begin{abstract}
Motivated by the Matrix Spencer conjecture, we study the problem of finding signed sums of matrices with a small matrix norm. A well-known strategy to obtain these signs is to prove, given matrices $A_1, \dots, A_n \in \setR^{m \times m}$, a Gaussian measure lower bound of $2^{-O(n)}$ for a scaling of the discrepancy body  $\{x \in \mathbb{R}^n: \| \sum_{i=1}^n x_i A_i\| \leq 1\}$. We show this is equivalent to covering its polar with $2^{O(n)}$ translates of the cube $\frac{1}{n} B^n_\infty$, and construct such a cover via mirror descent. As applications of our framework, we show:

\medskip
\noindent \textbf{Matrix Spencer for Low-Rank Matrices.} If the matrices satisfy $\|A_i\|_\op \leq 1$ and $\rank(A_i) \leq r$, we can efficiently find a coloring $x \in \{\pm 1\}^n$ with discrepancy $\|\sum_{i=1}^n x_i A_i \|_\op \lesssim \sqrt{n \log (\min(rm/n, r))}$. This improves upon the naive $O(\sqrt{n \log r})$ bound for random coloring and proves the matrix Spencer conjecture when $r m \leq n$. 

\medskip
\noindent \textbf{Matrix Spencer for Block Diagonal Matrices.} For block diagonal matrices with $\|A_i\|_\op \leq 1$ and block size $h$, we can efficiently find a coloring $x \in \{\pm 1\}^n$ with $\|\sum_{i=1}^n x_i A_i \|_\op \lesssim \sqrt{n \log (hm/n)}$. This bound was previously shown in [Levy, Ramadas and Rothvoss, IPCO 2017] under the assumption $h \leq \sqrt{n}$, which we remove. Using our proof, we reduce the matrix Spencer conjecture to the existence of a $O(\log(m/n))$ quantum relative entropy net on the spectraplex. 

\medskip
\noindent \textbf{Matrix Discrepancy for Schatten Norms.} We generalize our discrepancy bound for matrix Spencer to Schatten norms $2 \le p \leq q$. Given $\|A_i\|_{S_p} \leq 1$ and $\rank(A_i) \leq r$, we can efficiently find a partial coloring $x \in [-1,1]^n$ with $|\{i : |x_i| = 1\}| \ge n/2$ and $\|\sum_{i=1}^n x_i A_i\|_{S_q} \lesssim \sqrt{n \min(p, \log(rk))} \cdot k^{1/p-1/q}$, where $k := \min(1,m/n)$.  

\medskip
\noindent Our partial coloring bound is tight when $m = \Theta(\sqrt{n})$. We also provide tight lower bounds of $\Omega(\sqrt{n})$ for rank-$1$ matrix Spencer when $m = n$, and $\Omega(\sqrt{\min(m,n)})$ for $S_2 \rightarrow S_\infty$ discrepancy, precluding a matrix version of the Koml\'os conjecture.

\end{abstract}

\clearpage
\setcounter{tocdepth}{2}

\tableofcontents
   
\newpage

\pagenumbering{arabic}
\setcounter{page}{1}

\section{Introduction}

Discrepancy minimization has been a well-studied area of research both in mathematics and computer science \cite{c01book,m09book}. 
We start with a classical setting: given vectors $a_1, \dots, a_n \in \mathbb{R}^m$ each satisfying $\|a_i\|_\infty \leq 1$, the goal is to find a coloring $x \in \{\pm 1 \}^n$ that minimizes the discrepancy, defined as $\| \sum_{i=1}^n x_i a_i \|_\infty$.
A seminal result of Spencer \cite{s85} improves upon the $O(\sqrt{n \log m})$ bound of a random coloring via Chernoff and union bound:

\begin{restatable}[Spencer \cite{s85}]{theorem}{spencer} \label{thm:spencer}
Let $m \geq n$. Given vectors $a_1, \dots, a_n \in \mathbb{R}^m$ with $\|a_i\|_\infty \leq 1$, there exists $x \in \{\pm 1\}^n$ such that $\| \sum_{i=1}^n x_i a_i \|_\infty \lesssim \sqrt{n \log(2m/n)}$. 
\end{restatable}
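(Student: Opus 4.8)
The plan is the classical two-phase approach to Spencer's theorem: first establish a \emph{partial coloring} step producing $x\in[-1,1]^n$ with at least half of its coordinates in $\{\pm1\}$ and already meeting the target discrepancy, then iterate on the remaining free coordinates and sum a geometric series. For the partial coloring I would route through a Gaussian-measure bound, which also foreshadows the framework developed later in this paper. Write $A\in\mathbb{R}^{m\times n}$ for the matrix with columns $a_1,\dots,a_n$ and let $b_1,\dots,b_m\in\mathbb{R}^n$ be its rows, so $\|b_j\|_\infty\le1$ and hence $\|b_j\|_2\le\sqrt n$. Fix $t:=C\sqrt{n\log(2m/n)}$ and consider the symmetric convex body $K:=\{x\in\mathbb{R}^n:\|\sum_i x_i a_i\|_\infty\le t\}=\bigcap_{j=1}^m S_j$ with $S_j:=\{x:|\langle b_j,x\rangle|\le t\}$; let $\gamma_n$ denote the standard Gaussian measure on $\mathbb{R}^n$. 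Since $\langle b_j,g\rangle\sim N(0,\|b_j\|_2^2)$ and $\|b_j\|_2\le\sqrt n$, we have $\gamma_n(S_j)\ge\Pr[|Z|\le t/\sqrt n]\ge 1-2e^{-t^2/(2n)}=1-2(n/2m)^{C^2/2}$. A naive union bound over the $S_j$ is too weak (it forces $C\gtrsim\sqrt{\log n}$ when $m=\Theta(n)$), but \v{S}id\'ak's correlation inequality for symmetric slabs gives $\gamma_n(K)\ge\prod_{j=1}^m\gamma_n(S_j)$, and since $m\ge n$ a short estimate yields $-\ln\gamma_n(K)\le 4m(n/2m)^{C^2/2}\le 4n\cdot 2^{-C^2/2}$, which is below $cn$ for any prescribed constant $c>0$ once $C=C(c)$ is a large enough absolute constant. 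Thus $\gamma_n(K)\ge 2^{-cn}$.

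\textbf{Phase 1: partial coloring.} Next I would invoke the partial-coloring principle (going back to Beck and Spencer, in the convex-geometric form of Gluskin and Giannopoulos): a symmetric convex body in $\mathbb{R}^n$ with $\gamma_n(\cdot)\ge 2^{-cn}$, for a small enough absolute constant $c>0$, contains a point with at least $n/2$ coordinates equal to $\pm1$. Applying it to $K$ (with $C=C(c)$ chosen accordingly) produces $x\in[-1,1]^n$ with $|\{i:|x_i|=1\}|\ge n/2$ and $\|\sum_i x_i a_i\|_\infty\le C\sqrt{n\log(2m/n)}$.

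\textbf{Phase 2: iteration.} Apply Phase 1 to pin at least $n/2$ coordinates to $\pm1$, then restrict $A$ to the $n_1\le n/2$ still-free columns and repeat; the partial colorings from successive rounds are supported on disjoint coordinate blocks, so after $O(\log n)$ rounds they add up to a full coloring $x\in\{\pm1\}^n$. Since $n_t\le n/2^t$ and $m\ge n\ge n_t$ throughout, the triangle inequality gives
\[
\Big\|\sum_i x_i a_i\Big\|_\infty\ \le\ \sum_{t\ge0}C\sqrt{n_t\log(2m/n_t)}\ \le\ C\sqrt n\sum_{t\ge0}2^{-t/2}\sqrt{\log(2m/n)+t}\ \lesssim\ \sqrt{n\log(2m/n)},
\]
using $\sqrt{a+b}\le\sqrt a+\sqrt b$ and the convergence of $\sum_t 2^{-t/2}$ and $\sum_t\sqrt t\,2^{-t/2}$.

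\textbf{Main obstacle.} The one non-routine ingredient is the partial-coloring principle of Phase 1. I would prove it by the entropy method: discretize each linear form $\langle b_j,\cdot\rangle$ into buckets of width $\asymp t$; bound the Shannon entropy of the bucket index of a uniformly random $x\in\{\pm1\}^n$ by a quantity decaying super-exponentially in $t/\sqrt n$, using that $\langle b_j,x\rangle$ is sub-Gaussian with variance proxy $\|b_j\|_2^2\le n$; sum over $j\in[m]$ to get total entropy below $n/10$; and then by pigeonhole find two colorings landing in the same joint bucket that differ in $\Omega(n)$ coordinates (via the binary-entropy bound on the sizes of Hamming balls) --- half their difference is the partial coloring. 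Getting the fraction of pinned coordinates to be exactly $1/2$ rather than some smaller absolute constant only affects the constant in the final bound, so it is inessential; the crux is the super-exponential entropy estimate (equivalently the \v{S}id\'ak-type product bound above), which is exactly what beats the trivial union/Chernoff bound.
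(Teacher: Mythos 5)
Your proof is correct and follows exactly the route this paper itself indicates for Spencer's setting: a Gaussian measure lower bound $\gamma_n(K)\ge 2^{-cn}$ via \v{S}id\'ak's lemma applied to the strips $|\langle b_j,x\rangle|\le t$, the Gluskin/Giannopoulos--style partial coloring principle (Theorem~\ref{thm:partial_coloring} here), and the geometric-series iteration of Corollary~\ref{cor:full_coloring}. The paper cites Spencer's theorem rather than proving it, but your argument is the standard one it sketches, and your estimates (the $4n\cdot 2^{-C^2/2}$ bound and the $\sum_t 2^{-t/2}\sqrt{\log(2m/n)+t}$ summation) check out.
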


In particular, when $m = n$, \Cref{thm:spencer} states that the discrepancy is at most $O(\sqrt{n})$, as opposed to the $O(\sqrt{n \log n})$ bound for a random coloring. 
Spencer's theorem is known to be tight up to constants for all $m \ge n$ \cite{c01book,m09book}.

\medskip 
\noindent \textbf{The Partial Coloring Method.}  
All known proofs of Spencer's theorem are essentially based on the {\em partial coloring} method, one of the most important and widely applied techniques in discrepancy theory.
The method states that to obtain the type of discrepancy bound in \Cref{thm:spencer}, it suffices to prove the same bound for a partial coloring $x \in [-1,1]^n$ with at least $\Omega(n)$ coordinates in $\{\pm 1\}$. 
This process is then iterated over the set of coordinates $\{i: |x_i| < 1\}$ to obtain a full coloring. 
For Spencer-type problems, the discrepancy of the full coloring is at most a constant factor off from the discrepancy of the partial coloring (see \Cref{cor:full_coloring}).

The partial coloring method was developed in the early 80s by Beck and refined by Spencer using the entropy method \cite{b81b,s85}. A convex geometry view of partial coloring was developed independently by Gluskin \cite{g89}.
While these original arguments used the pigeonhole principle and were non-algorithmic, a breakthrough result of Bansal \cite{b10}, followed by a rich line of work \cite{lm15,r17,lrr17,es18,rr20}, gave various algorithmic versions. 
These recent developments also led to new results in approximation algorithms and differential privacy \cite{r13,ntz13,bclk14,bn17}.

\medskip
\noindent \textbf{Matrix Spencer Setting.} 
A natural generalization of Spencer's setting to matrices is the following. Given matrices $A_1, \dots, A_n \in \mathbb{R}^{m \times m}$, each satisfying $\| A_i \|_\op \leq 1$, the goal is to find a coloring $x \in \{\pm 1 \}^n$ that minimizes $\| \sum_{i=1}^n x_i A_i \|_{\op}$. 
In particular, Spencer's setting corresponds to the case where all matrices $A_i$ are diagonal. 

In the matrix Spencer setting, the non-commutative Khintchine inequality of Lust-Piquard and Pisier \cite{lpp91,p03book} shows that a random coloring $x \in \{\pm 1\}^n$ has expected discrepancy $\mathbb{E} [\|\sum_{i=1}^n x_i A_i \|_\op] \lesssim \sqrt{n \log r}$, where each matrix $A_i$ has rank at most $r \leq m$. 
It is conjectured that the discrepancy bound in \Cref{thm:spencer} can be generalized as follows:

\begin{conjecture}[Matrix Spencer Conjecture \cite{m14,z12}] \label{conj:matrix_spencer}
Let $m \ge \sqrt{n}$. Given matrices $A_1, \dots, A_n \in \mathbb{R}^{m \times m}$ with $\| A_i \|_\op \leq 1$,  there exists $x \in \{\pm 1\}^n$ such that 
\begin{align*}
\Big\| \sum_{i=1}^n x_i A_i \Big\|_\op \lesssim \sqrt{n \cdot \max(1, \log (m/n))}. 
\end{align*}
\end{conjecture}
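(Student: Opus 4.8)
The final displayed statement is a conjecture, so what follows is an attack plan (and an account of where it gets stuck), not a route to a complete proof; the paper itself only resolves special cases of it.

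\textbf{Reduction to one partial‑coloring step.} The plan is to follow the partial coloring route sketched above. By the reduction promised in \Cref{cor:full_coloring}, it suffices to produce a single $x \in [-1,1]^n$ with a constant fraction $\Omega(n)$ of coordinates equal to $\pm 1$ and $\|\sum_i x_i A_i\|_\op \lesssim \sqrt{n \cdot \max(1,\log(m/n))}$; iterating this on the (at most constant‑fraction) set of fractional coordinates costs only a geometric‑series constant in the final discrepancy. So the whole problem reduces to one partial‑coloring estimate. For that I would use the convex‑geometry form of the entropy method (Gluskin \cite{g89}): if a symmetric convex body $K \subseteq \setR^n$ satisfies $\gamma_n(K) \ge e^{-\delta n}$ for a small absolute constant $\delta$, then $K$ contains a point of the cube with $\Omega(n)$ tight coordinates. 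Applying this to the rescaled discrepancy body $K = \{x : \|\sum_i x_i A_i\|_\op \le t\}$, everything comes down to the Gaussian small‑ball bound
\[
\Pr_{g \sim N(0,I_n)}\!\Big[\Big\|\sum_{i=1}^n g_i A_i\Big\|_\op \le t\Big] \;\ge\; e^{-\delta n},
\qquad t \;\asymp\; \sqrt{n \cdot \max\!\big(1,\log(m/n)\big)}.
\]

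\textbf{The covering / mirror‑descent viewpoint.} By the framework of this paper, the displayed estimate is equivalent to covering the polar body $K^\circ$ — which is the image of the Schatten‑$1$ ball under the map $Y \mapsto (\langle A_i, Y\rangle)_{i=1}^n$, a matrix analogue of the $\ell_1$‑ball — by $e^{O(n)}$ translates of the cube $\tfrac1n B^n_\infty$. I would build this cover greedily by running mirror descent with the von Neumann (quantum) entropy mirror map over the spectraplex, which is the natural potential for controlling operator norms of matrix Gaussian series: each step should ``peel off'' a piece of $K^\circ$ that fits inside one cube translate, while the regret/potential bound caps the total number of translates at $e^{O(n)}$. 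If the number of translates is indeed $e^{O(n)}$ at radius $t \asymp \sqrt{n\log(m/n)}$, the conjecture follows.

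\textbf{The real obstacle.} The difficulty — and the reason this is only a conjecture — is the factor $\log(m/n)$ in place of $\log m$. Non‑commutative Khintchine \cite{lpp91} gives $\|\sum_i g_i A_i\|_\op \lesssim \sqrt{n\log m}$ with high probability, hence $\gamma_n(K) \ge \tfrac12$ already at radius $t \asymp \sqrt{n\log m}$; shaving the logarithm down to $\log(m/n)$ forces one to exploit that there are only $n$ summands, e.g.\ via chaining/union bound over an $e^{O(n)}$‑size net of test directions $U$ on the spectraplex together with \emph{sub‑exponential} (not merely sub‑Gaussian) tail control of $\langle U, \sum_i g_i A_i\rangle$. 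Making the net size and the per‑direction deviation balance at the target radius is precisely where every known approach stalls in general — it is exactly the $O(\log(m/n))$ quantum‑relative‑entropy net on the spectraplex to which the paper reduces the full conjecture, and which is currently known only for structured instances such as low‑rank or block‑diagonal $A_i$. I therefore expect the construction of that net (equivalently, the Gaussian small‑ball bound above) to be the sole real bottleneck; once it exists, the mirror‑descent machinery should deliver the partial coloring, and hence the theorem, without further difficulty.
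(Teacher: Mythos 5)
You are right that this statement is an open conjecture which the paper does not prove, and your outline — partial coloring via a Gaussian measure lower bound on the discrepancy body, equivalently covering the polar with $2^{O(n)}$ translates of $\frac{1}{n}B_\infty^n$ via mirror descent on the spectraplex, with the bottleneck being a $O(\log(m/n))$ quantum relative entropy net — matches exactly the paper's framework and its reduction in \Cref{cor:better_entropy_net}. Nothing to correct; the identified obstacle is precisely where the paper also stops.
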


In particular, when $\sqrt{n} \leq m \leq n$, the conjectured discrepancy bound is $O(\sqrt{n})$. 
Despite significant effort, \Cref{conj:matrix_spencer} has remained largely open, with partial progress for
block diagonal matrices \cite{lrr17}
%with block size\footnote{\cite{lrr17} requires $q \leq \sqrt{n}$, which is implied by $q \leq n/m$ in the interesting regime of $\sqrt{n} \lesssim m \lesssim n$.} at most $n/m$ \cite{lrr17}, 
and rank-$1$ matrices \cite{mss15,kls20}. A solution to \Cref{conj:matrix_spencer} will thus likely lead to new techniques and insights in discrepancy theory beyond what is currently known for vector discrepancy.

We note that in Spencer's setting (\Cref{thm:spencer}) we may assume without loss of generality that $m \geq n$ by the iterated rounding technique \cite{bf81,b08,lrs11}.
For matrix Spencer, however, the interesting regime starts at $m \geq \sqrt{n}$ (iterated rounding only works when $m^2 < n$). \Cref{conj:matrix_spencer} remains open even when $m = n^{1/2+ \varepsilon}$ for any constant $\varepsilon > 0$.

\medskip
\noindent \textbf{Matrix Discrepancy for Schatten Norms.} More generally, let\footnote{We make the assumption that $p \leq q$ to avoid a polynomial dependence on $m$ in the discrepancy bound. If $q < p$, then even a single matrix (i.e. $n=1$) can have discrepancy $m^{1/q - 1/p}$.} $2 \leq p \leq q \leq \infty$, we consider the following matrix discrepancy setting for Schatten norms. Given matrices $A_1, \dots, A_n \in \mathbb{R}^{m \times m}$, each satisfying $\| A_i \|_{S_p} \leq 1$, where $\| \cdot\|_{S_p}$ denotes the Schatten-$p$ norm. 
The goal is to find a coloring $x \in \{\pm 1\}^n$ to minimize $\| \sum_{i=1}^n x_i A_i \|_{S_q}$, the $S_p \rightarrow S_q$ discrepancy. 
In particular, the matrix Spencer setting corresponds to the case where $p = q = \infty$. 

The diagonal case of $S_p \rightarrow S_q$ discrepancy, i.e. $\ell_p \rightarrow \ell_q$ discrepancy for vectors, is well studied (see \cite{dntt18,rr20} and the references therein). In fact, the well-known Koml\'os conjecture asserts that the $\ell_2 \rightarrow \ell_\infty$ discrepancy can be upper bounded by a universal constant. 
For general $\ell_p \rightarrow \ell_q$ discrepancy, Reis and Rothvoss \cite{rr20} proves an optimal partial coloring bound of $O(\sqrt{\min(p, \log(m/n))} \cdot n^{1/2-1/p+1/q})$, assuming $m \geq n$ and $2 \leq p \leq q \leq \infty$. 
It is a natural question whether these bounds generalize to $S_p \rightarrow S_q$ discrepancy.

\medskip 
\noindent \textbf{The Challenge in Using Partial Coloring Method for Matrix Discrepancy.} 
Central to the partial coloring method is to show that the discrepancy body $D:=\{x \in \mathbb{R}^n: \| \sum_{i=1}^n x_i A_i \| \leq t\}$, i.e. the set of fractional colorings with discrepancy at most $t$ under norm $\|\cdot\|$, is ``large'' in some sense. 
A natural notion of largeness, due to Gluskin \cite{g89}, is that the body $D$ has Gaussian measure at least $2^{-O(n)}$. 
This measure of largeness has been adopted (sometimes implicitly) in essentially all work on partial coloring \cite{b10,lm15,r17,lrr17,es18,rr20}.

For the setting in \Cref{thm:spencer}, the discrepancy body $D$ is a polytope defined by the intersection of strips of the form $|\langle r_i, x \rangle| \leq t$, where $r_i \in \mathbb{R}^n$ are the rows of the $m \times n$ matix whose columns are $a_1, \dots, a_n$. 
Therefore, {\v{S}}id{\'a}k's lemma \cite{s67} can be readily used to give a Gaussian measure lower bound of the form $\gamma_n(D) \ge \prod_{i=1}^m \gamma_n (\{x \in \setR^n: |\langle r_i, x \rangle| \leq t\})$.

In the setting of matrix discrepancy, however, the discrepancy body $D$ has an infinite number of facets. This prevents the use of Gaussian correlation inequalities to lower bound $\gamma_n(D)$. To get around this barrier and use the partial coloring method for matrix discrepancy, one needs a different approach for proving Gaussian measure lower bounds.

\subsection{Our Results}

We lower bound the Gaussian measure of the discrepancy body $D$ via covering numbers for its polar $D^\circ$ with respect to the $\ell_\infty$-ball (see \Cref{sec:partial_coloring_via_covering}). 
We then prove the desired covering number estimates using mirror descent, the powerful convex optimization primitive of Nemirovski and Yudin \cite{ny83} (see \Cref{sec:mirror_descent,sec:covering_mirror,sec:entropy_net}). 
Our method yields the following applications.

\medskip
\noindent \textbf{Matrix Spencer for Low-Rank Matrices}. Our first result is the following improvement over the $O(\sqrt{n \log r})$ bound for random coloring in the matrix Spencer setting. 

\begin{restatable}[Matrix Spencer for Low-Rank Matrices]{theorem}{matrixspencer}\label{thm:weak_matrix_spencer}
Let $m \geq \sqrt{n}$. Given symmetric matrices $A_1, \dots, A_n \in \mathbb{R}^{m \times m}$ with $\|A_i\|_\op \leq 1$ and $\rank(A_i) \leq r$ for all $i \in [n]$, one can efficiently find a coloring $x \in \{\pm 1 \}^n$ such that 
\begin{align*}
\Big\|\sum_{i=1}^n x_i A_i \Big\|_\op \lesssim \sqrt{n \cdot \max(1,\log (r \cdot \min(1, m/n))) } .
\end{align*}
\end{restatable}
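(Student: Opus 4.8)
\textbf{Proof strategy for \Cref{thm:weak_matrix_spencer}.}
The plan is to follow the partial coloring framework outlined above: produce a partial coloring with $\Omega(n)$ tight coordinates and discrepancy $O(\sqrt{n \cdot \max(1,\log(r \cdot \min(1,m/n)))})$, then iterate via \Cref{cor:full_coloring} to lose only a constant factor in the full coloring. By the partial-coloring-via-covering machinery of \Cref{sec:partial_coloring_via_covering}, it suffices to show that the polar body $D^\circ$ of the discrepancy body $D = \{x : \|\sum_i x_i A_i\|_\op \le t\}$, for an appropriate threshold $t = \Theta(\sqrt{n \cdot \max(1, \log(rk))})$ with $k := \min(1, m/n)$, can be covered by $2^{O(n)}$ translates of the scaled cube $\frac{1}{n} B^n_\infty$. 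Equivalently, writing the polar as (the symmetrization of) $\{(\langle A_1, Y\rangle, \dots, \langle A_n, Y\rangle) : Y \succeq 0, \tr(Y) \le 1\}$ scaled by $t$, I need a net for the linear images of the spectraplex under the maps $Y \mapsto \langle A_i, Y\rangle$, measured in the $\ell_\infty^n/n$ metric.

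The main step is to construct this cover via mirror descent with the von Neumann (quantum) entropy as the mirror map, exactly as developed in \Cref{sec:covering_mirror,sec:entropy_net}. The key point is that mirror descent with the entropy regularizer on the spectraplex, run for $T = O(n)$ steps with gradients of the form $A_i$ (each of which has $\|A_i\|_\op \le 1$ and rank $\le r$), produces iterates whose averaged trajectory tracks any target point $Y$ in the spectraplex to within error $O(\sqrt{T^{-1} \cdot D_h})$, where $D_h = O(\log(\text{effective dimension}))$ is the entropy radius. The rank bound $\rank(A_i) \le r$ together with the dimension bound $m$ means the relevant iterates effectively live in a space of dimension $\min(rm, \dots)$-ish, and a careful accounting — which is where the $\log(r \cdot \min(1,m/n))$ rather than $\log r$ or $\log m$ comes from — shows the entropy radius can be taken as $O(\log(rk))$ when $m \le n$ and $O(\log r)$ in general, with the $\max(1, \cdot)$ absorbing the regime where no logarithmic factor is needed. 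The number of distinct gradient sequences of length $T = O(n)$ over the $n$ matrices is $n^{O(n)} = 2^{O(n \log n)}$, which is too many; the standard fix (as in the Reis--Rothvoss-style arguments) is that the net only needs to record, at each step, an approximate subgradient from a fixed finite pool, or to argue that the covering number is controlled directly by the regret bound rather than by enumerating trajectories — this reduction to a $2^{O(n)}$ (not $2^{O(n\log n)}$) cover is the delicate combinatorial point.

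The hardest part, and the place I expect to spend the most care, is precisely this quantitative control of the entropy radius and the resulting covering number: showing that $\log$ of the covering number is $O(n)$ with the right constant, and that the per-coordinate error in the $\frac1n B^n_\infty$ metric is $O(t/n) = O(\sqrt{\max(1,\log(rk))/n})$, so that after scaling the cube radius is small enough for the Gaussian measure lower bound $\gamma_n(D) \ge 2^{-O(n)}$ to kick in. In particular I need the dependence to be on $\log(rm/n)$ when $m \ge n$ (via an $m/n$-type dimension reduction in the spectraplex) and on $\log r$ when $m \le n$, matching the stated $\log(r \cdot \min(1,m/n))$; getting both regimes from a single mirror-descent analysis, rather than two separate arguments, requires choosing the entropy regularizer on the right "sketched" or truncated spectraplex. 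Once the cover is in hand, the measure lower bound and hence the partial coloring follow from the results of \Cref{sec:partial_coloring_via_covering}, and the efficiency claim follows since mirror descent and the rounding step are polynomial-time; iterating gives the full coloring with the claimed bound, and when $rm \le n$ this specializes to $O(\sqrt{n})$, proving matrix Spencer in that regime.
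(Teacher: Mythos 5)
Your high-level framing (partial coloring via a $2^{O(n)}$ cover of the polar by translates of $\frac{t}{n}B_\infty^n$, obtained from mirror descent iterates) matches the paper, but the two steps you flag as delicate are exactly where the proposal does not go through, and the paper resolves them by a different mechanism than the one you sketch.

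First, the source of the $\log(r\cdot\min(1,m/n))$ factor. You propose running mirror descent with the von Neumann entropy on the spectraplex and claim the rank bound on the $A_i$ reduces the ``effective dimension,'' so that the entropy radius is $O(\log(rk))$. This does not work: the dual variable $U$ ranges over the full Schatten-$1$ ball (equivalently the spectraplex after symmetrization) regardless of the ranks of the $A_i$, so the quantum relative entropy to any $2^{O(n)}$-size net of starting points is governed by $m$ and $n$, not by $r$ (the paper's \Cref{thm:entropy_net} gives error $\log(hm/n)$, i.e.\ $\log(m^2/n)$ for unstructured matrices). The paper instead proves \Cref{thm:weak_matrix_spencer} as the $p=q=\infty$ case of \Cref{thm:lowrank_matrix_discrepancy}, using the \emph{Schatten Norm Setup}: the mirror map is $\Phi(X)=\frac{1}{2(p^*-1)}\|X\|_{S_{p^*}}^2$ on all of $\setR^{m\times m}$ with the single starting point $T_0=\{0\}$, so the Bregman radius is $\frac{p-1}{2}$ (\Cref{lem:spsq_cover}); the factor $k^{1/p-1/q}$ comes from the covering estimate between Schatten balls (\Cref{thm:entropy_number_Schatten}); and the rank enters only through the norm interpolation $\|A_i\|_{S_{p_0}}\le r^{1/p_0-1/p}\|A_i\|_{S_p}$ with $p_0=\max(2,\log(2rk))$, which converts the $\sqrt{(p-1)n}$ regret bound into $\sqrt{n\max(1,\log(rk))}$. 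Without some substitute for this interpolation step, your argument has no way to see $r$ at all.

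Second, the $2^{O(n)}$ count of net points. You correctly observe that enumerating gradient sequences gives $2^{O(n\log n)}$, but the fixes you suggest (a finite pool of approximate subgradients, or bounding the covering number ``directly by the regret bound'') do not repair this: the pool $\{\pm A_i\}$ already has only $2n$ elements and ordered sequences of length $n$ from it still number $(2n)^n$. The paper's resolution (\Cref{lem:size_covering}) is that for both mirror maps the iterate $X_t$ is a function of the starting point and the \emph{sum} of the subgradients only, not their order, so the number of distinct iterates is at most $(n+1)\binom{3n}{n}=2^{O(n)}$ per starting point. This order-independence is the essential combinatorial idea, and it is absent from your proposal.
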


When the input matrices have rank $r \lesssim n/m$, the discrepancy bound in \Cref{thm:weak_matrix_spencer} is $O(\sqrt{n})$ and this proves \Cref{conj:matrix_spencer} for low rank matrices in the regime where $m \le n$.

\medskip
\noindent \textbf{Matrix Spencer for Block Diagonal Matrices.} Our second application is the following improved matrix Spencer bound for block diagonal matrices.

\begin{restatable}[Matrix Spencer for Block Diagonal Matrices]{theorem}{blockdiagonal} \label{thm:block_diagonal_matrix_spencer}
Let $m \ge \sqrt{n}$ and $h \leq m$. Given block diagonal symmetric matrices $A_1, \dots, A_n \in \mathbb{R}^{m \times m}$ with $\|A_i\|_\op \leq 1$ and block size $h \times h$, one can efficiently find a coloring $x \in \{\pm 1\}^n$ with 
\begin{align*}
\Big \|\sum_{i=1}^n x_i A_i \Big\|_\op \lesssim \sqrt{n \cdot \max(1, \log ( hm/n))}.
\end{align*}
\end{restatable}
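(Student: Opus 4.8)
The plan is to instantiate the framework developed in this paper. By the partial coloring method (\Cref{cor:full_coloring}) it suffices to produce a single partial coloring with the stated discrepancy, and by the reduction in \Cref{sec:partial_coloring_via_covering} this follows from a Gaussian measure lower bound $\gamma_n(\{x:\|\sum_i x_iA_i\|_\op\le t\})\ge 2^{-O(n)}$ with $t\asymp\sqrt{n\cdot\max(1,\log(hm/n))}$, which is equivalent to covering the polar $D^\circ$ of $D:=\{x:\|\sum_i x_iA_i\|_\op\le 1\}$ by $2^{O(n)}$ translates of $\tfrac{t}{n}B^n_\infty$. So the entire task is to construct such a cover.

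The first step is to unwind $D^\circ$ using the block structure. Writing $A_i=\diag(A_i^{(1)},\dots,A_i^{(m/h)})$, one has $\|M\|_\op\le 1$ iff $\langle M,Y\rangle\le 1$ for every $Y$ in the spectraplex $\{Y\succeq 0:\tr Y=1\}$; since a block-diagonal $M$ only sees the block-diagonal part of $Y$, and block-diagonal density matrices are exactly $\diag(p_1Y_1,\dots,p_{m/h}Y_{m/h})$ with $p\in\Delta_{m/h}$ and each $Y_j$ a density matrix of size $h\times h$, we obtain
\[
D^\circ \;=\; \conv\Bigl\{\pm\bigl(\langle A_i^{(j)},Y\rangle\bigr)_{i\in[n]}\;:\;j\in[m/h],\ Y\succeq 0,\ \tr Y=1\Bigr\}\;=\;\conv\Bigl(\bigcup_{j=1}^{m/h}P_j\Bigr),
\]
where $P_j$ is the polar of the discrepancy body of the $j$-th blocks, a rank-$\le h$ instance in dimension $h$ (so this result is morally a ``sum over blocks'' strengthening of \Cref{thm:weak_matrix_spencer}). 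Every generator lies in $B^n_\infty$ because $|\langle A_i^{(j)},Y\rangle|\le\|A_i^{(j)}\|_\op\|Y\|_{S_1}\le 1$.

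The second step builds the cover by mirror descent, as in \Cref{sec:covering_mirror,sec:entropy_net}. Parametrize $D^\circ$ by $\theta=(p,(Y_j)_j)$ ranging over $\Delta_{m/h}\times(\pm\,\text{density matrices on }\mathbb{R}^h)^{m/h}$ via $c(\theta)=\sum_j p_j\bigl(\langle A_i^{(j)},Y_j\rangle\bigr)_i$, and equip this product domain with the mirror map given by the Shannon entropy of the block distribution $p$ plus the (conditional) von Neumann quantum entropy of each $Y_j$. Running the mirror-descent covering with the step size tuned so that the cells have $\ell_\infty$-side $t/n$, the net is indexed by quantized trajectories and has size $2^{O(n)}$ provided $t^2\gtrsim n\cdot R$, where $R$ is the effective range of this mirror map along the trajectory. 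The quantum-entropy part over one $h\times h$ block contributes range $O(\log h)$; the Shannon part over $m/h$ blocks has range $O(\log(m/h))$ in the worst case, but since we only need an $\ell_\infty$-net at the coarse scale $t/n$ and there are only $n$ matrices, the hierarchical mirror-descent analysis --- reweighting the $j$-th block's per-step contribution by $p_j$ --- caps the effective block-selection cost at $O(\log(m/n))$. Hence $R=O(\log h+\log(m/n))=O(\log(hm/n))$, which is $O(1)$ once $hm\le n$, giving $t\asymp\sqrt{n\cdot\max(1,\log(hm/n))}$ as desired, and the resulting coloring is found efficiently because the mirror-descent cover is constructive.

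The crux --- the only place real work is needed beyond instantiating the framework --- is this last point: squeezing the effective range from the naive $\log h+\log(m/h)=\log m$ down to $\log(hm/n)$. Concretely, one must combine the per-block quantum-entropy mirror map with the block-selection mirror map so that the precision $t/n$ together with the bound $n$ on the number of matrices compress the effective number of blocks from $m/h$ to $\sim m/n$; this is exactly the step where the mirror-descent covering sidesteps the dimension counting that forced the hypothesis $h\le\sqrt n$ in \cite{lrr17}. A secondary, more routine check is that the mirror-descent covering construction of \Cref{sec:covering_mirror} applies verbatim on the product domain and that quantizing the density-matrix iterates costs only $O(1)$ bits per coordinate per step.
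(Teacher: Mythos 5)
You have the framework right (partial coloring $\Rightarrow$ Gaussian measure $\Rightarrow$ covering of the polar by $\tfrac{t}{n}B^n_\infty$ $\Rightarrow$ mirror descent), and you have correctly located the crux, but you have not resolved it: the claim that a ``hierarchical mirror-descent analysis --- reweighting the $j$-th block's per-step contribution by $p_j$ --- caps the effective block-selection cost at $O(\log(m/n))$'' is an assertion with no mechanism behind it, and as stated it is not how mirror descent works. The convergence guarantee (\Cref{thm:mirror_descent}) is controlled by $D_\Phi(U,U_0)$, the Bregman divergence from the \emph{starting point} to the target; the gradient steps are the full matrices $\pm A_i$ and admit no per-block reweighting. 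With your product mirror map (Shannon entropy on $p$ plus conditional von Neumann entropy on the $Y_j$'s), $D_\Phi(U,U_0)$ from any single starting point such as $I_m/m$ is the quantum relative entropy $S(U\|I_m/m)=\log m - H(U)$, which is $\log m$ for $U$ concentrated on one eigenvector --- the precision $t/n$ and the count of $n$ matrices do nothing to shrink this. The only way to get $D_\Phi^{\max}=O(\log(hm/n))$ is to supply a set $T_0$ of $2^{O(n)}$ \emph{initializations} forming a relative entropy net of $\mathcal{S}_m^h$ (\Cref{defn:relative_entropy_net}), and constructing that net is the actual technical content of the proof, which your proposal omits entirely.

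Concretely, the paper's construction (\Cref{thm:entropy_net}) has three ingredients you would need to reproduce: (i) quantize the trace profile $(\tr X_1,\dots,\tr X_\ell)$ across the $\ell=m/h$ blocks to multiples of $1/N$ with $N\asymp n/\max(h,\log(\ell/n))$, and verify by a two-case counting argument ($\ell\le 2n$ versus $\ell\ge 2n$) that the number of profiles is $2^{O(n)}$ --- this is where the compression from $m/h$ blocks to an effective $\log(\ell/n)$ cost happens, and it is a discretization of the initialization set, not a property of the descent; (ii) within each profile, cover each block's scaled spectraplex in operator norm to accuracy $\varepsilon/2$ using \Cref{thm:entropy_number_Schatten} with a per-block budget $n_i=z_ih$ summing to $Nh\le 2n$; and (iii) convert operator-norm proximity into relative-entropy proximity via \Cref{lem:entropy_from_op}, i.e.\ replace each center $Y$ by $\tfrac12(Y+\tfrac{I_m}{m})$ so that $S(X\|Y')\le\log(2m\varepsilon)$ --- this mixing step is essential because operator-norm closeness alone gives no relative-entropy bound. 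There is also a smaller gap at the end: after covering $\mathcal{A}(\mathcal{S}_m^h)$ you still need to pass to the full PSD part of the $S_1$-ball (the paper does this by discretizing the trace to multiples of $1/n^2$) and then symmetrize via \Cref{lem:symmetrization_covering}; your convex-hull decomposition of $D^\circ$ gestures at this but does not carry it out.
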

 
In particular, \Cref{thm:block_diagonal_matrix_spencer} proves \Cref{conj:matrix_spencer} whenever $h \lesssim n/m$. 
This bound was previously proved in \cite{lrr17} under the assumption $h \leq \sqrt{n}$, which we remove here.

We also obtain the following reduction of \Cref{conj:matrix_spencer} to the construction of a better quantum relative entropy net for the spectraplex $\mathcal{S}_m := \{X \in \mathbb{R}^{m \times m}: X \succeq 0, \tr(X) = 1\}$. 

\begin{restatable}[Better Entropy Net Implies Matrix Spencer]{corollary}{betterentropynet} \label{cor:better_entropy_net}
Let $m \geq \sqrt{n}$. 
If we can find $T \subseteq \mathcal{S}_m$ with $|T| \leq 2^{O(n)}$ such that for each $X \in \mathcal{S}_m$ there exists $Y \in T$ with $S(X \| Y) \lesssim \max(1,\log (m/n))$, where $S(X \| Y)$ is the quantum relative entropy between $X$ and $Y$, then the matrix Spencer conjecture is true. 
\end{restatable}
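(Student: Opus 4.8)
The plan is to instantiate the paper's partial‑coloring‑via‑covering framework with the von Neumann entropy as the mirror map, feeding the hypothesized set $T$ to the mirror‑descent covering construction as the discretization of the spectraplex it requires. First, by iterating partial colorings via \Cref{cor:full_coloring} it suffices to produce one partial coloring $x \in [-1,1]^n$ with $|\{i: |x_i| = 1\}| \ge n/2$ and $\|\sum_{i=1}^n x_i A_i\|_\op \lesssim t$, where $t := \sqrt{n \cdot \max(1, \log(m/n))}$; and by the partial coloring method of \Cref{sec:partial_coloring_via_covering} together with the equivalence proved there, this follows once $D_t^\circ$ can be covered by $2^{O(n)}$ translates of $\tfrac1n B_\infty^n$, where $D_t := \{x \in \setR^n : \|\sum_i x_i A_i\|_\op \le t\}$. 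Using $\|M\|_\op = \max\{\langle M, Y\rangle : Y = Y^\top,\ \|Y\|_{S_1} \le 1\}$, the polar is
\[
D_t^\circ \;=\; \tfrac{1}{t}\,\big\{(\langle A_1, Y\rangle, \dots, \langle A_n, Y\rangle)\ :\ Y = Y^\top,\ \|Y\|_{S_1} \le 1\big\},
\]
and, splitting $Y$ into positive and negative parts and rescaling, it is enough (up to a factor $2$ in the covering number) to cover the linear image $\Phi(\mathcal{S}_m) := \{(\langle A_i, Y\rangle)_{i\in[n]} : Y \in \mathcal{S}_m\}$, which lies in $B_\infty^n$ since $\|A_i\|_\op \le 1$, by $2^{O(n)}$ translates of $\tfrac tn B_\infty^n$ (note $\tfrac tn \asymp \sqrt{\max(1,\log(m/n))/n}$).

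For this I would run the mirror‑descent covering construction of \Cref{sec:covering_mirror,sec:entropy_net} with the mirror map equal to the von Neumann entropy $\Psi(Y) = \tr(Y\log Y)$ on $\mathcal{S}_m$, whose Bregman divergence is exactly the quantum relative entropy $S(\cdot\,\|\,\cdot)$. In that construction the discretization of the mirror domain is played by $T$: its relative‑entropy precision $s := O(\max(1,\log(m/n)))$ enters the regret bound as the divergence‑diameter term, the bound $\|A_i\|_\op \le 1$ controls the dual‑norm term (concretely $|\langle A_i, X - Y\rangle| \le \|X - Y\|_{S_1} \le \sqrt{2\,S(X\,\|\,Y)}$ by quantum Pinsker), and the step size and number of steps are chosen, exactly as in the framework, so that the running average of each mirror‑descent trajectory lands within $\tfrac tn B_\infty^n$ of its target $\Phi(Y^\star)$. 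Since every step of every trajectory comes from $T$, the resulting family of running averages has size $2^{O(n)}$ and covers $\Phi(\mathcal{S}_m)$; threading $s = O(\log(m/n))$ through the tuning gives discrepancy $\lesssim \sqrt{n\,s} = \sqrt{n\max(1,\log(m/n))}$, which is \Cref{conj:matrix_spencer}.

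The quantitative heart — and the main obstacle — is entirely in verifying that $T$ slots into the mirror‑descent covering bound in the right regime: one must check that $|T| \le 2^{O(n)}$ and precision $s = O(\log(m/n))$ simultaneously force (i) an $\ell_\infty$ cover of $D_t^\circ$ at scale exactly $\tfrac1n$, (ii) a total trajectory count of $2^{O(n)}$ rather than $|T|^{\Theta(n)}$, and (iii) only a factor $\|A_i\|_\op \le 1$ lost when passing from matrix inner products to $\ell_\infty$ on $\setR^n$. This is precisely the step where \cite{lrr17} and the proof of \Cref{thm:block_diagonal_matrix_spencer} used a coarse entropy net for the \emph{block} spectraplex $\mathcal{S}_h$, paying an extra $\log h$; replacing it by a precision‑$O(\log(m/n))$ net for the full $\mathcal{S}_m$ removes the block dependence, and the $t \asymp \sqrt{ns}$ bookkeeping above confirms these are the correct parameters.
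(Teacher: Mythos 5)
Your proposal is correct and follows essentially the same route as the paper: the corollary is proved by rerunning the proof of \Cref{thm:block_diagonal_matrix_spencer} with the hypothesized net $T$ substituted for the block-diagonal entropy net of \Cref{thm:entropy_net} as the set of mirror-descent starting points, then invoking \Cref{lem:size_covering}, the symmetrization/trace-rescaling step, \Cref{lem:dual_cov}, and \Cref{cor:full_coloring}. Two small phrasing caveats: the $2^{O(n)}$ trajectory count in \Cref{lem:size_covering} relies on each \emph{individual iterate} depending only on the starting point and the unordered sum of subgradients, so the cover must consist of the best iterates rather than ``running averages'' (which depend on the order of the gradients and would spoil the count); and reducing from $\|Y\|_{S_1}\le 1$ to $Y\in\mathcal{S}_m$ costs not just a factor $2$ from the positive/negative split but also the discretization of $\tr(Y_\pm)\in[0,1]$ (the paper's union over $j/n^2$ scalings), which is still only a $\mathrm{poly}(n)\cdot 2^{O(n)}$ overhead.
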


In particular, in the proof of \Cref{thm:block_diagonal_matrix_spencer}, we construct a $O(\max(1,\log (hm/n)))$-relative entropy net for the set of block diagonal matrices on $\mathcal{S}_m$ with block size $h \times h$ (see \Cref{sec:entropy_net}). 
Our construction of such relative entropy nets might be of independent interest.
%In particular, such a relative entropy net exists for the simplex $\Delta_m$, where quantum relative entropy is replaced by classical relative entropy, i.e. the Kullback-Leibler divergence. This immediately gives an alternative proof of \Cref{thm:spencer} (see \Cref{sec:spencer_mirror_descent}). \victor{mention block diagonal entropy nets instead}

\medskip
\noindent \textbf{Matrix Discrepancy for Schatten Norms.} \Cref{thm:weak_matrix_spencer} is a special case of the following general matrix discrepancy bound for Schatten norms. 

\begin{restatable}[Matrix Discrepancy for Schatten Norms]{theorem}{matrixdisc}\label{thm:lowrank_matrix_discrepancy}
Let $m \geq \sqrt{n}$ and $2 \le p \le q \le \infty$. Given symmetric matrices $A_1, \dots, A_n \in \mathbb{R}^{m \times m}$ with $\|A_i\|_{S_p} \leq 1$ and $\mathrm{rank}(A_i) \le r$ for all $i \in [n]$, one can efficiently find $x \in [-1,1]^n$ so that $|\{i: |x_i| = 1\}| \ge n/2$ and 
\begin{align*}
\Big\|\sum_{i=1}^n x_i A_i \Big\|_{S_q} \lesssim \sqrt{n \cdot \min(p, \max(1,\log (rk)) )} \cdot k^{1/p-1/q},
\end{align*}
where we denote $k := \min(1, m/n)$. Moreover, we can find a full coloring $x \in \{\pm 1\}^n$ at the expense of a factor of $(1/2 + 1/q - 1/p)^{-1}$. 
\end{restatable}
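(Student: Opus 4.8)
The plan is to instantiate the partial-coloring-via-covering framework outlined in the paper. First I would set up the discrepancy body $D = \{x \in \setR^n : \|\sum_i x_i A_i\|_{S_q} \le t\}$ for a threshold $t$ to be chosen, and recall (from the section on partial coloring via covering) that a Gaussian measure bound $\gamma_n(D) \ge 2^{-\Theta(n)}$ follows from covering the polar $D^\circ$ by $2^{O(n)}$ translates of $\frac{1}{n}B_\infty^n$. The polar $D^\circ$ is (a scaling of) the absolute convex hull of the ``test vectors'' $v_Y := (\langle A_1, Y\rangle, \dots, \langle A_n, Y\rangle)/t$ as $Y$ ranges over the unit ball of the dual norm $S_{q'}$, $q' = q/(q-1)$, intersected with the dual constraint from the rank structure. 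So the task reduces to a covering-number estimate: cover $\{v_Y : \|Y\|_{S_{q'}} \le 1\}$ in $\ell_\infty^n$ at scale $\approx 1/n$ using few balls.

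The core step is to produce this cover by mirror descent, exactly as in the paper's covering-via-mirror-descent section. I would run mirror descent on the spectraplex (or on the $S_{q'}$ ball) with the quantum-entropy-type mirror map whose strong convexity parameter with respect to $\|\cdot\|_{S_{p'}}$ governs the regret; the number of iterations needed to reach accuracy $1/n$ in the $\ell_\infty$ metric on the coordinates $\langle A_i, \cdot\rangle$ (each $A_i$ having $\|A_i\|_{S_p}\le 1$ and rank $\le r$) is $O(\eta^2 n)$ where $\eta^2$ is the regret rate, and the set of mirror-descent iterates (quantized appropriately) forms the desired net of size $2^{O(n)}$. The two regimes in $\min(p, \max(1,\log(rk)))$ come from two choices of mirror map: the matrix $p'$-norm-squared potential gives strong-convexity constant $\sim 1/(p-1) \sim 1/p$, yielding the $\sqrt{np}$ branch, while the (truncated) quantum relative entropy potential on an $rk$-dimensional effective space gives the $\sqrt{n\log(rk)}$ branch — here the rank bound $\le r$ together with $m \ge \sqrt n$ (so that only $k = \min(1,m/n)$ matters as the effective ``aspect ratio'') limits the entropy to $O(\log(rk))$. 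The factor $k^{1/p-1/q}$ is the norm-interpolation loss from passing between $S_p$ and $S_q$ on matrices of rank $O(rk)$, via $\|M\|_{S_q} \le (\rank M)^{1/q-1/p}\|M\|_{S_p}$ applied with the effective rank.

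Once the covering bound gives $\gamma_n(D') \ge 2^{-\Theta(n)}$ for $D'$ a $\Theta(1)$-scaling of $D$ with the right $t = \Theta(\sqrt{n \cdot \min(p,\log(rk))}\cdot k^{1/p-1/q})$, I would invoke the standard partial-coloring consequence of a Gaussian lower bound on a symmetric convex body inside the cube (Gluskin/Giannopoulos-style, made algorithmic as cited): there is $x \in [-1,1]^n$ with at least $n/2$ coordinates tight and $\|\sum_i x_i A_i\|_{S_q} \lesssim t$. The ``Moreover'' clause — boosting to a full coloring $x\in\{\pm1\}^n$ at the cost of $(1/2 + 1/q - 1/p)^{-1}$ — follows by the iteration described around Corollary 1.2: recursively partial-color the untight coordinates, and sum the geometric-type series of discrepancy contributions, where the exponent $1/2 - 1/p + 1/q$ in the per-round bound $n^{1/2-1/p+1/q}$-type scaling (coordinates halve each round, and the $S_p\to S_q$ discrepancy scales with that power of the number of live variables) makes the series converge with sum $(1/2+1/q-1/p)^{-1}$ times the first term.

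The main obstacle I expect is the covering-number estimate itself: bounding the mirror-descent regret in the right mixed matrix norm and, crucially, controlling the quantization of the iterates so the net has size $2^{O(n)}$ rather than something with an extra $m$- or $\log$-factor in the exponent — in particular, arguing that only an $rk$-dimensional ``active'' part of the spectrum matters (using $\rank(A_i)\le r$ and $m\ge\sqrt n$) so that the entropy, and hence the iteration count, stays $O(n\log(rk))$ rather than $O(n\log m)$. A secondary subtlety is handling $q = \infty$ (where the dual is $S_1$, not strongly convex) and $p$ large uniformly — here one takes $q' \to 1$ and must use the $\log(rk)$ branch with the relative-entropy potential rather than the norm-squared potential, and verify the strong convexity and diameter bounds degrade only by the claimed amounts.
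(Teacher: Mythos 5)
Your high-level framework (polar covering by translates of $\frac{1}{n}B_\infty^n$, a mirror-descent net of size $2^{O(n)}$, the partial-coloring theorem, and the geometric-series iteration giving the $(1/2+1/q-1/p)^{-1}$ factor) matches the paper. But both quantitative ingredients that actually produce the bound are handled differently in the paper, and your versions have gaps.

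First, the $\log(rk)$ branch. You propose a ``truncated quantum relative entropy potential on an $rk$-dimensional effective space'' and assert the entropy is limited to $O(\log(rk))$. There is no such effective space: the dual matrices $U \in B_{S_{q^*}}^m$ carry no rank constraint, and the union of the ranges of the $A_i$ can have dimension up to $\min(nr,m)$, so restricting the mirror map to an $rk$-dimensional spectraplex is not available (note also $k\le 1$, so ``$rk$-dimensional'' is at most $r$-dimensional, which does not match any object in the problem). The paper instead exploits the rank of the \emph{primal} matrices via interpolation: set $p_0 := \max(2,\log(2rk))$, observe $\|A_i\|_{S_{p_0}} \le r^{1/p_0-1/p}\|A_i\|_{S_p}$ since $\rank(A_i)\le r$, and run the $S_{p_0^*}$-norm-squared mirror descent (the $\sqrt{pn}$ branch) at $p=p_0$; the loss $r^{1/p_0-1/p} \cdot k^{1/p_0-1/q} \lesssim k^{1/p-1/q}$ is absorbed because $(rk)^{1/p_0} = O(1)$ by the choice of $p_0$. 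No entropy potential is used anywhere in this theorem.

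Second, the factor $k^{1/p-1/q}$ with $k=\min(1,m/n)$. You attribute it to a rank-based comparison $\|M\|_{S_q} \le (\rank M)^{1/q-1/p}\|M\|_{S_p}$ with ``effective rank'' $rk$; that inequality is false as written (for $p\le q$ the correct direction is $\|M\|_{S_p}\le(\rank M)^{1/p-1/q}\|M\|_{S_q}$), and in any case a rank argument would produce a power of $r$, not of $m/n$. In the paper this factor comes from the entropy numbers of the identity between Schatten classes (\Cref{thm:entropy_number_Schatten}, from \cite{hpv17}): one first covers $\mathcal{A}(B^m_{S_{q^*}})$ by $2^{O(n)}$ translates of $k^{1/q^*-1/p^*}\mathcal{A}(B^m_{S_{p^*}})$ and only then applies mirror descent on $B^m_{S_{p^*}}$. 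This covering step is a nontrivial external input that your plan does not identify, and without it (or an equivalent) the $q=\infty$ case and the $k^{1/p-1/q}$ gain are out of reach — your fallback for $q=\infty$ (switch to the entropy potential) runs into the first gap again.
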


Our partial coloring result in \Cref{thm:lowrank_matrix_discrepancy} is tight when either $m = \Theta(\sqrt{n})$ (for which we give an alternative proof using Banaszczyk's result \cite{b98} in \Cref{sec:banaszczyk_matrix}), or when $r=1$ and $m \geq n$. 
We provide matching lower bounds for both cases in \Cref{sec:lower_bound_low_dimension,sec:lower_bound_rank_1}. 
In particular, our lower bound examples imply a tight $\Omega(\sqrt{n})$ lower bound for rank-$1$ matrix Spencer when $m = n$.   

\begin{restatable}[Rank-$1$ Matrix Spencer Lower Bound]{corollary}{RankOneMatrixSpencer}
There exist rank-$1$ symmetric matrices $A_1, \dots, A_n \in \mathbb{R}^{n \times n}$ with $\|A_i\|_\op \leq 1$ such that any $x \in \{\pm 1\}^n$ has $\|\sum_{i=1}^n x_i A_i\|_\op \gtrsim \sqrt{n}$.
\end{restatable}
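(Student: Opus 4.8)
The plan is to exhibit explicit rank-1 matrices $A_1, \dots, A_n \in \setR^{n\times n}$ of the form $A_i = v_i v_i^\top$ for suitable unit vectors $v_i$, so that $\|A_i\|_\op = \|v_i\|_2^2 \leq 1$, and such that for every sign vector $x \in \{\pm 1\}^n$ the matrix $\sum_i x_i v_i v_i^\top$ has operator norm $\Omega(\sqrt n)$. The natural choice is to take the $v_i$ to be (a scaled version of) the columns of a Hadamard-type or Fourier-type matrix, i.e. an orthogonal/unitary family, since these are the extremal examples for the vector Spencer lower bound and rank-1 matrices directly encode the outer products of such vectors. Concretely, I would take $n$ to be a power of two, let $H \in \{\pm 1\}^{n\times n}$ be a Hadamard matrix, set $v_i = \tfrac{1}{\sqrt n}\, h_i$ where $h_i$ is the $i$-th column of $H$, so $\|v_i\|_2 = 1$, and consider $M(x) := \sum_{i=1}^n x_i v_i v_i^\top = \tfrac1n \sum_i x_i h_i h_i^\top$.

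The key computation is to lower bound $\|M(x)\|_\op$ uniformly in $x$. Since the $h_i$ are orthogonal with $\|h_i\|_2^2 = n$, the vectors $v_i$ form an orthonormal basis of $\setR^n$, and hence $M(x) = V \operatorname{diag}(x) V^\top$ where $V$ has columns $v_i$ is orthogonal; therefore $\|M(x)\|_\op = \|\operatorname{diag}(x)\|_\op = 1$ for every $x$ — which is \emph{too small}. This tells me the honest orthonormal construction fails, and the right construction must instead use many more than $n$ \emph{but we only have $n$ matrices}; so the correct idea is different: I should pick the $v_i$ so that the outer products point in a common direction on average but with signs that cannot be cancelled. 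The cleaner route, and the one I would actually pursue, is to reduce to the \emph{vector} Spencer lower bound (Spencer's tightness, cited in the excerpt right after \Cref{thm:spencer}): take the known hard instance $a_1,\dots,a_n \in \{\pm1\}^n$ with $\min_{x\in\{\pm1\}^n}\|\sum_i x_i a_i\|_\infty \gtrsim \sqrt n$ (Hadamard columns give this), and build rank-1 matrices $A_i = \tfrac{1}{n}\, a_i a_i^\top$... but again $\|\tfrac1n \sum x_i a_i a_i^\top\|_\op$ is not directly $\|\sum x_i a_i\|_\infty$.

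So the genuine approach is: use the lower-bound example constructed in \Cref{sec:lower_bound_rank_1} referenced in the excerpt (the statement says ``our lower bound examples imply'' this corollary), which gives, for $r=1$ and $m=n$, a partial-coloring lower bound matching $\sqrt{n\cdot\min(p,\log(rk))}\cdot k^{1/p-1/q}$ with $p=q=\infty$, $r=1$, $k = \min(1,m/n) = 1$, hence $\Omega(\sqrt n)$ even for partial colorings. I would therefore (i) invoke that construction of rank-1 $A_i = v_i v_i^\top$ with $\|v_i\|_2 \le 1$, (ii) observe the lower bound there is proved against all $x \in [-1,1]^n$ with $|\{i : |x_i|=1\}| \ge n/2$, which in particular includes all $x \in \{\pm 1\}^n$, and (iii) conclude $\|\sum_i x_i A_i\|_\op \gtrsim \sqrt n$ for every full coloring. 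The main obstacle is pinning down that the construction in \Cref{sec:lower_bound_rank_1} indeed uses genuinely rank-1 (not just low-rank) matrices with operator norm at most $1$ and that its lower bound argument — likely a second-moment or trace argument showing $\operatorname{tr}((\sum_i x_i A_i)^2) = \sum_{i,j} x_i x_j \langle v_i, v_j\rangle^2 \gtrsim n$ for a carefully chosen near-orthogonal system, forcing $\|\sum x_i A_i\|_\op \ge \sqrt{\operatorname{tr}(\cdot^2)/\operatorname{rank}(\cdot)} \gtrsim \sqrt n$ when the rank is $O(1)$-controlled, or alternatively an averaging-over-$x$ argument — goes through uniformly over $\{\pm1\}^n$; this is exactly the content deferred to that section, so the corollary is essentially immediate from it.
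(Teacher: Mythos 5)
Your proposal does not actually contain a proof: after correctly ruling out the orthonormal Hadamard construction (which indeed gives $\|\sum_i x_i v_iv_i^\top\|_\op = 1$) and the naive reduction to vector Spencer, you fall back on ``invoke the construction of \Cref{sec:lower_bound_rank_1}'' --- but that construction is precisely what the corollary asks you to supply, so this is circular. Moreover, the mechanism you guess for that deferred argument would not work. A trace bound of the form $\tr\bigl(\bigl(\sum_i x_i A_i\bigr)^2\bigr) \gtrsim n$ only yields $\|\sum_i x_i A_i\|_\op \gtrsim \sqrt{n/\rank(\sum_i x_i A_i)}$, and for any family of rank-$1$ matrices whose signed sum is forced to have large trace-square, that sum generically has rank $\Theta(n)$, so you would only get an $\Omega(1)$ bound. (The trace argument is the right tool in the paper's \emph{other} lower bound, the $m=\sqrt{n}$ example, where the ambient dimension --- hence the rank --- is only $\sqrt{n}$; it cannot work here with $m=n$.)

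The idea you gesture at --- ``outer products pointing in a common direction so the signs cannot cancel'' --- is the right one, and the paper realizes it very concretely: take $A_i := \tfrac12 (e_i+e_n)(e_i+e_n)^\top$ for $i \in [n-1]$ and $A_n := 0$. Each $A_i$ is symmetric, rank $1$, with $\|A_i\|_\op = \|A_i\|_F = 1$. For any $x \in \{\pm 1\}^n$ the last column of $M := \sum_{i=1}^n x_i A_i$ is $\tfrac12\bigl(x_1,\dots,x_{n-1},\sum_{i=1}^{n-1}x_i\bigr)^\top$, so
\begin{equation*}
\|M\|_\op \;\ge\; \|M e_n\|_2 \;\ge\; \tfrac12\Bigl(\sum_{i=1}^{n-1} x_i^2\Bigr)^{1/2} \;\gtrsim\; \sqrt{n},
\end{equation*}
with no spectral or averaging argument needed. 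The point is that every $A_i$ deposits its sign $x_i$ into a \emph{distinct} entry of the $e_n$-column, so cancellation is impossible coordinatewise; your proposal is missing exactly this construction and this one-line column-norm estimate.
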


Another immediate consequence of our lower bounds is an optimal $\Omega(\sqrt{\min(m,n)})$ lower bound for $S_2 \rightarrow S_\infty$ discrepancy. 
This is in stark contrast to the well-known Koml\'os conjecture for vectors, which asserts that the $\ell_2 \rightarrow \ell_\infty$ discrepancy is $O(1)$. 
\Cref{komlos_lower} states that such a conjecture is far from being true for matrices. 

\begin{restatable}[Lower Bound for Matrix Koml\'os]{corollary}{MatrixKomlos} \label{komlos_lower}
For any $m$ and $n$, there exist symmetric matrices $A_1, \dots, A_n \in \mathbb{R}^{m \times m}$ with $\|A_i\|_F \leq 1$ such that any $x \in \{\pm 1\}^n$ has $\|\sum_{i=1}^n x_i A_i\|_\op \gtrsim \sqrt{\min(m,n)}$. 
\end{restatable}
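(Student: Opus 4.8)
The goal is to exhibit matrices $A_1,\dots,A_n \in \setR^{m\times m}$ with $\|A_i\|_F \le 1$ so that every $\pm 1$ coloring has operator-norm discrepancy $\gtrsim \sqrt{\min(m,n)}$. I would split into two regimes depending on whether $m \le n$ or $m > n$, and in each regime reduce to the lower bound examples already produced in this paper (Sections~\ref{sec:lower_bound_low_dimension} and \ref{sec:lower_bound_rank_1}).

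\textbf{Regime $m \le n$.} Here I want the bound $\Omega(\sqrt m)$. Let $\ell := \min(m,n) = m$. The idea is to take the lower-bound instance underlying \Cref{thm:lowrank_matrix_discrepancy} tightness in the regime $m = \Theta(\sqrt{n'})$: applied with an ambient dimension $m' = m$ and a number of vectors $n' = \Theta(m^2) \le n$ (recall $m \le n$ so $m^2 \le n^2$, but we only need $n' \le n$, which holds once $m \le \sqrt n$; when $\sqrt n < m \le n$ we instead use the rank-$1$ construction of \Cref{sec:lower_bound_rank_1} with ambient dimension $m$ and $n' = m \le n$ matrices). In either case we obtain matrices $A_1,\dots,A_{n'}$ of Frobenius norm $\le 1$ in $\setR^{m\times m}$ forcing discrepancy $\gtrsim \sqrt{n'}$ or $\gtrsim \sqrt{m}$ respectively on any coloring of the $n'$ coordinates. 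Pad with $A_{n'+1} = \dots = A_n = 0$. A full $\pm 1$ coloring of all $n$ coordinates restricts to a coloring of the first $n'$, so $\|\sum_{i=1}^n x_i A_i\|_\op = \|\sum_{i=1}^{n'} x_i A_i\|_\op \gtrsim \sqrt{\min(m,n)}$.

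\textbf{Regime $m > n$.} Here $\min(m,n) = n$ and I want $\Omega(\sqrt n)$. Now the ambient dimension is abundant, so I take the simplest possible certificate: choose $A_i = e_i e_i^\top$ for $i \in [n]$ (embedding into the top-left $n \times n$ block of $\setR^{m \times m}$, padding with zeros). Then $\|A_i\|_F = 1$ and $\sum_{i=1}^n x_i A_i = \diag(x_1,\dots,x_n,0,\dots,0)$, which has operator norm exactly $\max_i |x_i| = 1$ — that is the wrong direction. So instead I should reuse the rank-$1$ lower bound from \Cref{sec:lower_bound_rank_1}, which already produces rank-$1$ $A_i = v_i v_i^\top$ with $\|v_i\|_2 = 1$ (hence $\|A_i\|_F = 1$) in $\setR^{n \times n}$ forcing $\|\sum x_i A_i\|_\op \gtrsim \sqrt n$ for every coloring; embed this $n\times n$ instance into the top-left block of $\setR^{m\times m}$ and pad the remaining $m-n$ diagonal rows/columns with zeros. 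The operator norm is unaffected by zero-padding, so the bound $\gtrsim \sqrt n = \sqrt{\min(m,n)}$ survives.

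\textbf{Main obstacle.} The one genuine subtlety is bookkeeping: making sure that in each regime the instance I import from \Cref{sec:lower_bound_low_dimension} or \Cref{sec:lower_bound_rank_1} has \emph{both} the right number of matrices ($\le n$, to allow zero-padding up to $n$) \emph{and} the right ambient dimension ($\le m$, to allow embedding into $\setR^{m\times m}$), \emph{and} that the Frobenius-norm normalization $\|A_i\|_F \le 1$ is the one those constructions actually satisfy (they are stated for $\|A_i\|_\op \le 1$ with rank $1$, for which $\|A_i\|_F = \|A_i\|_\op$, so this is automatic). Everything else — zero-padding preserves $\|\cdot\|_\op$ and only adds zero rows, restriction of a $\pm1$ coloring is a $\pm1$ coloring — is immediate, so no real calculation is required beyond stitching the two regimes together and invoking $\sqrt{n'} \gtrsim \sqrt{\min(m,n)}$ when $n' = \Theta(m^2)$ and $m \le \sqrt n$.
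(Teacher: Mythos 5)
Your proposal is correct and matches the paper's (unstated but intended) argument: the corollary is derived as an immediate consequence of \Cref{lem:rank_1}, i.e.\ instantiate the rank-$1$ construction in dimension $\min(m,n)$ with $\min(m,n)$ matrices, embed into the top-left block of $\setR^{m\times m}$, and pad with zero matrices up to $n$, noting that a full coloring restricts to a full (hence partial) coloring of the nonzero block. Your separate sub-case $m \le \sqrt{n}$ via the Hadamard construction of \Cref{sec:lower_bound_low_dimension} is valid but unnecessary (and drags in the power-of-two and symmetrization caveats of that construction); the rank-$1$ instance at dimension $m$ plus zero-padding already gives $\Omega(\sqrt{m})$ there.
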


Finally, we propose the following generalization of \Cref{conj:matrix_spencer}:

\begin{conjecture}[$S_p \to S_q$ Matrix Discrepancy]
Let $m \ge \sqrt{n}$  and $2 \le p \le q \le \infty$. Given matrices $A_1, \dots, A_n \in \mathbb{R}^{m \times m}$ with $\| A_i \|_{S_p} \leq 1$,  there exists $x \in \{\pm 1\}^n$ such that 
\begin{align*}
\Big\| \sum_{i=1}^n x_i A_i \Big\|_{S_q} \lesssim \sqrt{n \cdot \min(p,\max(1, \log (m/n)))} \cdot \min(1,m/n)^{1/p-1/q}. 
\end{align*}
\end{conjecture}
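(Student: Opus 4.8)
It is natural to ask how far our framework can be pushed toward this conjecture. The plan would be to run the covering argument of \Cref{sec:partial_coloring_via_covering} with the pair $S_p \to S_q$ in place of $S_\infty \to S_\infty$, reducing the conjecture to the same missing object as the matrix Spencer conjecture (\Cref{conj:matrix_spencer}): a quantum relative entropy net on the spectraplex $\mathcal{S}_m$ of cardinality $2^{O(n)}$ and radius $O(\max(1,\log(m/n)))$. Writing $\mathcal{A}\colon \setR^n \to \setR^{m\times m}$ for $\mathcal{A}x = \sum_i x_i A_i$, the discrepancy body at level $t$ is $D = \mathcal{A}^{-1}(t\,B_{S_q})$ and its polar is $D^\circ = \tfrac1t\,\overline{\conv}\{(\inn{A_i}{Y})_{i\in[n]} : \|Y\|_{S_{q^*}} \le 1\}$, where $p^*,q^*$ are the dual exponents; by \Cref{sec:partial_coloring_via_covering} it suffices to cover $D^\circ$ by $2^{O(n)}$ translates of $\tfrac1n B^n_\infty$. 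Since $2 \ge p^* \ge q^*$, Hölder's inequality $|\inn{A_i}{Y-Y'}| \le \|A_i\|_{S_p}\,\|Y-Y'\|_{S_{p^*}} \le \|Y-Y'\|_{S_{p^*}}$ shows this is implied by covering the symmetric Schatten ball $B_{S_{q^*}}$ by $2^{O(n)}$ balls of $S_{p^*}$-radius $t/n$. Following \Cref{sec:covering_mirror,sec:entropy_net}, one would produce such a cover by running mirror descent with the von Neumann entropy mirror map $\Phi(X)=\tr(X\log X)$, whose Bregman divergence is the quantum relative entropy $S(\cdot\,\|\,\cdot)$: the number of iterates, hence the net size, is $2^{O(n)}$ provided the entropy radius is $O(\max(1,\log(m/n)))$, and the $S_{p^*}$-radius produced is then $\lesssim t/n$ for $t \asymp \sqrt{n\cdot\max(1,\log(m/n))}\cdot k^{1/p-1/q}$ with $k=\min(1,m/n)$. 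This would give $\gamma_n(D) \ge 2^{-O(n)}$, hence a partial coloring, and iterating via \Cref{cor:full_coloring} a full one; to match the conjecture's loss-free statement one would additionally need to argue that near the corner $p\to 2,\ q\to\infty$ the factor $(1/2+1/q-1/p)^{-1}$ is not actually incurred, e.g.\ by telescoping rather than multiplying the constants of the partial-coloring step across scales.

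Several regimes are already settled and should be peeled off first. For $m=\Theta(\sqrt n)$ the conjectured bound is a theorem, obtained directly from Banaszczyk's inequality \cite{b98} in \Cref{sec:banaszczyk_matrix}, and there $\max(1,\log(m/n))=1$ so no net is needed. When $m \gtrsim n\,e^{p}$ the bound reads $\sqrt{np}$, which is the Schatten-$p$ smoothness branch of \Cref{thm:lowrank_matrix_discrepancy} coming from mirror descent with the $S_p$ mirror map, and for $p$ bounded away from $2$ the partial-to-full loss is $O(1)$; so this ``very wide'' regime also reduces to \Cref{thm:lowrank_matrix_discrepancy}. What is genuinely open is the rest: $m$ polynomially larger than $\sqrt n$ with $\log(m/n)\lesssim p$, the hard case being full-rank matrices, where \Cref{thm:lowrank_matrix_discrepancy} together with the non-commutative Khintchine bound only yield entropy radius $O(\log(\min(m,m^2/n)))$ and one must sharpen this to $O(\max(1,\log(m/n)))$.

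Constructing this net is the main obstacle, and is why the statement is a conjecture and not a theorem. The classical target is reachable: on the simplex $\Delta_m$, take the net of all $(1-\delta)\hat p + \delta\cdot\mathrm{unif}_m$ with $\hat p$ supported on at most $k := \Theta(n/\log m)$ coordinates and probabilities in $\tfrac1k\Z$ --- there are $2^{O(n)}$ such distributions, and for every $p$ one checks that the relative entropy to the nearest one is $O(\log(m/n))$ by bounding separately the contribution of the top-$k$ mass of $p$ (matched to accuracy $1/k$) and of its tail (absorbed by the $\delta/m$ floor). The difficulty is the quantum lift: one single $T\subseteq\mathcal{S}_m$ must be good in relative entropy against density matrices in \emph{every} eigenbasis at once. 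The block-diagonal construction of \Cref{sec:entropy_net} works precisely because there the admissible eigenbases form a low-dimensional family; a naive general net that also carries an $\varepsilon$-net of the Grassmannian $\mathrm{Gr}(m,k)$ has cardinality $2^{\Theta(km\log(1/\varepsilon))}$, which is $2^{O(n)}$ only when $k = O(n/m)$, and then delivers radius $\Theta(\log(m/k)) = \Theta(\log(m^2/n))$ --- strictly worse than $\log(m/n)$ as soon as $m \gg \sqrt n$. I expect the real work to lie in removing this basis bottleneck: either an adaptive, non-product net in which ``support size'' and ``eigenbasis'' are not paid for independently, or an operator Pinsker / complex-interpolation estimate that bounds $S(X\|Y)$ without fixing a basis, or a partial-coloring argument bypassing entropy nets altogether (an operator analogue of Banaszczyk's method valid beyond $m = \Theta(\sqrt n)$). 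Any of these, fed into \Cref{cor:better_entropy_net} and its $S_p \to S_q$ analogue, would settle the conjecture.
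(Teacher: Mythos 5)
This statement is a \emph{conjecture}: the paper offers no proof of it, and neither do you --- correctly so. Your text is not a proof but a roadmap, and as such it is accurate and essentially coincides with the paper's own proposed route: reduce the $S_p \to S_q$ bound, via the covering/Gaussian-measure equivalence of \Cref{lem:dual_cov} and the mirror-descent covering of \Cref{lem:size_covering}, to the existence of a $2^{O(n)}$-size quantum relative entropy net of radius $O(\max(1,\log(m/n)))$ on $\mathcal{S}_m$ (this is exactly \Cref{cor:better_entropy_net} for $p=q=\infty$). Your identification of the settled regimes is also correct: $m=\Theta(\sqrt n)$ follows from Banaszczyk (\Cref{thm:ban_matrix_disc}), the $\sqrt{np}$ branch from \Cref{cor:weaker_Sp_Sq}, and the diagonal/classical case from the $h=1$ instance of \Cref{thm:entropy_net}. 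The bottleneck you name --- that the operator-norm-net route of \Cref{lem:entropy_from_op} plus \Cref{thm:entropy_number_Schatten} only yields entropy radius $O(\log(m^2/n))$, and that a basis-by-basis (Grassmannian) net cannot close the gap --- is precisely the paper's stated obstruction.

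Two points in your sketch are loose and worth tightening if this is to be turned into a clean conditional statement. First, mirror descent with the von Neumann entropy map does not directly ``produce an $S_{p^*}$-radius''; its guarantee is an $\ell_\infty$-cover of the image $\mathcal{A}(\cdot)$, and the factor $k^{1/p-1/q}$ must be picked up beforehand by the intermediate Schatten covering $\mathcal{N}\bigl(B^m_{S_{q^*}}, k^{1/p-1/q} B^m_{S_{p^*}}\bigr) \le 2^{O(n)}$ from \Cref{thm:entropy_number_Schatten}, exactly as in the proof of \Cref{lem:spsq_cover}; only then does one pass to $B^m_{S_1}$ and the spectraplex. Second, the entropy mirror map lives on $\mathcal{S}_m$, so covering $\mathcal{A}(B^m_{S_1})$ requires the positive/negative decomposition and trace discretization carried out in the proof of \Cref{thm:block_diagonal_matrix_spencer}; this is routine but not free. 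Finally, your worry about the partial-to-full loss $(1/2+1/q-1/p)^{-1}$ near $p=2$, $q=\infty$ is legitimate and is not resolved by the paper either (\Cref{thm:lowrank_matrix_discrepancy} incurs the same factor); note, however, that at $p=2$, $q=\infty$ the conjectured full-coloring bound cannot be $O(1)$-close to the partial-coloring bound in general, by the $\Omega(\sqrt{\min(m,n)})$ lower bound of \Cref{komlos_lower}, so some care in formulating the ``loss-free'' claim is needed there.
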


When $m = n$, the right hand side is $O(\sqrt{n})$, and for diagonal matrices the conjecture is known to be true for any $2 \le p \le q$. When $p = q$, the conjecture is also known to be true for diagonal matrices for all $m$ and $n$ \cite{rr20}.

\subsection{Overview of Our Approach}

We give a brief overview of our partial coloring framework in this subsection, and leave a more detailed discussion to \Cref{sec:framework}.

\medskip 
\noindent \textbf{Partial Coloring via Covering Numbers.}
Let $K := \{x \in \mathbb{R}^n: \| \sum_{i=1}^n x_i A_i \| \leq 1\}$ be the unit discrepancy body\footnote{To avoid confusion when talking about discrepancy bodies, $K$ denotes the unit discrepancy body, and $D$ denotes a scaling of $K$ by the target discrepancy bound.} and $t$ be the target discrepancy bound.  
A recent refinement by Reis and Rothvoss \cite{rr20} of Gluskin's convex geometry approach \cite{g89} shows that whenever $\gamma_n(tK) \geq 2^{-O(n)}$ for any constant in the exponent, one can efficiently find a partial coloring $x \in O(t K) \cap [-1,1]^n$ with at least $n/2$ coordinates in $\{-1,1\}$ (see \Cref{thm:partial_coloring}). For settings where the target discrepancy bound is $n^{\Omega(1)}$, we may iterate the partial coloring to find a full coloring with the same discrepancy bound up to constants (see \Cref{cor:full_coloring}). 

Our new approach for proving a Gaussian measure lower bound $\gamma_n(tK) \geq 2^{-O(n)}$ is via the covering numbers (\Cref{defn:covering_numbers}) of $K$ or $K^\circ$ with respect to the Euclidean ball $B_2^n$ or the $\ell_\infty$ ball $B_\infty^n$. 
In particular, since $\gamma_n(\sqrt{n} B_2^n)$ has constant Gaussian measure, as long as $\mathcal{N}(\sqrt{n} B_2^n, t K) \leq 2^{O(n)}$, we get $\gamma_n(tK) \geq 2^{-O(n)}$.
Using the duality of covering numbers and connections with volume, we obtain several equivalent conditions for $\gamma_n(tK) \geq 2^{-O(n)}$ in terms of covering (\Cref{lem:dual_cov}). 
The condition that we will work with is $\mathcal{N}(K^\circ, \frac{t}{n} B_\infty^n) \leq 2^{O(n)}$, where $K^\circ = \{(\langle A_1, U \rangle, \dots, \langle A_n, U \rangle): \|U\|_* \leq 1\}$ is the polar discrepancy body.

\medskip
\noindent \textbf{Covering via Mirror Descent.} 
We prove the covering number bound $\mathcal{N}(K^\circ, \frac{t}{n} B_\infty^n) \leq 2^{O(n)}$ using mirror descent, a powerful convex optimization primitive of Nemirovski and Yudin \cite{ny83} (see \Cref{sec:mirror_descent} for an overview). 
In particular, denote the linear map $\mathcal{A}(U) := (\langle A_1, U \rangle, \dots, \langle A_n, U \rangle)$.
We shall assume that each $\|A_i\| \leq 1$. 
This is true for the matrix Spencer setting with $\| \cdot\|$ being the operator norm. 
In the more general setting of matrix discrepancy for Schatten norms, we have $\|A_i\|_{S_p} \leq 1$ while the norm for measuring discrepancy is $\|\cdot\|_{S_q}$. 
One can get around this issue by leveraging known covering number estimates between Schatten classes (\Cref{thm:entropy_number_Schatten}).

For any matrix $\|U \|_* \leq 1$, consider minimizing the function $f_U(X) := \|\mathcal{A}(X - U)\|_\infty$ over the dual unit ball $B_* := \{U: \|U\|_* \leq 1\}$.
The function has minimum value $f_U(U) = 0$ and since it has subgradients in $\{\pm A_1, \dots,\pm A_n\}$ with $\|A_i\| \leq 1$, the function $f_U(X)$ is $1$-Lipschitz with respect to the dual norm $\| \cdot\|_*$. 
So as long as there exists a $1$-strongly convex mirror map $\Phi$ on $B_*$, we can minimize $f_U(X)$ by starting from some matrix $U_0 = U_0(U) \in B_*$ and running mirror descent for $n$ steps. Denoting by $U_s$ the matrix in the $s$-th step, standard guarantees for mirror descent (\Cref{thm:mirror_descent}) yield
\begin{align} \label{eq:error_mirror_descent}
\min_{s \in [n]} f_U(U_s) = \min_{s \in [n]} f_U(U_s) - f_U(U) \leq \sqrt{\frac{2 D_\Phi(U, U_0)}{n}} ,
\end{align}
where $D_\Phi(U,U_0) = \Phi(U) - \Phi(U_0) - \langle \nabla \Phi(U_0), U - U_0 \rangle$ is the Bregman divergence. 
We let $T$ be the set of all matrices encountered when running mirror descent for all possible $U \in B_*$, i.e.
$T := \{U_s: s \in [n], U \in B_*\}$, and $T_0 := \{U_0: U \in B_*\}$ be the set of all starting matrices. 
The net $\mathcal{A}(T)$ will be our covering for $K^\circ$. 

To see that this indeed gives a good covering, we denote $D_\Phi^{\max} := \sup_{U \in B_*} D_\Phi(U \| U_0)$. 
By the definition of the function $f_U$, we have from \eqref{eq:error_mirror_descent} that
\begin{align*}
\min_{s \in [n]} \| \mathcal{A}(U) - \mathcal{A}(U_s) \|_\infty \leq \sqrt{\frac{2 D_\Phi(U, U_0)}{n}} \leq \sqrt{\frac{2 D_\Phi^{\max}}{n}},
\end{align*} 
and so the dual body admits the covering $K^\circ \subseteq \mathcal{A}(T) + \sqrt{2 D_\Phi^{\max} / n} \cdot B_\infty^n$. 
Thus as long as our target discrepancy bound $t \le \sqrt{2 n D_\Phi^{\max}}$, we have $\mathcal{N}(K^\circ, \frac{t}{n} B_\infty^n) \leq |T|$, which we need to show to be at most $2^{O(n)}$. 

The key observation we make here is that for our choices of the mirror maps in \Cref{sec:weak_matrix_spencer,sec:matrix_discrepancy_schatten}, $U_s$ only depends\footnote{In general, mirror descent projects back onto the feasible set according to the Bregman divergence in each iteration, and therefore might not satisfy this property.} on the sum of the subgradients, but not on their order. 
Since there are only $2n$ choices of subgradients $\{\pm A_i\}_{i \in [n]}$ and we run mirror descent for $n$ steps, a counting argument reveals that there are at most $2^{O(n)}$ possible sums of gradients (\Cref{lem:size_covering}). 
So long as the starting matrices satisfy $|T_0| \leq 2^{O(n)}$, we have $|T| \leq |T_0| \cdot 2^{O(n)} \leq 2^{O(n)}$.

\medskip
\noindent \textbf{A View of Mirror Descent as Refining the Net.}
In the diagonal case, i.e. Spencer's setting, we can directly build the net $T$ by repeatedly sampling the $i$th diagonal coordinate $e_i e_i^\top$ proportional to its weight in the target matrix. 
Since the set of diagonal matrices on the Schatten-$1$ ball has only $2m$ vertices $\{\pm e_i e_i^\top\}_{i \in [m]}$, the approximate Carath\'eodory theorem (see \cite{v18book}, Theorem 0.0.2) implies that the image of the net $\mathcal{A}(T)$ already gives a good covering for $K^\circ$, and mirror descent is not necessary in this case.

However, this argument fails beyond diagonal matrices, as the number of vertices becomes infinite. In these more general cases, we use mirror descent to boost a coarse net $T_0$ to a finer net $T$ which has a better covering guarantee in the image space, at the expense of increasing the size of the net by a factor of $2^{O(n)}$.

\medskip
\noindent \textbf{Relative Entropy Nets for the Spectraplex.} 
For our application in \Cref{sec:matrix_discrepancy_schatten} to low-rank matrices, it suffices to take $T_0 = \{0\}$. For the application in \Cref{sec:weak_matrix_spencer} to block diagonal matrix Spencer, we run mirror descent on the spectraplex $\mathcal{S}_m := \{X \in \mathbb{R}^{m \times m}: X \succeq 0, \tr(X) = 1\}$ and carefully construct a set $|T_0| \leq 2^{O(n)}$ with small $D_\Phi^{\max}$.  
Since $D_\Phi(X \| Y)$ is the quantum relative entropy between $X$ and $Y$ in the spectraplex setup, we refer to such $T_0$ as a (quantum) relative entropy net (\Cref{defn:relative_entropy_net}).

We use an operator norm net for the Schatten-1 ball from \cite{hpv17} to construct a relative entropy net with error $O(\log(m^2/n))$ for the spectraplex $\mathcal{S}_m$ (\Cref{lem:entropy_from_op}). 
When restricted to block diagonal matrices with block size $h \times h$, we use a hybrid of this argument and the earlier approximate Caratheodory argument to find a refined relative entropy net with error $O(\log(hm/n))$ (\Cref{thm:entropy_net}). Taking $T_0$ to be this net in our mirror descent framework gives \Cref{thm:block_diagonal_matrix_spencer}.
This also allows us to reduce the matrix Spencer conjecture to the existence of a better relative entropy net with error $O(\log (m/n))$ for the spectraplex (\Cref{cor:better_entropy_net}).

\subsection{Further Related Work}

\medskip
\noindent \textbf{Banaszczyk's Approach.} 
While the partial coloring method has been extensively applied in discrepancy and obtains the optimal bound for many problems, for several applications where the target discrepancy bound is $n^{o(1)}$ (e.g. the Koml\'os problem or Tusnady's problem), partial coloring is potentially sub-optimal by a logarithmic factor. In breakthrough work, Banaszczyk \cite{b98} obtained an improvement over the partial coloring method for these applications using deep techniques from convex geometry. 
While Banaszczyk's original proof is non-constructive, a fascinating recent line of work has obtained algorithmic versions of Banaszczyk's result \cite{dgln16,bdg16,bg17,lrr17,bdgl18}.

\medskip
\noindent \textbf{Matrix Spencer Conjecture and Non-commutative Random Matrix Theory.}
The typical value of $\|\sum_{i=1}^n x_i A_i\|_\op$ for a random coloring has attracted significant attention in random matrix theory. 
For commutative matrices, the bound $\E[\|\sum_{i=1}^n x_i A_i\|_\op] \lesssim \sqrt{n \log m}$ by matrix Khintchine \cite{lpp91,p03book} or matrix Chernoff bound \cite{aw02} is in general tight. It is also known to be tight for Toeplitz matrices \cite{m07}. For matrices with certain non-commutative structures (e.g. random Gaussian matrices), improved bounds of $O(\sqrt{n})$ are known (see \cite{v18book,bbvh21}). 
In the context of \Cref{conj:matrix_spencer}, these results imply that a random coloring already achieves the conjectured bound when the input matrices have certain non-commutative structures. 
On the other hand, by \Cref{thm:spencer}, \Cref{conj:matrix_spencer} is known when all the matrices commute.

\medskip \noindent \textbf{Concurrent and Independent Work.}
In concurrent and independent work, Hopkins, Raghavendra and Shetty \cite{hrs21} proved a bound of $\sqrt{n \log \left(\tr(\sum_{i=1}^n A_i^2 ) / n^{1.5} \right)}$ 
%\leq \sqrt{n \log (r / \sqrt{n})}$ 
for matrix Spencer using quantum communication complexity. 
Their bound coincides with ours for full rank matrices, and is slightly stronger for low-rank matrices. 
However, our approach is completely different and can also be used to show matrix discrepancy bounds for block diagonal matrices and general Schatten norms.
We believe both approaches are interesting and may lead to further progress in resolving the matrix Spencer conjecture.

\section{Preliminaries}

\paragraph{Norms and Convex Bodies.} 
A convex body is a compact convex set with non-empty interior. We say a convex set $K$ is symmetric if $x \in K$ implies $-x \in K$. 
We use $\|\cdot\|_p$ to denote the $\ell_p$-norm and $\|\cdot \|_{S_p}$ to denote the Schatten-$p$ norm.
In particular, the operator norm $\|\cdot\|_\op = \|\cdot \|_{S_\infty}$ and the Frobenius norm $\|\cdot\|_F = \|\cdot \|_{S_2}$.
We use $B_p^n$ to denote the unit $\ell_p$-ball in $\mathbb{R}^n$ and $B_{S_p}^n := \{A \in \setR^{n \times n}: \|A\|_{S_p} \le 1\}$ 
%\{A \in \setR^{n \times n}: A \text{ is symmetric and } \tr[|A|^{p}] \le 1\}$ 
to denote the unit Schatten-$p$ ball in $\mathbb{R}^{n \times n}$, with $B_{\mathrm{op}}^n := B_{S_\infty}^n$. 
Let $\mathbb{R}^n_+$ denote the set of non-negative vectors in $\mathbb{R}^n$ and denote the simplex $\Delta_n := \{x \in \mathbb{R}^n_+: \|x\|_1 = 1\}$. 
Let $\mathbb{S}_+^n$ (resp. $\mathbb{S}_{++}^n$) denote the set of positive semidefinite (resp. positive definite) $n \times n$ matrices, and define the spectraplex $\mathcal{S}_n := \{X \in \mathbb{S}_+^n: \tr(X) = 1\}$. 
For a norm $\| \cdot \|$ in $\mathbb{R}^n$, we define the dual norm as $\|x\|_* := \sup\{\langle y, x \rangle: y \in \mathbb{R}^n, \|y\|\leq 1\}$. 
Dual norms are similarly defined for matrix norms.

\medskip
\noindent \textbf{Convex Functions.} A convex function $f: \mathcal{X} \rightarrow \mathbb{R}$ is said to be $L$-Lipschitz with respect to a norm $\| \cdot \|$ if $\|g\|_* \le L$ for all subgradients $g \in \partial f(x)$. 
%$|f(x) - f(y)| \leq L \|x-y\|$ for any $x, y \in \mathcal{X}$. \victor{Might be better to use the definition 
We say that $f$ is $\alpha$-strongly convex with respect to a norm $\|\cdot\|$ if $f(y) \geq f(x) + g^\top (y - x) + \frac{\alpha}{2}\|x-y\|^2$, for all $x, y \in \mathcal{X}$ and all subgradients $g \in \partial f(x)$.

\medskip
\noindent \textbf{Polar.} Given a convex set $K \subseteq \mathbb{R}^n$ with $0 \in K$, we define the polar of $K$ to be $K^\circ := \{y \in \mathbb{R}^n: \sup_{x \in K} \langle x, y \rangle \leq 1\}$. 
It is immediate from the definition that for any constant $t > 0$, $(t K)^\circ = \frac{1}{t} K^\circ$.
When $K$ is closed, the polarity theorem states that $(K^\circ)^{\circ} = K$.  

\begin{lemma}[Polar of Discrepancy Set] \label{lem:polar_body}
Given matrices $A_1, \dots, A_n \in \mathbb{R}^{m \times m}$ and a norm $\| \cdot\|$ in $\mathbb{R}^{m \times m}$, we define the unit discrepancy set as $K := \{x \in \mathbb{R}^n: \| \sum_{i=1}^n x_i A_i \| \leq 1 \}$. Then $K' := \{(\langle A_1, U \rangle, \dots, \langle A_n, U \rangle): \|U\|_* \leq 1\}$ is the polar body $K' = K^\circ$.
\end{lemma}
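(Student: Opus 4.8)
The plan is to phrase everything in terms of the linear map $\mathcal{A}\colon \mathbb{R}^{m\times m}\to\mathbb{R}^n$ defined by $\mathcal{A}(U) := (\langle A_1,U\rangle,\dots,\langle A_n,U\rangle)$, whose adjoint with respect to the standard (trace) inner product is $\mathcal{A}^*(x) = \sum_{i=1}^n x_i A_i$. Writing $B_* := \{U \in \mathbb{R}^{m\times m}\colon \|U\|_* \le 1\}$ for the dual unit ball, we have $K = \{x\colon \|\mathcal{A}^*(x)\|\le 1\}$ and $K' = \mathcal{A}(B_*)$, so the lemma is exactly the assertion $K^\circ = \mathcal{A}(B_*)$.

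First I would check the easy inclusion $\mathcal{A}(B_*)\subseteq K^\circ$: for any $U$ with $\|U\|_*\le 1$ and any $x\in K$, the adjoint identity together with the definition of the dual norm gives $\langle x,\mathcal{A}(U)\rangle = \langle \mathcal{A}^*(x),U\rangle \le \|\mathcal{A}^*(x)\|\cdot\|U\|_* \le 1$, so $\mathcal{A}(U)\in K^\circ$. For the reverse inclusion I would not separate hyperplanes directly but instead compute the polar of $\mathcal{A}(B_*)$ and then apply the bipolar theorem. Unwinding the definitions,
\[
\bigl(\mathcal{A}(B_*)\bigr)^\circ = \Bigl\{x\colon \sup_{\|U\|_*\le 1}\langle \mathcal{A}^*(x),U\rangle \le 1\Bigr\} = \{x\colon \|\mathcal{A}^*(x)\|\le 1\} = K,
\]
where the middle equality uses the finite-dimensional biduality of norms, namely $\|V\| = \sup_{\|U\|_*\le 1}\langle V,U\rangle$. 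Since $B_*$ is compact (a norm ball in a finite-dimensional space) and $\mathcal{A}$ is linear, $\mathcal{A}(B_*)$ is a compact convex set containing $0$; hence the bipolar theorem yields $\mathcal{A}(B_*) = \bigl(\mathcal{A}(B_*)\bigr)^{\circ\circ} = K^\circ$, which is the claim.

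This argument is essentially routine, and the only points meriting a moment of care are (i) the identity $\|V\|=\sup_{\|U\|_*\le 1}\langle V,U\rangle$, i.e.\ that $\|\cdot\|$ is the bidual of itself, and (ii) verifying that $\mathcal{A}(B_*)$ is closed so that the bipolar theorem applies; both are immediate in the present finite-dimensional setting. If one prefers to avoid invoking the bipolar theorem, the reverse inclusion can instead be obtained by strict separation: given $y\notin\mathcal{A}(B_*)$, separate it from this compact convex set by a functional $\langle x,\cdot\rangle$, rescale so the separating constant equals $1$, and note that then $\|\mathcal{A}^*(x)\| = \sup_{z\in\mathcal{A}(B_*)}\langle x,z\rangle \le 1$ while $\langle x,y\rangle > 1$, so $x\in K$ witnesses $y\notin K^\circ$.
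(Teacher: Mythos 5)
Your proposal is correct and follows essentially the same route as the paper: compute the polar of $K'$ via the biduality of norms to get $(K')^\circ = K$, then invoke the bipolar/polarity theorem (the paper calls it the polarity theorem) to conclude $K' = K^\circ$. Your additional checks — the direct inclusion $\mathcal{A}(B_*)\subseteq K^\circ$ and the compactness of $\mathcal{A}(B_*)$ needed for the bipolar theorem — are correct and fill in details the paper leaves implicit.
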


\begin{proof}
By the definition of polar body, we may write
\begin{align*}
(K')^\circ & = \Big\{x \in \setR^n : \sum_{i=1}^n x_i \inn{A_i}{U} \le 1, \ \forall \ U  \text{ s.t. } \|U\|_* \le 1\Big\} \\ & = \Big\{x \in \setR^n :  \Big\langle \sum_{i=1}^n x_i A_i, U \Big\rangle \le 1, \ \forall \ U \text{ s.t. } \|U\|_* \le 1\Big\} \\ & = K,
\end{align*} by the definition of dual norm. It then follows from the polarity theorem that $K' = K^\circ$. 
\end{proof}

%By the same proof strategy, we can also derive the polar of a matrix discrepancy set. We give this statement in the following lemma and omit its proof. \victor{isn't this one just a special case of the above, with the same proof?}

%\begin{lemma}[Polar of Matrix Discrepancy Set]  \label{lem:polar_matrix}
%Given matrices $A_1, \dots, A_n \in \mathbb{R}^{m \times m}$ and a matrix norm $\| \cdot\|$ in $\mathbb{R}^{m \times m}$, we define the (unit) matrix discrepancy set as $K := \{x \in \mathbb{R}^n: \| \sum_{i=1}^n x_i A_i \| \leq 1 \}$. Then we have $K^\circ = \{(\langle A_1, X \rangle, \dots, \langle A_n, X \rangle): \|X\|_* \leq 1\}$, where $\| \cdot \|_*$ is the dual norm. 
%\end{lemma}

\medskip
\noindent \textbf{Covering Numbers.} We start with the definition of covering numbers.

\begin{definition}[Covering Numbers] \label{defn:covering_numbers}
For two convex bodies $K,T \subseteq \mathbb{R}^n$, we define the covering number $\mathcal{N}(K, T)$ as the minimum number $N$ such that there exist centers $x_1, \dots, x_N \in \mathbb{R}^n$ with $K \subseteq \cup_{i=1}^N (x_i + T)$, i.e. $K$ can be covered by $N$ translates of $T$.
\end{definition}

We need the following few standard facts about covering numbers (see \cite{agm15}).

\begin{lemma}[Volume Bounds for Covering Numbers] \label{lem:covering_vol_bound}
Given convex bodies $K, T \subseteq \mathbb{R}^n$. If $T$ is symmetric, we have $\frac{\vol_n(K)}{\vol_n(T)} \leq \mathcal{N}(K,T) \leq 2^n \cdot \frac{\vol_n(K + \frac{T}{2})}{\vol_n(T)}$. 
\end{lemma}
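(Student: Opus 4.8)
The plan is to prove the two inequalities separately. The lower bound is a direct volume-counting argument: if $x_1, \dots, x_N$ are centers witnessing $K \subseteq \bigcup_{i=1}^N (x_i + T)$, then subadditivity of volume gives $\vol_n(K) \le \sum_{i=1}^N \vol_n(x_i + T) = N \cdot \vol_n(T)$, since translation preserves volume. Rearranging yields $N \ge \vol_n(K)/\vol_n(T)$, and taking $N = \mathcal{N}(K,T)$ gives the left inequality. Note that symmetry of $T$ is not needed for this direction.

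For the upper bound I would use a maximal packing argument. Consider a maximal collection of points $x_1, \dots, x_N \in K$ such that the translates $x_i + \tfrac{1}{2} T$ are pairwise disjoint; such a finite maximal collection exists because $K$ is bounded and each translate has the same positive volume $\vol_n(\tfrac{1}{2}T) = 2^{-n}\vol_n(T) > 0$. I claim the $x_i$ also form a covering of $K$ by translates of $T$. Indeed, given any $x \in K$, by maximality the translate $x + \tfrac{1}{2} T$ must meet some $x_i + \tfrac{1}{2} T$; pick $y$ in the intersection and write $y = x + \tfrac{1}{2} t_1 = x_i + \tfrac{1}{2} t_2$ with $t_1, t_2 \in T$. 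Then $x - x_i = \tfrac{1}{2}(t_2 - t_1) = \tfrac{1}{2} t_2 + \tfrac{1}{2}(-t_1)$, which lies in $T$ because $T$ is symmetric (so $-t_1 \in T$) and convex (so the midpoint of $t_2$ and $-t_1$ is in $T$). Hence $x \in x_i + T$, proving $K \subseteq \bigcup_{i=1}^N (x_i + T)$ and therefore $\mathcal{N}(K,T) \le N$.

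It remains to bound $N$. Since the translates $x_i + \tfrac{1}{2} T$ are pairwise disjoint and each is contained in $K + \tfrac{1}{2} T$ (as $x_i \in K$), additivity of volume over disjoint sets gives $N \cdot \vol_n(\tfrac{1}{2} T) = \sum_{i=1}^N \vol_n(x_i + \tfrac{1}{2} T) \le \vol_n(K + \tfrac{1}{2} T)$. Using $\vol_n(\tfrac{1}{2} T) = 2^{-n} \vol_n(T)$ and combining with $\mathcal{N}(K,T) \le N$ yields $\mathcal{N}(K,T) \le 2^n \cdot \vol_n(K + \tfrac{1}{2} T)/\vol_n(T)$, which is the right inequality.

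The only mildly delicate point is the use of symmetry and convexity of $T$ in the covering step: this is exactly where the hypothesis that $T$ is symmetric enters, and it is also the source of the factor-$2$ gap between the packing (disjoint translates of $\tfrac{1}{2}T$) and the cover (translates of $T$). Everything else is routine bookkeeping with volumes, so I do not expect a genuine obstacle here.
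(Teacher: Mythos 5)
Your proof is correct. The paper does not prove this lemma---it cites it as a standard fact from the covering-numbers literature---and your argument (volume subadditivity for the lower bound; a maximal packing by disjoint translates of $\tfrac{1}{2}T$, upgraded to a cover by translates of $T$ via symmetry and convexity, for the upper bound) is precisely the standard proof from that reference.
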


\begin{lemma}[Symmetrization] \label{lem:symmetrization_covering}
Let $K \subseteq \mathbb{R}^n$ be a convex body, then $\mathcal{N}(K-K, K) \leq 2^{O(n)}$. 
\end{lemma}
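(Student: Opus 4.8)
The plan is to reduce covering the symmetric body $K - K$ by translates of the (possibly non-symmetric) body $K$ to covering $K - K$ by translates of the largest \emph{symmetric} body sitting inside a well-chosen translate of $K$, and then invoke the volume bound of \Cref{lem:covering_vol_bound}. First note that $\mathcal{N}(K - K, K)$ is unchanged if $K$ is replaced by any translate $\tilde K := K - v$: indeed $K - K = \tilde K - \tilde K$, and any covering of $\tilde K - \tilde K$ by translates of $\tilde K$ becomes a covering by translates of $K$ after shifting the centers. So it suffices to exhibit one convenient translate $\tilde K$ and bound $\mathcal{N}(\tilde K - \tilde K, \tilde K)$.

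The key step will be to choose $\tilde K$ so that the symmetric body $L := \tilde K \cap (-\tilde K)$ satisfies $\vol_n(L) \ge 2^{-n}\vol_n(\tilde K)$. To find such a translate, let $X, Y$ be independent uniform points of $K$ and set $Z := \tfrac12(X + Y)$; by convexity $Z \in K$ almost surely, and a direct computation of the convolution density gives $f_Z(z) = 2^n\,\vol_n(K \cap (2z - K))/\vol_n(K)^2$, since $X + Y = 2z$ with $X,Y \in K$ forces $X \in K \cap (2z - K)$. As $f_Z$ is supported on $K$ and $\int_K f_Z = 1$, the Lebesgue average of $f_Z$ over $K$ equals $1/\vol_n(K)$, so some $z^\star \in K$ has $f_Z(z^\star) \ge 1/\vol_n(K)$, i.e. $\vol_n(K \cap (2z^\star - K)) \ge 2^{-n}\vol_n(K)$. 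Taking $\tilde K := K - z^\star$, one checks that $\tilde K \cap (-\tilde K) = \big(K \cap (2z^\star - K)\big) - z^\star$, whence $\vol_n(L) \ge 2^{-n}\vol_n(\tilde K)$. (This is a classical fact about difference and intersection bodies; see \cite{agm15}.)

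To finish, with $\tilde K$ fixed: since $L \subseteq \tilde K$, covering by the smaller body is only harder, so $\mathcal{N}(K - K, K) = \mathcal{N}(\tilde K - \tilde K, \tilde K) \le \mathcal{N}(\tilde K - \tilde K, L)$. Moreover $L \subseteq \tfrac12(\tilde K - \tilde K)$ — if $a \in \tilde K$ and $-a \in \tilde K$ then $2a = a - (-a) \in \tilde K - \tilde K$ — so $(\tilde K - \tilde K) + \tfrac12 L \subseteq \tfrac54(\tilde K - \tilde K)$. Applying \Cref{lem:covering_vol_bound} to the symmetric body $L$, together with the Rogers--Shephard inequality $\vol_n(\tilde K - \tilde K) \le \binom{2n}{n}\vol_n(\tilde K) \le 4^n\vol_n(\tilde K)$, yields
\[
\mathcal{N}(\tilde K - \tilde K, L) \le 2^n \cdot \frac{\vol_n\!\big((\tilde K - \tilde K) + \tfrac12 L\big)}{\vol_n(L)} \le 2^n \cdot \frac{(5/4)^n\,\vol_n(\tilde K - \tilde K)}{2^{-n}\,\vol_n(\tilde K)} \le \big(2 \cdot \tfrac54 \cdot 2 \cdot 4\big)^n = 20^n,
\]
which is $2^{O(n)}$. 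The main obstacle is the middle paragraph: producing a translate of $K$ whose symmetric core $\tilde K \cap (-\tilde K)$ still captures a $2^{-n}$ fraction of the volume is exactly what permits "symmetrizing" $K$ at a merely exponential cost — a naive John-ellipsoid argument, by contrast, would only give $2^{O(n\log n)}$. Everything after that is routine Minkowski-sum and volume bookkeeping.
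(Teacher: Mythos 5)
Your proof is correct. Note that the paper itself gives no argument for this lemma — it is quoted as a standard fact from \cite{agm15} — so there is no internal proof to compare against; what you wrote is essentially the standard argument behind that citation: translate $K$ so that its symmetric core $L=\tilde K\cap(-\tilde K)$ captures a $2^{-n}$ fraction of the volume, then cover $\tilde K-\tilde K$ by translates of the symmetric body $L$ via the volumetric bound of \Cref{lem:covering_vol_bound} together with Rogers--Shephard. Your averaging/convolution argument for producing the good translate $z^\star$ (density of $\tfrac12(X+Y)$ at its best point) is a nice self-contained substitute for invoking the Milman--Pajor fact that the centroid already works, and all the volume bookkeeping checks out, including the containment $(\tilde K-\tilde K)+\tfrac12 L\subseteq\tfrac54(\tilde K-\tilde K)$ and the final constant $20^n$. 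The only ingredient you use beyond the paper's toolkit is the Rogers--Shephard inequality $\vol_n(K-K)\le\binom{2n}{n}\vol_n(K)$, which is classical and fine to cite, consistent with how the paper treats facts of this kind.
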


\begin{theorem}[Duality of Covering Numbers, \cite{km87}] \label{thm:cov_duality}
Given symmetric convex bodies $K, T \subseteq \mathbb{R}^n$, we have
\begin{align*}
2^{-\Theta(n)} \cdot \mathcal{N}(T^\circ, K^\circ) \leq \mathcal{N}(K,T) \leq 2^{\Theta(n)} \cdot \mathcal{N}(T^\circ, K^\circ) .
\end{align*}
\end{theorem}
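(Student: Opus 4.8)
The plan is to sandwich each covering number between volume ratios and then exploit the fact that, up to a factor $2^{\Theta(n)}$, the Mahler volume $\vol_n(L)\,\vol_n(L^\circ)$ of a symmetric convex body does not depend on $L$. This is the ``weak duality'' version (with multiplicative $2^{\Theta(n)}$ slack, equivalently an additive $\Theta(n)$ error on $\log \mathcal{N}$), which is exactly what the stated bound asks for.

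First I would record a two-sided volume estimate for $\mathcal{N}(K,T)$. By \Cref{lem:covering_vol_bound} and the Brunn--Minkowski inequality $\vol_n(K + \tfrac12 T)^{1/n} \le \vol_n(K)^{1/n} + \tfrac12 \vol_n(T)^{1/n}$ (which gives $\vol_n(K+\tfrac12 T) \le 2^n(\vol_n(K) + \vol_n(T))$),
\begin{align*}
\frac{\vol_n(K)}{\vol_n(T)} \ \le\ \mathcal{N}(K,T) \ \le\ 2^n \cdot \frac{\vol_n(K + \tfrac12 T)}{\vol_n(T)} \ \le\ 4^n\Bigl(1 + \frac{\vol_n(K)}{\vol_n(T)}\Bigr).
\end{align*}
Since $K$ and $T$ are symmetric, so are $K^\circ$ and $T^\circ$, and hence the same estimate applies verbatim to the pair $(T^\circ, K^\circ)$. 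Thus, up to a $2^{\Theta(n)}$ factor, $\mathcal{N}(K,T)$ equals $\max\{1, \vol_n(K)/\vol_n(T)\}$ and $\mathcal{N}(T^\circ, K^\circ)$ equals $\max\{1, \vol_n(T^\circ)/\vol_n(K^\circ)\}$.

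Next I would invoke the Blaschke--Santal\'o inequality $\vol_n(L)\,\vol_n(L^\circ) \le \vol_n(B_2^n)^2$ together with the Bourgain--Milman reverse Santal\'o inequality $\vol_n(L)\,\vol_n(L^\circ) \ge c^n\, \vol_n(B_2^n)^2$, both valid for every symmetric convex body $L \subseteq \mathbb{R}^n$ and a universal constant $c > 0$. Applying these to $K$ and to $T$ yields $\vol_n(K)\vol_n(K^\circ) = 2^{\pm\Theta(n)}\, \vol_n(T)\vol_n(T^\circ)$, i.e. $\vol_n(K)/\vol_n(T) = 2^{\pm\Theta(n)}\cdot \vol_n(T^\circ)/\vol_n(K^\circ)$. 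Combining with the previous paragraph and using $\mathcal{N}(T^\circ,K^\circ) \ge \max\{1, \vol_n(T^\circ)/\vol_n(K^\circ)\}$ (the $1$ because any cover has at least one translate, the ratio by \Cref{lem:covering_vol_bound} as $K^\circ$ is symmetric) gives $\mathcal{N}(K,T) \le 2^{\Theta(n)}\,\mathcal{N}(T^\circ, K^\circ)$. The reverse inequality is obtained by running the same argument on the pair $(T^\circ, K^\circ)$ and invoking the polarity theorem $(K^\circ)^\circ = K$, $(T^\circ)^\circ = T$, which holds since convex bodies are closed.

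I expect the one genuinely non-elementary step to be the use of the Bourgain--Milman reverse Santal\'o inequality: without a lower bound on the Mahler volume there is no way to transfer a one-sided volume control from $(K,T)$ to $(T^\circ, K^\circ)$, since a priori $\vol_n(K)\vol_n(K^\circ)$ could be exponentially smaller than $\vol_n(T)\vol_n(T^\circ)$. Everything else is bookkeeping with the volume sandwich; the only small care needed is to keep the additive ``$1+$'' terms, which are harmless because every covering number is at least $1$.
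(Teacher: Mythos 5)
The paper itself offers no proof of this theorem --- it is quoted directly from K\"onig--Milman \cite{km87} --- so the only question is whether your argument stands on its own, and it does not. The step ``$\vol_n(K+\tfrac12 T)^{1/n}\le \vol_n(K)^{1/n}+\tfrac12\vol_n(T)^{1/n}$'' is the \emph{reverse} of Brunn--Minkowski (which gives $\ge$), and the conclusion $\vol_n(K+\tfrac12 T)\le 2^n(\vol_n(K)+\vol_n(T))$ is false in general: take $K=[-R,R]\times[-\tfrac1R,\tfrac1R]$ and $T$ its $90^\circ$ rotation in $\setR^2$; both have volume $4$, yet $K+\tfrac12T\supseteq[-R,R]\times[-\tfrac R2,\tfrac R2]$ has volume $2R^2\to\infty$. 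The same body $K$ with $T=B_\infty^2$ also kills the derived claim that $\mathcal{N}(K,T)=2^{\pm\Theta(n)}\max\{1,\vol_n(K)/\vol_n(T)\}$: the volume ratio is $1$ while $\mathcal{N}(K,T)\gtrsim R$ is unbounded for fixed $n=2$. Covering numbers are simply not determined by volume ratios up to $2^{\Theta(n)}$, so the Mahler-volume transference in your second half has nothing correct to act on. (A reverse Brunn--Minkowski inequality does exist, but only after placing both bodies in $M$-position by volume-preserving linear maps, which is itself a deep theorem and cannot be applied to arbitrary given $K,T$.)

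Your instinct that Bourgain--Milman is the one non-elementary input is right, and the proof in \cite{km87} is indeed a volume argument --- but the Minkowski sum must be kept, not discarded. The correct two-sided characterization is $2^{-n}\,\vol_n(K+T)/\vol_n(T)\le\mathcal{N}(K,T)\le 2^n\,\vol_n(K+T)/\vol_n(T)$: the upper bound is \Cref{lem:covering_vol_bound} with $K+\tfrac12T\subseteq K+T$, and the lower bound holds because $N$ translates of $T$ covering $K$ yield $N$ translates of $2T$ covering $K+T$. It then suffices to show $\vol_n(K+T)/\vol_n(T)=2^{\pm\Theta(n)}\,\vol_n(K^\circ+T^\circ)/\vol_n(K^\circ)$, which one chains as follows: $(K\cap T)^\circ=\conv(K^\circ\cup T^\circ)$ gives $(K\cap T)^\circ\subseteq K^\circ+T^\circ\subseteq 2(K\cap T)^\circ$, hence $\vol_n(K^\circ+T^\circ)=2^{\pm\Theta(n)}\vol_n((K\cap T)^\circ)$; Santal\'o and Bourgain--Milman (applied to $K$ and to $K\cap T$) convert $\vol_n((K\cap T)^\circ)/\vol_n(K^\circ)$ into $2^{\pm\Theta(n)}\vol_n(K)/\vol_n(K\cap T)$; and the symmetric Rogers--Shephard inequality $\vol_n(K)\vol_n(T)\le\vol_n(K+T)\vol_n(K\cap T)\le\binom{2n}{n}\vol_n(K)\vol_n(T)$ (both sides follow from the identity $\int\vol_n(K\cap(x-T))\,dx=\vol_n(K)\vol_n(T)$, whose integrand is supported on $K+T$ and, by Brunn--Minkowski and symmetry, maximized at $x=0$) identifies $\vol_n(K)/\vol_n(K\cap T)$ with $\vol_n(K+T)/\vol_n(T)$ up to $4^n$. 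Applying the first step to the pair $(T^\circ,K^\circ)$ closes the loop.
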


We will also need the following upper bound on the covering numbers of Schatten balls\footnote{We note that \cite{hpv17} claims the bound only up to a constant depending on $p$ and $q$, but their argument readily gives a universal constant in the regime of $p, q \ge 1$.}.

\begin{theorem}[\cite{hpv17}, Theorem 1.1] \label{thm:entropy_number_Schatten} Let $m, n \in \N$ and $1 \le p \le q \le \infty$. Then we have % there exists a universal constant $C> 0$ so that 
\begin{align*}
\mathcal{N}\Big(B_{S_p}^m, \min\Big(1,\frac{m}{n}\Big)^{1/p - 1/q} B_{S_q}^m\Big) \le 2^{O(n)}.
\end{align*}
\end{theorem}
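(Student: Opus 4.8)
The final statement to prove is \Cref{thm:entropy_number_Schatten}, the covering number bound for Schatten balls from \cite{hpv17}. Let me sketch a proof strategy.

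\textbf{Plan.} The goal is to show $\mathcal{N}(B_{S_p}^m, \min(1, m/n)^{1/p-1/q} B_{S_q}^m) \le 2^{O(n)}$ for $1 \le p \le q \le \infty$. The plan is to reduce the matrix (Schatten) statement to the known commutative analogue for $\ell_p$--$\ell_q$ covering numbers by passing through a net on the orthogonal group (or on flags of subspaces), exploiting the fact that every matrix in $B_{S_p}^m$ has a singular value decomposition whose singular value vector lies in $B_p^m$.

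\textbf{Key steps.} First, I would recall the commutative bound: for the diagonal case, $\mathcal{N}(B_p^m, \min(1, m/n)^{1/p-1/q} B_q^m) \le 2^{O(n)}$, which follows from volume estimates (\Cref{lem:covering_vol_bound}) combined with the standard computation of volumes of $\ell_p$ balls, or can be cited directly. Second, for the matrix case with $n \ge m$: since $\min(1,m/n)^{1/p-1/q} = (m/n)^{1/p-1/q}$, one can first cover $B_{S_p}^m$ by a $2^{O(n)}$-net in, say, $B_{S_2}^m$ scaled appropriately and then interpolate; but more directly, write $A = U \Sigma V^\top$ with $\sigma(A) \in B_p^m$, take an $\varepsilon$-net $\mathcal{M}$ of the orthogonal group $O(m)$ in operator norm (which has size $(C/\varepsilon)^{O(m^2)}$) for both $U$ and $V$, and a net for the singular values $\Sigma$ coming from the commutative covering bound. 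The product net has size $(C/\varepsilon)^{O(m^2)} \cdot 2^{O(n)}$, and since $n \ge m \ge \sqrt{m^2}$... — here one needs $n \gtrsim m^2$ for the orthogonal-group net to be absorbed, which is \emph{not} always the case, so this crude approach only works in a limited regime. The correct argument (as in \cite{hpv17}) is subtler: one uses that the \emph{unitarily invariant} structure lets you fix a single basis and handle the ``eigenvalue part'' via the commutative bound, while the ``rotation part'' is controlled by a more careful chaining/duality argument rather than a brute-force net on $O(m)$. An alternative that avoids $O(m)$ nets entirely: use \Cref{thm:cov_duality} to dualize — $\mathcal{N}(B_{S_p}^m, \lambda B_{S_q}^m) \le 2^{\Theta(n)} \mathcal{N}(B_{S_{q'}}^m, \lambda^{-1} B_{S_{p'}}^m)$ where $p', q'$ are conjugate exponents — and then combine with an interpolation/factorization through $B_{S_2}^m = B_F^m$, where the covering number of a Euclidean ball by a Euclidean ball is trivially computed, using the Sudakov-type and Urysohn-type inequalities valid for Schatten norms.

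\textbf{Cleanest route.} Honestly, I would prove the Euclidean anchor case first: $\mathcal{N}(B_{S_2}^m, t B_{S_2}^m) \le (1 + 2/t)^{m^2} = 2^{O(n)}$ when $t \ge $ const and $m^2 \le n$ — but this fails when $n < m^2$. For the interesting regime $\sqrt{n} \lesssim m$, i.e. $n \lesssim m^2$, one genuinely needs the Schatten-norm geometry. The right tool is Sudakov minoration and its dual (Pajor--Tomczak-Jaegermann) for the Schatten classes, together with the known estimate on the expected Schatten-$q$ norm of a Gaussian matrix (which is $\Theta(\sqrt{m} \cdot m^{1/q})$ up to lower-order terms for $q < \infty$, and $\Theta(\sqrt m)$ for $q = \infty$), to bound the Gaussian mean width of $B_{S_p}^m$ under the $B_{S_q}^m$ norm; then dual Sudakov gives $\log \mathcal{N}(B_{S_p}^m, t B_{S_q}^m) \lesssim (w/t)^2$ for the relevant $w$, and plugging in $t = \min(1,m/n)^{1/p-1/q}$ yields the bound $2^{O(n)}$.

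\textbf{Main obstacle.} The hard part will be getting the covering bound in the regime $n < m^2$ without losing factors polynomial in $m$: naive $\varepsilon$-nets on $O(m)$ cost $2^{\Theta(m^2)}$, which is too large. Controlling this requires either (i) the refined chaining argument of \cite{hpv17} that exploits unitary invariance to reduce to a lower-dimensional problem, or (ii) the dual-Sudakov approach with sharp estimates on $\E\|G\|_{S_q}$ for Gaussian $G$. Since the paper cites this as a black box from \cite{hpv17}, I would present the dual-Sudakov route as the conceptual proof sketch and defer the tight constant-tracking (the remark in the footnote about the universal constant in the regime $p, q \ge 1$) to a careful but routine verification. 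I expect the step of bounding $\E \|G\|_{S_q}$ and feeding it through dual Sudakov to be where all the real work concentrates.
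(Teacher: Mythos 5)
The paper does not actually prove this statement: it is imported verbatim as Theorem~1.1 of \cite{hpv17} (with a footnote about making the constant universal), so there is no in-paper argument to compare against. Judged on its own terms, your proposal is a survey of candidate strategies rather than a proof, and the one route you commit to at the end --- dual Sudakov --- does not give the stated bound in the regime that matters.

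Concretely: to apply the Pajor--Tomczak-Jaegermann inequality $\log \mathcal{N}(B_2^d, tK) \lesssim (\E\|g\|_K/t)^2$ you must first pass from $B_{S_p}^m$ to a Euclidean (Frobenius) ball, e.g.\ via $B_{S_p}^m \subseteq m^{1/2-1/p} B_{S_2}^m$ for $p \ge 2$. With $\E\|G\|_{S_q} \asymp m^{1/2+1/q}$ this yields $\log \mathcal{N}(B_{S_p}^m, t B_{S_q}^m) \lesssim (m^{1-1/p+1/q}/t)^2$. Setting $t = (m/n)^{1/p-1/q}$ and writing $\alpha := 1/p - 1/q$, the requirement that this be $O(n)$ becomes $(1-2\alpha)\log m^2 \le (1-2\alpha)\log n$, which for $\alpha < 1/2$ forces $n \gtrsim m^2$ --- exactly the regime where a brute-force volumetric net on $\setR^{m \times m}$ already works. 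In the critical window $m \le n \ll m^2$ (which is where the theorem is actually invoked in \Cref{lem:spsq_cover} and \Cref{thm:entropy_net}) the dual-Sudakov route provably falls short, and for $p < 2$ the containment in $B_{S_2}^m$ fails outright. You correctly sense that naive $\varepsilon$-nets on $O(m)$ cost $2^{\Theta(m^2)}$ and that ``unitary invariance must be exploited,'' but that is precisely the missing idea: the argument of \cite{hpv17} splits a matrix in $B_{S_p}^m$ by dyadic levels of its singular values, observes that the level with singular values of order $2^{-j}$ has rank $r_j \lesssim 2^{jp}$, covers each such low-rank piece using a net on the rank-$r_j$ part of the Stiefel manifold of size $2^{O(r_j m)}$ (not $2^{O(m^2)}$) together with the commutative Sch\"utt-type bound for the singular value vector, and then sums the contributions. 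Without that low-rank decomposition, none of the routes you list closes the gap, so the proposal as written does not establish the theorem.
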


\medskip
\noindent \textbf{Gaussian Measure.} We use $\gamma_n(\cdot)$ to denote the standard Gaussian measure on $\mathbb{R}^n$. Gaussian measure is log-concave, i.e. $\gamma_n(\lambda A + (1-\lambda) B ) \geq \gamma_n(A)^{\lambda} \gamma_n(B)^{1 - \lambda}$ for any compact subsets $A, B \subseteq \mathbb{R}^n$. In particular, by taking $A = -x + K$ and $B = x + K$ for any $x \in \mathbb{R}^n$ and symmetric convex body $K$, and $\lambda = 1/2$, we have the following lemma.

\begin{lemma}[Translation Decreases Gaussian Measure] \label{lem:translate_gauss}
Given any symmetric convex body $K \subseteq \mathbb{R}^n$ and $x \in \mathbb{R}^n$, we have $\gamma_n(K) \geq \gamma_n(x + K)$.  
\end{lemma}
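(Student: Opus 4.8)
The plan is to derive this directly from the log-concavity of the Gaussian measure, exactly as the surrounding text hints. The key algebraic identity to establish is that $K$ is the Minkowski average $\frac12 A + \frac12 B$ of the two translates $A := -x + K$ and $B := x + K$.

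First I would verify the set identity. Any point of $\frac12 A + \frac12 B$ has the form $\frac12(-x + k_1) + \frac12(x + k_2) = \frac12 k_1 + \frac12 k_2$ for some $k_1, k_2 \in K$, and this lies in $K$ by convexity; conversely, taking $k_1 = k_2 = k$ shows every $k \in K$ is of this form, so $\frac12 A + \frac12 B = K$. Next I would record that $\gamma_n(A) = \gamma_n(B)$: since $K$ is symmetric we have $-x + K = -x - K = -(x + K)$, and the standard Gaussian density is invariant under $y \mapsto -y$, so $\gamma_n(-x+K) = \gamma_n(-(x+K)) = \gamma_n(x+K)$.

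Then I would apply log-concavity of $\gamma_n$ (stated just above the lemma, $\gamma_n(\lambda A + (1-\lambda)B) \ge \gamma_n(A)^\lambda \gamma_n(B)^{1-\lambda}$) with $\lambda = 1/2$ to the compact convex sets $A$ and $B$, obtaining
\[
\gamma_n(K) = \gamma_n\!\left(\tfrac12 A + \tfrac12 B\right) \ge \gamma_n(A)^{1/2}\,\gamma_n(B)^{1/2} = \gamma_n(x + K),
\]
which is the claim.

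There is no real obstacle here; the only points requiring minor care are the Minkowski-sum manipulation $\frac12 A + \frac12 B = K$ (as opposed to the naive ``$\frac12(-x) + \frac12 x + K$'' which would need $K + K = 2K$, true by convexity but best avoided by the direct argument above) and the use of reflection symmetry of the Gaussian to equate $\gamma_n(A)$ and $\gamma_n(B)$. Both are routine, and the whole proof is a couple of lines once the identity is in place.
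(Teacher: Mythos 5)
Your proof is correct and is exactly the argument the paper intends: the sentence preceding the lemma already specifies taking $A = -x+K$, $B = x+K$, $\lambda = 1/2$ in the log-concavity inequality, and you have simply filled in the routine verifications ($\tfrac12 A + \tfrac12 B = K$ by convexity, $\gamma_n(A)=\gamma_n(B)$ by symmetry of $K$ and of the Gaussian). Nothing further is needed.
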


We also use the following powerful Gaussian correlation inequality. 

\begin{theorem}[Gaussian Correlation Inequality, \cite{r14gaussian}] \label{thm:Gaussian_correlation}
Given any symmetric convex sets $K,T \subseteq \mathbb{R}^n$, we have $\gamma_n(K \cap T) \geq \gamma_n(K) \cdot \gamma_n(T)$. 
\end{theorem}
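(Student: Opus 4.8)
The plan is to reproduce Royen's argument, since no essentially different proof is known. \textbf{Step 1 (reduction to slabs).} I would first reduce to the case where $K$ and $T$ are finite intersections of symmetric slabs. Every closed symmetric convex set equals the intersection of all symmetric slabs $\{x : |\langle a, x\rangle| \le \sup_{y \in K}\langle a,y\rangle\}$ containing it, so taking a countable dense family of directions and finite sub-intersections $K_p \downarrow K$, $T_q \downarrow T$, continuity of $\gamma_n$ under decreasing intersections gives $\gamma_n(K_p) \to \gamma_n(K)$ and $\gamma_n(K_p \cap T_q) \to \gamma_n(K \cap T)$; hence it suffices to treat $K = \{x : |\langle a_i,x\rangle| \le c_i,\ i \le p\}$, $T = \{x : |\langle b_j,x\rangle| \le d_j,\ j \le q\}$. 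Writing $Z_i := \langle a_i, g\rangle$, $W_j := \langle b_j, g\rangle$ for $g \sim N(0,I_n)$, the vector $(Z,W)$ is centered Gaussian with some covariance $\Sigma$, and the claim becomes
\[
\Pr\big[\,|Z_i| \le c_i\ \forall i,\ |W_j| \le d_j\ \forall j\,\big]\ \ge\ \Pr\big[\,|Z_i| \le c_i\ \forall i\,\big]\cdot\Pr\big[\,|W_j| \le d_j\ \forall j\,\big].
\]
By further approximation (perturb $\Sigma$ to $\Sigma + \varepsilon I$, truncate infinite thresholds) I may assume $\Sigma \succ 0$ and all thresholds finite.

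\textbf{Step 2 (covariance interpolation).} Write $\Sigma$ in block form with blocks indexed by the $Z$- and $W$-coordinates, and for $\rho \in [0,1]$ let $\Sigma_\rho$ be $\Sigma$ with its off-diagonal (cross-covariance) block scaled by $\rho$; equivalently $\Sigma_\rho = (1-\rho)\,\Sigma_0 + \rho\,\Sigma$ where $\Sigma_0$ is block-diagonal, so $\Sigma_\rho \succeq 0$ for all $\rho \in [0,1]$. Set $F(\rho) := \Pr_{X \sim N(0,\Sigma_\rho)}[\,|X_k| \le t_k\ \forall k\,]$ with $(t_k)$ listing the $c_i$'s and $d_j$'s. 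Then $F(0)$ factors into the product of the two marginal probabilities (the groups are independent under the block-diagonal $\Sigma_0$) while $F(1)$ is the joint probability, so the theorem is equivalent to $F$ being nondecreasing on $[0,1]$, for which it suffices to prove $F'(\rho) \ge 0$.

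\textbf{Step 3 (the derivative and Royen's identity).} Differentiating under the integral with the classical heat-equation identity $\partial_{\Sigma_{k\ell}}\phi_\Sigma = \partial^2_{x_k x_\ell}\phi_\Sigma$ for $k \ne \ell$ yields
\[
F'(\rho) = \sum_{k \in Z\text{-block},\ \ell \in W\text{-block}} \Sigma_{k\ell}\int_{\prod_m [-t_m,\,t_m]} \partial^2_{x_k x_\ell}\phi_{\Sigma_\rho}(x)\,dx,
\]
and the obstruction is that the individual summands can be negative, so nothing about the sign is visible here. Royen's resolution is to pass to squared variables: $\Pr[|X_k| \le t_k\ \forall k] = \Pr[X_k^2 \le t_k^2\ \forall k]$, and $(X_1^2,\dots,X_N^2)$ follows a ``central multivariate gamma'' (infinitely divisible) law with shape parameter $\tfrac12$ and scale matrix $\Sigma_\rho$. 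Expressing $F(\rho)$ through the distribution function of that gamma law and invoking Royen's differentiation formula for it with respect to the scale-matrix entries, the $\rho$-derivative rewrites as an absolutely convergent series/integral whose terms are all products of nonnegative quantities — values of lower-dimensional gamma distribution functions against nonnegative weights coming from $\dot\Sigma_\rho = \Sigma - \Sigma_0$ — whence $F'(\rho) \ge 0$.

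\textbf{Main obstacle.} The whole content lies in Step 3. Steps 1 and 2 are routine: the slab reduction is standard and the covariance interpolation is the same device behind Slepian/Gordon-type comparison inequalities, and in the special cases solved before Royen (two slabs in the plane; one set a slab, by Khatri--\v{S}id\'ak; one set an ellipsoid, by Harg\'e) one can finish by elementary means. In the general case, however, naive integration by parts on the Gaussian density reveals no sign, and it is precisely the nonnegativity in Royen's gamma-distribution identity — that differentiating the multivariate gamma distribution function in the cross-block scale entries produces only nonnegative contributions — that closes the argument. I would therefore devote the bulk of the write-up to stating and proving that nonnegativity lemma in the gamma representation, and only afterwards assemble Steps 1--3.
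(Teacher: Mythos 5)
This statement is not proved in the paper at all: it is imported as a black-box citation to Royen's theorem (the reference \cite{r14gaussian}), so the only ``proof'' the paper offers is the citation itself. Measured against that, your proposal is a roadmap of Royen's argument rather than a proof. Steps 1 and 2 are fine and standard: the reduction to finitely many symmetric slabs with continuity of $\gamma_n$ under decreasing intersections, the reformulation as a statement about a centered Gaussian vector $(Z,W)$, the interpolation $\Sigma_\rho=(1-\rho)\Sigma_0+\rho\,\Sigma$ (which is indeed PSD as a convex combination), and the observation that it suffices to show $F'(\rho)\ge 0$, using Plackett's identity $\partial_{\Sigma_{k\ell}}\phi_\Sigma=\partial^2_{x_kx_\ell}\phi_\Sigma$. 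But all of this was known for decades before Royen and, as you yourself note, yields no sign information.

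The genuine gap is Step 3, which is the entire content of the theorem and which you only name, not prove. You assert that passing to the squared variables, viewing $(X_1^2,\dots,X_N^2)$ as a multivariate gamma law with shape $\tfrac12$, and ``invoking Royen's differentiation formula'' rewrites $F'(\rho)$ as a sum of manifestly nonnegative terms --- but that differentiation formula and the nonnegativity of its terms are precisely what has to be established (via the Laplace-transform representation of the multivariate gamma distribution function and a careful computation of the derivative in the cross-block entries, as in Royen's paper or the Lata\l{}a--Matlak exposition). Without stating and proving that lemma, nothing beyond the pre-Royen reductions has been shown, so the proposal as written does not constitute a proof of \Cref{thm:Gaussian_correlation}; it is an accurate plan whose decisive ingredient is still outstanding. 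If the intent is merely to use the result, the honest course (and the paper's) is to cite Royen; if the intent is a self-contained proof, the gamma-representation lemma must be written out in full.
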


\section{Our Framework for Partial Coloring}
\label{sec:framework}

\subsection{Partial Coloring via Covering Numbers}
\label{sec:partial_coloring_via_covering}

Given symmetric matrices $A_1, \dots, A_n \in \mathbb{R}^{m \times m}$, a norm $\| \cdot \|$ on $\mathbb{R}^{m \times m}$ for measuring the discrepancy, and a target discrepancy bound $t$, let $D := \{x \in \mathbb{R}^n: \|\sum_{i=1}^n x_i A_i \| \leq t\}$ be the associated discrepancy body. 
The following partial coloring lemma from \cite{rr20} states that one can efficiently find a partial coloring with discrepancy $O(t)$ as long as $\gamma_n(D) \geq 2^{-O(n)}$.

\begin{theorem}[\cite{rr20}, special case of Theorem 6] \label{thm:partial_coloring}
For any constant $\alpha > 0$, there is a constant $c:= c(\alpha) > 0$ and a randomized polynomial time algorithm that for a symmetric convex set $D \subseteq \mathbb{R}^n$  with $\gamma_n(D) \geq 2^{-\alpha n}$ and a shift $y \in (-1,1)^n$, finds $x \in (c\cdot D) \cap [-1,1]^n$ so that $x+y \in [-1,1]^n$ and $|\{i \in [n]: |(x+y)_i|=1\}| \geq n/2$. 
\end{theorem}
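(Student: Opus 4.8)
This is the algorithmic partial-coloring lemma, quoted from \cite{rr20}; the plan is to recall its proof, which belongs to the Gluskin--Rothvoss lineage \cite{g89,b10,r17,rr20}. First I would dispose of the shift and normalize. Since $0\in D$ and $D$ is symmetric, scaling up only increases Gaussian measure, so I may search inside $Q:=(cD)\cap[-1,1]^n$ for a constant $c=c(\alpha)$ fixed later; a general $y\in(-1,1)^n$ is absorbed by translation, since asking for $x$ with $x,x+y\in[-1,1]^n$, $x\in cD$ and at least $n/2$ coordinates of $x+y$ tight is the same as asking for a point of the symmetric body $(cD)\cap[-1,1]^n\cap([-1,1]^n-y)$ lying on a prescribed family of facets of the box $[-1,1]^n\cap([-1,1]^n-y)$, and a coordinate reflection (which preserves $\gamma_n$) lets me take $y\ge 0$. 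By the Gaussian correlation inequality (\Cref{thm:Gaussian_correlation}) and $\gamma_n([-1,1]^n)=\gamma_1([-1,1])^n\ge 2^{-n}$, this body still has Gaussian measure at least $\gamma_n(D)\cdot\gamma_n([-1,1]^n)\ge 2^{-(\alpha+1)n}$, reducing the task to: given a symmetric convex body of Gaussian measure at least $2^{-\alpha'n}$ inside a box $B$ with $0$ in its interior, find a point of it on at least $n/2$ prescribed facets of $B$.

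The algorithm would then sample $g\sim N(0,I_n)$ and output the point $x$ maximizing $\langle g,\cdot\rangle$ over the body, with ties broken by an infinitesimal generic perturbation so that $x$ lies in the relative interior of a unique face (equivalently, run a Brownian motion from $0$ reflected off the body's boundary that freezes each coordinate as it reaches a facet of $B$); either primitive is polynomial-time given the separation oracle for $D$ and the explicit constraints of $B$, and $\mathrm{poly}(n)$ repetitions boost the success probability to $1-2^{-\Omega(n)}$. The analysis, which is the substantive part, bounds $\Pr[|J|>n/2]$ --- where $J=J(g)$ is the set of ``free'' coordinates of $x$, i.e.\ those not on a facet of $B$ --- by a union bound over the $2^n$ candidate sets $S$ with $|S|>n/2$. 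On the event $\{J=S\}$ the maximizer $x$ lies in the relative interior of the face obtained by fixing the coordinates outside $S$, hence locally maximizes $\langle g,\cdot\rangle$ over the $|S|$-dimensional convex slice $(cD)\cap(x+\mathbb{R}^S)$, so the orthogonal projection $P_S g$ onto $\mathbb{R}^S$ lies in the outer normal cone of that slice at $x$; since $\|P_S g\|_2^2\approx|S|>n/2$ with overwhelming probability, this forces the slice of $cD$ through $x$ to be ``thin'' along the directions $\mathbb{R}^S$. The hypothesis $\gamma_n(D)\ge 2^{-\alpha n}$ --- equivalently, through the covering-number and volume reformulations (\Cref{thm:cov_duality}, \Cref{lem:covering_vol_bound}, \Cref{lem:symmetrization_covering}, i.e.\ \Cref{lem:dual_cov}) --- is exactly what forbids $cD$ from being this thin in more than an $O(\alpha)$-fraction of coordinate directions once $c=c(\alpha)$ is large enough, giving $\Pr[J=S]\le 2^{-2n}$, which the union bound absorbs.

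The hardest part is this last step. First, $D$ is not assumed polyhedral, so I cannot enumerate faces of the truncated body directly and must instead control an \emph{arbitrary} convex slice of $cD$ inside each free subspace $\mathbb{R}^S$; this is where the Gaussian-measure hypothesis must be used in its strong simultaneous form, ``not thin in more than an $\alpha$-fraction of directions at once''. Second, the constant $c$ must depend on $\alpha$ alone and not on the geometry of $D$: one cannot reach this by rescaling $D$ so that $\gamma_n(cD)$ becomes close to $1$, because a body such as $\{x:|x_1|\le 2^{-\alpha n}\}\cap\{x:\|x\|_2\le 100\sqrt n\}$ has Gaussian measure at least $2^{-\alpha n}$ yet stays exponentially thin under any constant rescaling, so the argument has to route through the facets of $B$ rather than through the size of $cD$.
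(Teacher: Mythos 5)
The paper does not actually prove this statement: it is imported from \cite{rr20} as a special case of their Theorem~6 and used as a black box, so there is no in-paper proof to compare against. Your proposal must therefore stand on its own as a reconstruction of the Reis--Rothvoss argument. The architecture you describe (a Gaussian optimization primitive over $(cD)\cap[-1,1]^n$, a union bound over candidate sets of free coordinates, and the Gaussian-measure hypothesis ruling out simultaneous thinness in many coordinate directions) is the right family of ideas, and your closing remark about why a constant rescaling of $D$ cannot absorb the measure hypothesis is a correct and important observation.

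However, the step you yourself flag as the substantive one contains a genuine gap. You deduce that on $\{J=S\}$ the vector $P_S g$ lies in the outer normal cone of the slice $(cD)\cap(x+\mathbb{R}^S)$ at $x$, and then claim that $\|P_S g\|_2^2\approx|S|$ ``forces the slice to be thin.'' A normal cone is a cone: at any boundary point of any convex body it contains vectors of every norm, so the presence of a normal vector of norm $\sqrt{|S|}$ carries no information about the width of the slice. The quantity that actually encodes thinness in the projection-based proofs of \cite{r17,rr20} is the residual $\|g-\Pi_Q(g)\|_2$, the \emph{distance} from $g$ to the body, controlled via Gaussian isoperimetry from $\gamma_n(Q)\ge 2^{-O(n)}$; in the linear-maximization variant of \cite{es18} the analysis is entirely different, controlling the expectation of the max functional and its coordinate derivatives rather than doing a union bound over faces. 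Your sketch adopts the linear-maximization algorithm but the projection-style analysis, and supplies neither in working form; moreover, even granting thinness of the slice, converting it into $\Pr[J=S]\le 2^{-2n}$ requires handling the dependence of the slice's location $x$ on the random $g$ (e.g.\ by a net or a conditioning argument), which is absent. A secondary issue: the shifted box $[-1,1]^n\cap([-1,1]^n-y)=\prod_i[-1,1-y_i]$ is not symmetric, so \Cref{thm:Gaussian_correlation} does not directly yield your $2^{-(\alpha+1)n}$ lower bound, and coordinates with $y_i$ near $1$ need separate treatment --- which is precisely why \cite{rr20} proves the shifted statement directly rather than reducing it to the unshifted one.
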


We have the following corollary for full colorings. Here $K_S := K \cap \{x \in \setR^n : x_i = 0, \forall i \notin S\}$.

\begin{corollary} \label{cor:full_coloring} Let $K \subseteq \mathbb{R}^n$ be a symmetric convex set. Given a function $f : [n] \to \setR_{> 0}$ with $\gamma_S (f(|S|) \cdot K_S) \ge 2^{-O(|S|)}$ for every $S \subseteq [n]$, there exists a randomized polynomial time algorithm to find a full coloring $x \in \{\pm 1\}^n$ so that $x \in \lambda K$, where $\lambda \lesssim \sum_{i=0}^{\lfloor \log n \rfloor} f(n/2^i)$. In particular, when $f(n) \lesssim n^{\beta}$ for some $\beta \le 1$, we have $\lambda \lesssim \frac{1}{\beta} n^\beta$.
\end{corollary}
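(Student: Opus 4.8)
The plan is to obtain the full coloring by iterating the partial-coloring routine of \Cref{thm:partial_coloring}, each time restricting to the coordinate subspace spanned by the coordinates that are still fractional. I would maintain a fractional point $y \in [-1,1]^n$, initialized at $y = 0$, together with its ``alive'' set $S := \{i \in [n] : |y_i| < 1\}$, initialized at $S = [n]$. In a generic round I would identify $K_S$ and the shift $y|_S$ with subsets of $\mathbb{R}^S \cong \mathbb{R}^{|S|}$ and apply \Cref{thm:partial_coloring} in dimension $|S|$ to the body $D := f(|S|)\cdot K_S$: the hypothesis gives $\gamma_{|S|}(D) \ge 2^{-\alpha|S|}$ for a single absolute constant $\alpha$, and $y|_S \in (-1,1)^S$ exactly because those coordinates are alive, so the routine returns an update $x_S \in (c\cdot f(|S|))\, K_S \cap [-1,1]^S$ with $y|_S + x_S \in [-1,1]^S$ and at least $|S|/2$ of the coordinates of $y|_S + x_S$ pinned to $\pm 1$. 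I would then extend $x_S$ by zeros outside $S$, set $y \leftarrow y + x_S$, recompute $S$, and repeat. Since the alive set at least halves each round, after at most $\lfloor \log n\rfloor + 1$ rounds $S = \emptyset$, so $x := y \in \{\pm 1\}^n$; with $O(\log n)$ rounds each running in randomized polynomial time, so does the whole procedure.

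To bound $\lambda$, write $S^{(\ell)}$ for the alive set in round $\ell$ and $x^{(\ell)}$ for the zero-extended update, so that $x = \sum_\ell x^{(\ell)}$ and $x^{(\ell)} \in (c\, f(|S^{(\ell)}|))\, K_{S^{(\ell)}} \subseteq (c\, f(|S^{(\ell)}|))\, K$ since $K_{S^{(\ell)}} \subseteq K$. I would then use that for a symmetric convex set $K$, points $a_\ell \in r_\ell K$ with $r_\ell > 0$ satisfy $\sum_\ell a_\ell \in (\sum_\ell r_\ell)\, K$ — seen by writing $(\sum_\ell r_\ell)^{-1}\sum_\ell a_\ell$ as a convex combination of the points $a_\ell/r_\ell \in K$ — to conclude $x \in \lambda K$ with $\lambda = c\sum_\ell f(|S^{(\ell)}|)$. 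Since $|S^{(\ell)}| \le n/2^\ell$ and $f$ is non-decreasing (which may be assumed, and holds in every application of this corollary), this is at most $c\sum_{i=0}^{\lfloor \log n\rfloor} f(n/2^i) \lesssim \sum_{i=0}^{\lfloor \log n\rfloor} f(n/2^i)$. For the final claim, substituting $f(k)\lesssim k^\beta$ with $\beta \le 1$ gives $\lambda \lesssim n^\beta \sum_{i\ge 0} 2^{-i\beta} = \frac{n^\beta}{1-2^{-\beta}} \lesssim \frac{1}{\beta}\, n^\beta$.

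The argument is mostly bookkeeping once \Cref{thm:partial_coloring} is in hand, and I do not anticipate a serious obstacle. The point that needs genuine attention is that the Gaussian-measure hypothesis must transfer, with a uniform constant, to every restricted instance living in $\mathbb{R}^S$; this is precisely what the quantifier ``for every $S \subseteq [n]$'' in $\gamma_S(f(|S|)\,K_S)\ge 2^{-O(|S|)}$ supplies, so that a single $c = c(\alpha)$ from \Cref{thm:partial_coloring} works throughout. One should also verify that the per-round pinning of coordinates composes into a single point of $\{\pm 1\}^n$ (a coordinate, once frozen, is untouched by later updates, and no update can push an alive coordinate past $\pm 1$ by the in-box guarantee of \Cref{thm:partial_coloring}), and that the shrinking instances cause no trouble down to $|S| = 1$, where the ``$\ge |S|/2$'' guarantee forces the last coordinate to freeze.
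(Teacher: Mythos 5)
Your proposal is correct and is essentially the paper's own argument: iterate \Cref{thm:partial_coloring} on the shrinking alive set, note the set at least halves each round, and sum the per-round radii $c\,f(|S^{(\ell)}|)$ using convexity of $K$. The paper's version is just a terser statement of the same iteration (it even glosses over the monotonicity point you flag when replacing $f(|S^{(\ell)}|)$ by $f(n/2^\ell)$), so no further comparison is needed.
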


\begin{proof}
Indeed, repeated iterations of \Cref{thm:partial_coloring} with $y_0 := 0$ and subsequent shifts $y_{i+1}$ being the coordinates not reaching $\{-1,1\}$ find $x := x_0 + \dots + x_T \in \{\pm 1\}^n$ for $T:= \lfloor \log n\rfloor$ with $x_t \in O(f(n/2^t)) \cdot K$. When $f(n) \lesssim n^{\beta}$, the summation is upper bounded by
\[
\sum_{i=0}^\infty (n/2^i)^\beta = (1-2^{-\beta})^{-1} \cdot n^\beta \lesssim  \frac{1}{\beta} \cdot n^\beta , 
\]
and this proves the statement. 
\end{proof}

We show that a $2^{-O(n)}$ Gaussian measure lower bound is equivalent to a $2^{O(n)}$ upper bound for certain covering numbers. 

\begin{lemma} \label{lem:dual_cov}
The following conditions are equivalent for a symmetric convex body $D \subseteq \setR^n$:
\begin{enumerate}
\item $\gamma_n(D) \geq 2^{-O(n)}$, 
\item $\mathcal{N}(\sqrt{n} B_2^n, D) \leq 2^{O(n)}$,
\item $\mathcal{N}(n B_1^n, D) \leq 2^{O(n)}$,
\item $\mathcal{N}(D^\circ, \frac{1}{\sqrt{n}} B_2^n) \leq 2^{O(n)}$, 
\item $\mathcal{N}(D^\circ, \frac{1}{n} B_\infty^n) \leq 2^{O(n)}$. 
\end{enumerate}
\end{lemma}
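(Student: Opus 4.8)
The plan is to prove the five conditions equivalent by establishing a cycle of implications, using the standard facts about covering numbers (Lemmas~\ref{lem:covering_vol_bound}, \ref{lem:symmetrization_covering}, Theorem~\ref{thm:cov_duality}) together with basic properties of the Gaussian measure already recorded in the preliminaries. The backbone will be: $(1)\Rightarrow(2)\Rightarrow(3)\Rightarrow(1)$ to tie the "primal" statements together, and then $(2)\Leftrightarrow(4)$ and $(3)\Leftrightarrow(5)$ via duality of covering numbers, noting that $(\sqrt n B_2^n)^\circ = \frac{1}{\sqrt n}B_2^n$ and $(nB_1^n)^\circ = \frac1n B_\infty^n$.

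For $(1)\Rightarrow(2)$: since $\gamma_n(\sqrt n B_2^n) = \Omega(1)$ (a constant fraction of Gaussian mass lies within radius $\approx\sqrt n$), a volumetric/covering argument gives the bound. Concretely, if $D$ could not be covered by $2^{O(n)}$ translates of itself inside $\sqrt n B_2^n$, then by a packing argument one finds $2^{\omega(n)}$ disjoint translates $x_j + \tfrac12 D$ all contained in $\tfrac32\sqrt n B_2^n$; each has Gaussian measure $\gamma_n(x_j + \tfrac12 D) \le \gamma_n(\tfrac12 D) \le \gamma_n(D)$ after using Lemma~\ref{lem:translate_gauss} and monotonicity, and disjointness forces $2^{\omega(n)} \cdot 2^{-O(n)} \le \gamma_n(\tfrac32\sqrt n B_2^n) \le 1$, a contradiction. (Alternatively, invoke Lemma~\ref{lem:covering_vol_bound} directly to bound $\mathcal N(\sqrt n B_2^n, D)$ by $2^n \vol_n(\sqrt n B_2^n + \tfrac12 D)/\vol_n(\tfrac12 D)$ and compare with a ball of comparable Gaussian mass.) For $(2)\Rightarrow(3)$: this is a containment statement, $n B_1^n \subseteq \sqrt n B_2^n$ fails, so instead note $\mathcal N(n B_1^n, D) \le \mathcal N(n B_1^n, \sqrt n B_2^n)\cdot\mathcal N(\sqrt n B_2^n, D)$ by composing covers, and $\mathcal N(n B_1^n, \sqrt n B_2^n) = \mathcal N(\sqrt n B_1^n, B_2^n) \le 2^{O(n)}$ is a classical fact (e.g. from Lemma~\ref{lem:covering_vol_bound} since $\vol_n(\sqrt n B_1^n)^{1/n} \asymp \vol_n(B_2^n)^{1/n}$ up to constants). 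For $(3)\Rightarrow(1)$: $nB_1^n$ contains $\sqrt n B_2^n$... no; rather, cover $\sqrt n B_2^n$ by translates of $nB_1^n$ first? The cleanest route is: $\gamma_n(D)$ is at least the Gaussian mass of one translate times an averaging argument — if $nB_1^n \subseteq \bigcup_{j=1}^N (x_j + D)$ with $N \le 2^{O(n)}$, then $\gamma_n(nB_1^n) \le \sum_j \gamma_n(x_j + D) \le N\cdot\gamma_n(D)$ by Lemma~\ref{lem:translate_gauss}, and since $\gamma_n(nB_1^n) = \Omega(1)$ (as $\|g\|_1 \approx n\sqrt{2/\pi}$ concentrates below $n$ for Gaussian $g$), we get $\gamma_n(D) \ge 2^{-O(n)}$. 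Finally the duality equivalences $(2)\Leftrightarrow(4)$ and $(3)\Leftrightarrow(5)$ follow directly from Theorem~\ref{thm:cov_duality} applied with the symmetric bodies $(D, \sqrt n B_2^n)$ and $(D, n B_1^n)$ respectively, reading off $(\sqrt n B_2^n)^\circ$ and $(nB_1^n)^\circ$, and absorbing the $2^{\pm\Theta(n)}$ factors into the $2^{O(n)}$ bounds.

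The main obstacle I expect is getting the direction $(1)\Rightarrow(2)$ clean without circularity: the naive "compare volumes" bound from Lemma~\ref{lem:covering_vol_bound} controls $\mathcal N(\sqrt n B_2^n, D)$ in terms of $\vol_n(\sqrt n B_2^n + \tfrac12 D)/\vol_n(D)$, and one must argue that a body of Gaussian measure $\ge 2^{-O(n)}$ has volume not too much smaller than that of $\sqrt n B_2^n$ — but this requires care because a body can have small volume yet decent Gaussian measure only if it is "spread out," which is exactly ruled out by the quantitative form. The robust fix is the packing argument above, which only uses Lemma~\ref{lem:translate_gauss} plus finite additivity of measure over disjoint sets, sidestepping volume comparisons entirely; I would present that version. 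A secondary technical point is making sure the implicit constants in each $2^{O(n)}$ compose correctly through the cycle and through Theorem~\ref{thm:cov_duality}'s $2^{\Theta(n)}$ slack — but since every step loses only a single $2^{O(n)}$ factor and the cycle has constantly many steps, this is routine bookkeeping rather than a real difficulty.
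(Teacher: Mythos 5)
Your overall architecture (a cycle $(1)\Rightarrow(2)\Rightarrow(3)\Rightarrow(1)$ plus the duality equivalences $(2)\Leftrightarrow(4)$, $(3)\Leftrightarrow(5)$ via \Cref{thm:cov_duality}) matches the paper, and the steps $(2)\Rightarrow(3)$, $(3)\Rightarrow(1)$ and the two duality steps are correct as written. The problem is in $(1)\Rightarrow(2)$, which you yourself identify as the crux and for which you commit to the packing argument. That argument, as written, has its key inequality pointing the wrong way. To get a contradiction from $2^{\omega(n)}$ disjoint translates $x_j + \tfrac12 D$ inside $\tfrac32\sqrt{n}B_2^n$, you need a \emph{lower} bound $\gamma_n(x_j + \tfrac12 D) \geq 2^{-O(n)}$ for each $j$, so that summing over disjoint sets exceeds $1$. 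But the chain you write, $\gamma_n(x_j + \tfrac12 D) \le \gamma_n(\tfrac12 D) \le \gamma_n(D)$, is an \emph{upper} bound, and \Cref{lem:translate_gauss} only says translation decreases Gaussian measure --- it gives no lower bound on the measure of a translate, and indeed a translate of $D$ centered far from the origin can have negligible measure. The missing ingredient is a quantitative reverse inequality for translates with $\|x_j\|_2 \lesssim \sqrt{n}$: for symmetric $K$ one has $\gamma_n(x+K) \geq e^{-\|x\|_2^2/2}\gamma_n(K)$ (average the density over $\pm y$ and apply AM--GM), together with $\gamma_n(\tfrac12 D) \geq 2^{-n}\gamma_n(D)$ from rescaling the density. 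With these two facts your packing argument closes, but neither is stated or proved in the proposal, so as it stands the direction $(1)\Rightarrow(2)$ does not go through.

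For comparison, the paper avoids packing entirely: it intersects $D$ with $\sqrt{n}B_2^n$ and uses the Gaussian correlation inequality (\Cref{thm:Gaussian_correlation}) to keep $\gamma_n(D \cap \sqrt{n}B_2^n) \geq 2^{-O(n)}$; since the Gaussian density is at most $1$, this gives $\vol_n(D \cap \sqrt{n}B_2^n) \geq 2^{-O(n)}$, and then the upper bound in \Cref{lem:covering_vol_bound} applies with $\vol_n(\sqrt{n}B_2^n + D') \leq \vol_n(2\sqrt{n}B_2^n)$. This is exactly the fix for the ``spread out'' concern you raise about the volumetric route: after intersecting with the ball the body is bounded, so Gaussian measure lower-bounds Lebesgue volume. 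Either repair works; you just need to supply one of them.
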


\begin{proof}
We start by proving that condition (1) implies (2). 
Suppose $\gamma_n(D) \geq 2^{-O(n)}$, then \Cref{thm:Gaussian_correlation} implies $\gamma_n(D') \geq 2^{-O(n)}$, where we define $D' := D \cap \sqrt{n} B_2^n$. 
We thus also have $\vol_n(D') \geq \gamma_n(D') \geq 2^{-O(n)}$. 
Then by \Cref{lem:covering_vol_bound}, we have
\begin{align*}
\mathcal{N}(\sqrt{n} B_2^n, D) \leq \mathcal{N}(\sqrt{n} B_2^n, D') & \leq 2^n \cdot \frac{\vol_n(\sqrt{n}B_2^n + D')}{\vol_n(D')} \leq 2^{n} \cdot \frac{\vol_n(2\sqrt{n}B_2^n)}{\vol_n(D')} \leq 2^{O(n)} .
\end{align*}
We next show that condition (2) implies (1). Since $\gamma_n(\sqrt{n} B_2^n) = \Omega(1)$, we have $\gamma_n(x + D) \geq 2^{-O(n)}$ for some $x \in \mathbb{R}^n$. 
\Cref{lem:translate_gauss} then gives $\gamma_n(D) \geq \gamma_n(x + D) \geq 2^{-O(n)}$.

The implication $(3) \Rightarrow (2)$ immediately follows from $\sqrt{n}B_2^n \subseteq n B_1^n$. 
To prove the reverse implication $(2) \Rightarrow (3)$, we use \Cref{lem:covering_vol_bound} to obtain 
\begin{align*}
\mathcal{N}(\sqrt{n} B_1^n, B_2^n) \leq 2^{n} \cdot \frac{\vol_n(\sqrt{n} B_1^n+ B_2^n)}{\vol_n(B_2^n)} \leq 2^{O(n)} \cdot \frac{\vol_n(\sqrt{n} B_1^n)}{\vol_n(B_2^n) } \leq 2^{O(n)}.
\end{align*} 
It thus follows that $\mathcal{N}(n B_1^n, D) \leq \mathcal{N}(n B_1^n, \sqrt{n} B_2^n) \cdot \mathcal{N}(\sqrt{n} B_2^n, D) \leq 2^{O(n)}$.

The last two equivalences follow from the duality of covering numbers in \Cref{thm:cov_duality}.
\end{proof}

For our mirror descent framework, we use the following corollary: 

\begin{corollary} \label{partial_covering}
Given matrices $A_1, \dots, A_n \in \setR^{m \times m}$, let $K^\circ_{q+} := \{(\inn{A_1}{U}, \dots, \inn{A_n}{U}) : U \in B^m_{S_q}, U \succeq 0\}$. If we have $\mathcal{N}(K^\circ_{q+}, \frac{t}{n} B^n_\infty) \leq 2^{O(n)}$, then we can efficiently find a partial coloring $x \in [-1,1]^n$ with $|\{i:|x_i| =1\}| \ge n/2$ and $\|\sum_{i=1}^n x_i A_i\|_{S_q} \lesssim t$.
\end{corollary}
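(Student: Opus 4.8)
The plan is to apply the Reis--Rothvoss partial coloring algorithm, \Cref{thm:partial_coloring}, to the scaled discrepancy body $D := tK$ where $K := \{x \in \setR^n : \|\sum_{i=1}^n x_i A_i\|_{S_q} \le 1\}$, with the trivial shift $y = 0$. A membership oracle for $tK$ runs in polynomial time (compute the Schatten-$q$ norm of $\sum_i x_i A_i$), so the only hypothesis that needs checking is $\gamma_n(tK) \ge 2^{-O(n)}$; granting it, the algorithm returns $x \in (c \cdot tK) \cap [-1,1]^n$ with $|\{i : |x_i| = 1\}| \ge n/2$, and $x \in ctK$ says precisely $\|\sum_i x_i A_i\|_{S_q} \le ct \lesssim t$, which is the desired partial coloring.

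To verify $\gamma_n(tK) \ge 2^{-O(n)}$ I would use the equivalence of conditions (1) and (5) in \Cref{lem:dual_cov} applied to $D = tK$: it suffices to show $\mathcal{N}((tK)^\circ, \tfrac1n B_\infty^n) \le 2^{O(n)}$. Since $(tK)^\circ = \tfrac1t K^\circ$ and covering numbers are invariant under simultaneous scaling of both bodies, this is the same as $\mathcal{N}(K^\circ, \tfrac tn B_\infty^n) \le 2^{O(n)}$. By \Cref{lem:polar_body}, $K^\circ = \{\mathcal{A}(U) : \|U\|_{S_{q^\ast}} \le 1\}$, where $\mathcal{A}(U) := (\inn{A_1}{U},\dots,\inn{A_n}{U})$ and $q^\ast$ is the conjugate exponent of $q$.

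The key step is to replace $K^\circ$ by its positive semidefinite part $K^\circ_{q+}$. Because the $A_i$ are symmetric, $\inn{A_i}{U} = \inn{A_i}{\tfrac12(U+U^\top)}$, so in the description of $K^\circ$ we may restrict to symmetric $U$; and in the regime $q \ge 2$ of interest, using $q^\ast \le q$, we have $\|U\|_{S_q} \le \|U\|_{S_{q^\ast}} \le 1$. Splitting such a symmetric $U$ into its positive and negative parts $U = U_+ - U_-$ with $U_\pm \succeq 0$ and $\|U_\pm\|_{S_q} \le \|U\|_{S_q} \le 1$ shows $\mathcal{A}(U) \in K^\circ_{q+} - K^\circ_{q+}$, hence $K^\circ \subseteq K^\circ_{q+} - K^\circ_{q+}$. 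Now the symmetrization lemma, \Cref{lem:symmetrization_covering}, gives $\mathcal{N}(K^\circ_{q+} - K^\circ_{q+}, K^\circ_{q+}) \le 2^{O(n)}$, and composing covers (if $A \subseteq \bigcup_i (x_i + B)$ and $B \subseteq \bigcup_j (y_j + C)$ then $A \subseteq \bigcup_{i,j}(x_i + y_j + C)$) together with the hypothesis yields
\[
\mathcal{N}\Big(K^\circ,\, \tfrac tn B_\infty^n\Big) \le \mathcal{N}\big(K^\circ_{q+} - K^\circ_{q+},\, K^\circ_{q+}\big) \cdot \mathcal{N}\Big(K^\circ_{q+},\, \tfrac tn B_\infty^n\Big) \le 2^{O(n)}.
\]
Chaining back through \Cref{lem:dual_cov} gives $\gamma_n(tK) \ge 2^{-O(n)}$, completing the argument (all implicit constants stay absolute, so the constant $\alpha$ needed in \Cref{thm:partial_coloring} is fixed).

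I expect the only non-routine point to be the reduction from $K^\circ$ to its PSD part $K^\circ_{q+}$: it relies on the $A_i$ being symmetric (to symmetrize $U$), on the spectral decomposition of a symmetric matrix into PSD matrices whose Schatten norms do not increase, and on the nesting $B^m_{S_{q^\ast}} \subseteq B^m_{S_q}$ for $q \ge 2$; the $2^{O(n)}$ loss incurred is absorbed by \Cref{lem:symmetrization_covering}. Everything else---polarity and rescaling, the Gaussian-measure/covering-number duality of \Cref{lem:dual_cov}, and feeding $tK$ into \Cref{thm:partial_coloring} with its polynomial-time membership oracle---is bookkeeping with the tools already assembled. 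One should also keep track of the degenerate case in which $K^\circ_{q+}$ is not full-dimensional (the $A_i$ linearly dependent), which is handled by running the entire argument inside $\spn(K^\circ_{q+})$ with $n$ replaced by its dimension.
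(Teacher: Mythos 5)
Your proposal is correct and follows essentially the same route as the paper's proof: pass to the polar via \Cref{lem:polar_body}, use the inclusion $K^\circ \subseteq K^\circ_{q+} - K^\circ_{q+}$ together with \Cref{lem:symmetrization_covering} and the hypothesis to bound $\mathcal{N}(K^\circ, \frac{t}{n}B_\infty^n)$, then invoke the equivalence $(1)\Leftrightarrow(5)$ of \Cref{lem:dual_cov} and feed $tK$ into \Cref{thm:partial_coloring}. The only difference is that you spell out the justification of the inclusion (symmetrizing $U$, splitting into PSD parts, and using $q^* \le q$), which the paper asserts without proof.
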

{}
\begin{proof}
Recall that $D := \{x \in \mathbb{R}^n: \|\sum_{i=1}^n x_i A_i \|_{S_q} \leq t\}$ denotes the discrepancy body. Since $tD^\circ \subseteq K^\circ_{q+} - K^\circ_{q+}$, by \Cref{lem:symmetrization_covering} we have $\mathcal{N}(D^\circ, \frac{1}{n} B^n_\infty) = 2^{O(n)}$. The equivalence $(1) \Leftrightarrow (5)$ in \Cref{lem:dual_cov} implies $\gamma_n(D) \ge 2^{-O(n)}$, and \Cref{thm:partial_coloring} gives the corollary.  
\end{proof}

\subsection{Mirror Descent: An Overview} \label{sec:mirror_descent}

The mirror descent method was introduced by Nemirovski and Yudin \cite{ny83}. Here, we follow the presentation in \cite{b15}. 
Let $\mathcal{D}$ be an open subset of $\setR^m$ and $\mathcal{X}$ a subset of its closure. We fix a convex function $f : \mathcal{X} \to \mathbb{R}$ assumed to be $L$-Lipschitz with respect to a norm $\|\cdot \|$, and a differentiable function $\Phi : \mathcal{D} \to \setR$ that is $\rho$-strongly convex with respect to $\|\cdot\|$ and has a surjective gradient $\nabla \Phi : \mathcal{D} \to \setR^m$. The mirror descent algorithm, given a starting point $x_0 \in \mathcal{X} \cap \mathcal{D}$, consists of the iterations 
\begin{align*}
& \nabla \Phi(y_{t+1}) := \nabla \Phi(x_t) - \eta g_t, 
\\ & x_{t+1} := \mathrm{argmin}_{x \in \mathcal{X} \cap \mathcal{D}} D_\Phi (x, y_{t+1}),
\end{align*}
where $g_t \in \partial f(x_t)$ and $D_\Phi(x,y) := \Phi(x) - \Phi(y) - \nabla \Phi(y)^\top (x- y )$ is the Bregman divergence. Note that $y_t \in \mathcal{D}$ and $x_t \in \mathcal{X} \cap \mathcal{D}$ for all $t \ge 0$. We use the following convergence guarantee:

\begin{theorem}[\cite{b15}, Theorem 4.2] \label{thm:mirror_descent} Let $f$ be $L$-Lipschitz and $\Phi$ be $\rho$-strongly convex with respect to $\|\cdot \|$, and $D_\Phi^{\max} \geq D_\Phi(x^*, x_0)$ be any upper bound.
Then the mirror descent algorithm with $\eta := \frac{1}{L} \sqrt{\frac{2\rho D_\Phi^{\max}}{T}}$ satisfies
\begin{align*}
\min_{s \in [T]} f(x_s) - f(x^*) \leq L \sqrt{\frac{2 D_\Phi^{\max}}{\rho T}}.
\end{align*} 
\end{theorem}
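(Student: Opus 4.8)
The plan is to run the standard potential-function (regret) analysis of mirror descent, with the potential $\Phi_t := D_\Phi(x^*, x_t)$. First I would use convexity of $f$: since $g_t \in \partial f(x_t)$, we have $f(x_t) - f(x^*) \le \langle g_t, x_t - x^* \rangle$, so it suffices to bound the cumulative linearized regret $\sum_t \langle g_t, x_t - x^*\rangle$ and then average, using $\min_{s \in [T]} f(x_s) - f(x^*) \le \frac{1}{T}\sum_t \big(f(x_t) - f(x^*)\big)$.

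Next I would analyze a single step. From the update $\nabla\Phi(y_{t+1}) = \nabla\Phi(x_t) - \eta g_t$ we get $\eta \langle g_t, x_t - x^*\rangle = \langle \nabla\Phi(x_t) - \nabla\Phi(y_{t+1}), x_t - x^*\rangle$, and the three-point identity for Bregman divergences (immediate on expanding the definitions) rewrites this as $D_\Phi(x^*, x_t) - D_\Phi(x^*, y_{t+1}) + D_\Phi(x_t, y_{t+1})$. To pass from $y_{t+1}$ back to $x_{t+1}$ I would invoke the generalized Pythagorean inequality for the Bregman projection $x_{t+1} = \mathrm{argmin}_{x \in \mathcal{X}\cap\mathcal{D}} D_\Phi(x, y_{t+1})$, namely $D_\Phi(x^*, y_{t+1}) \ge D_\Phi(x^*, x_{t+1}) + D_\Phi(x_{t+1}, y_{t+1}) \ge D_\Phi(x^*, x_{t+1})$, which makes the potential telescope. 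The remaining ``overshoot'' term is controlled using the identity $D_\Phi(x_t, y_{t+1}) + D_\Phi(y_{t+1}, x_t) = \langle \nabla\Phi(x_t) - \nabla\Phi(y_{t+1}), x_t - y_{t+1}\rangle = \eta\langle g_t, x_t - y_{t+1}\rangle$, together with $\rho$-strong convexity ($D_\Phi(y_{t+1}, x_t) \ge \frac{\rho}{2}\|x_t - y_{t+1}\|^2$), the Lipschitz bound $\|g_t\|_* \le L$, Cauchy--Schwarz, and completing the square: $D_\Phi(x_t, y_{t+1}) \le \eta L\|x_t - y_{t+1}\| - \frac{\rho}{2}\|x_t - y_{t+1}\|^2 \le \frac{\eta^2 L^2}{2\rho}$.

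Finally I would sum the per-step inequality $\eta\langle g_t, x_t - x^*\rangle \le D_\Phi(x^*, x_t) - D_\Phi(x^*, x_{t+1}) + \frac{\eta^2 L^2}{2\rho}$ over the $T$ iterations; the potential telescopes to at most $D_\Phi(x^*, x_0) \le D_\Phi^{\max}$, giving $\eta\sum_t\langle g_t, x_t - x^*\rangle \le D_\Phi^{\max} + \frac{T\eta^2 L^2}{2\rho}$. Dividing by $\eta T$, combining with the convexity reduction from the first step, and plugging in $\eta = \frac{1}{L}\sqrt{2\rho D_\Phi^{\max}/T}$ (which balances the two terms) yields $\min_{s\in[T]} f(x_s) - f(x^*) \le L\sqrt{2D_\Phi^{\max}/(\rho T)}$.

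The main obstacle is making the projection step rigorous: the generalized Pythagorean inequality rests on the first-order optimality condition $\langle \nabla\Phi(x_{t+1}) - \nabla\Phi(y_{t+1}),\, x - x_{t+1}\rangle \ge 0$ for all $x \in \mathcal{X}\cap\mathcal{D}$, which requires convexity of $\mathcal{X}\cap\mathcal{D}$ and some care that the minimizer stays in the open domain $\mathcal{D}$ where $\nabla\Phi$ is defined; everything else is a routine telescoping and AM--GM computation. Since this statement is verbatim Theorem~4.2 of \cite{b15}, one may alternatively just cite it, but the argument above is the self-contained version I would write.
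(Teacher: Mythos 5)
Your proof is correct and is exactly the standard regret analysis (three-point identity, generalized Pythagorean inequality for the Bregman projection, strong convexity to absorb the overshoot term, telescoping, and balancing via the choice of $\eta$) that underlies Theorem 4.2 of the cited reference [b15]; the paper itself gives no proof and simply imports the result by citation. The only (harmless) nitpick is the usual off-by-one in whether the sum runs over $x_0,\dots,x_{T-1}$ or $x_1,\dots,x_T$, which does not affect the bound.
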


\noindent \textbf{The Spectraplex Setup.} 
Here we take $\mathcal{X} := \mathcal{S}_m = \{X \in \mathbb{S}_+^m: \tr(X) = 1\}$. 
The mirror map is $\Phi(X) = \tr(X \log X)$, defined on $\mathcal{D} = \mathbb{S}_{++}^m$, which is $\frac{1}{2}$-strongly convex with respect to the Schatten-$1$ norm by the quantum Pinsker inequality \cite{c16}. 
Then the convergence bound in \Cref{thm:mirror_descent} becomes $2L \sqrt{\frac{S(X^* \| X_0)}{T}}$, where $S(X \| Y) := \tr(X(\log X - \log Y))$ is the quantum relative entropy between matrices $X, Y \in \mathcal{S}_m$. The projection step corresponds to a trace normalization, so given a starting point $X_0 \in \mathcal{S}_m \cap  \mathbb{S}_{++}^m$, we may write in closed form
\begin{align} \label{eq:spectraplex_iterations}
X_{t+1} = \frac{\exp(\log X_0 - \eta \sum_{i=0}^t g_i)}{\tr(\exp(\log X_0 - \eta \sum_{i=0}^t g_i))},
\end{align}
for subgradients $g_i \in \partial f(X_i)$.

\medskip
\textbf{The Schatten Norm Setup.}
Here we take $\mathcal{X} = \mathcal{D} = \setR^{m \times m}$, so that $X_t = Y_t$ for all $t$. The mirror map is $\Phi(X) := \frac{1}{2(p-1)} \|X\|_p^2$, which is known to be $1$-strongly convex for all $p \in (1, 2]$ \cite{bcl94}. Thus given a starting point $X_0 \in \setR^{m \times m}$, we may write in closed form 
\begin{align} 
\label{eq:schatten_iterations}
X_{t+1} = \nabla \Phi^{-1} \Big(\nabla \Phi(X_0) - \eta \sum_{i=0}^t g_i\Big),
\end{align} 
for subgradients $g_i \in \partial f(X_i)$.

\subsection{Covering via Mirror Descent} \label{sec:covering_mirror}
Given symmetric matrices $A_1, \dots, A_n$ with $\|A_i\| \le 1$ for all $i \in [n]$, where the dual norm $\| \cdot \|_*$ is either the Schatten-1 norm or the Schatten-$p$ norm for some $p \in (1,2]$, we apply mirror descent on functions of the form $\displaystyle f_U(X) := \max_{i \in [n]} |\inn{A_i}{X-U}|$ to cover the polar discrepancy body
\begin{align*} 
K^\circ := \{\mathcal{A}(U) : \|U\|_* \le 1\}, \text{ where } \mathcal{A}(U) := (\inn{A_1}{U} ,\dots, \inn{A_n}{U}).
\end{align*}
Note that $f_U(X) = \|\mathcal{A}(X) - \mathcal{A}(U)\|_\infty$ and that $f$ is $1$-Lipschitz with respect to $\| \cdot \|_*$. The key property of such functions is that we may always choose subgradients from the set of $2n$ matrices $\{\pm A_i : i \in [n]\}$, which allows us to upper bound the number of different matrices encountered during the mirror descent process. 

\begin{lemma} \label{lem:size_covering}
Let $\| \cdot\|_*$ be either $\|\cdot\|_{S_1}$ as in the Spectraplex Setup, or $\|\cdot\|_{S_p}$ with $p \in (1,2]$ as in the Schatten Norm Setup, and $\mathcal{X}, \mathcal{D}$ be defined accordingly.
Let $T_0 \subseteq \mathcal{X} \cap \mathcal{D}$ be a set with size $|T_0| \le 2^{O(n)}$ and $K^\circ \supseteq K' = \mathcal{A}(T')$ the convex body to be covered, where $T' \subseteq \mathcal{X} \cap \mathcal{D}$. If for every $U \in T'$ there exists a starting point $U_0 := U_0(U) \in T_0$ with $D_\Phi(U, U_0) \le D_\Phi^{\max}$, then we can bound 
\begin{align*}
\mathcal{N}\Big(K', \sqrt{\frac{D_\Phi^{\max}}{n}} B^n_\infty\Big) \le 2^{O(n)}.
\end{align*}
\end{lemma}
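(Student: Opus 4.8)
The plan is to run mirror descent for $n$ steps, starting from an appropriate point in $T_0$, on each function $f_U$ for $U \in T'$, and to let the resulting set of iterates $\mathcal{A}(T)$ be our $\sqrt{D_\Phi^{\max}/n} B_\infty^n$-cover of $K' = \mathcal{A}(T')$. First I would fix $U \in T'$, choose $U_0 = U_0(U) \in T_0$ as guaranteed by the hypothesis so that $D_\Phi(U, U_0) \le D_\Phi^{\max}$, and observe that $f_U$ is convex, $1$-Lipschitz with respect to $\|\cdot\|_*$ (since its subgradients are among $\{\pm A_i\}_{i \in [n]}$ and $\|A_i\| \le 1$), and attains its minimum value $f_U(U) = 0$ at the point $U \in \mathcal{X}$. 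Applying \Cref{thm:mirror_descent} with $L = 1$, $\rho = 1$ (resp. $\rho = 1/2$, absorbed into constants), $T = n$, and $x^* = U$, the iterates $U_1, \dots, U_n$ produced by mirror descent satisfy
\begin{align*}
\min_{s \in [n]} \|\mathcal{A}(U_s) - \mathcal{A}(U)\|_\infty = \min_{s \in [n]} f_U(U_s) \le \sqrt{\frac{2 D_\Phi(U, U_0)}{n}} \le \sqrt{\frac{2 D_\Phi^{\max}}{n}},
\end{align*}
so $\mathcal{A}(U)$ lies within $\sqrt{2 D_\Phi^{\max}/n} B_\infty^n$ of some point of $\mathcal{A}(T)$, where $T := \{U_s : U \in T', s \in [n]\}$ (up to the constant in $O(\cdot)$ this gives the stated cover radius).

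Next I would bound $|T|$, which is where the structure of the two setups enters. In both the Spectraplex Setup \eqref{eq:spectraplex_iterations} and the Schatten Norm Setup \eqref{eq:schatten_iterations}, the $t$-th iterate has a closed form depending on the starting point $U_0$ and on the partial sum of subgradients $\sum_{i=0}^{t-1} g_i$ only — in the spectraplex case through $\exp(\log U_0 - \eta \sum g_i)$ normalized, and in the Schatten case through $\nabla \Phi^{-1}(\nabla \Phi(U_0) - \eta \sum g_i)$; crucially the projection step is either trivial ($\mathcal{X} = \mathcal{D} = \setR^{m \times m}$) or a trace normalization, neither of which destroys this dependence on the \emph{sum} (not the order) of the gradients. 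Since at each of the $n$ steps the subgradient $g_i$ is one of the $2n$ matrices $\pm A_1, \dots, \pm A_n$, the number of distinct multisets of $n$ subgradients — equivalently, the number of distinct vectors of multiplicities $(c_1^+, c_1^-, \dots, c_n^+, c_n^-)$ of nonnegative integers summing to $n$ — is at most $\binom{2n + n - 1}{n} = \binom{3n-1}{n} \le 2^{O(n)}$ by a standard stars-and-bars count. Thus the number of distinct iterates reachable from a fixed start is $2^{O(n)}$, and ranging over the $|T_0| \le 2^{O(n)}$ starting points we get $|T| \le |T_0| \cdot 2^{O(n)} \le 2^{O(n)}$, hence $\mathcal{N}(K', \sqrt{D_\Phi^{\max}/n} B_\infty^n) \le |T| \le 2^{O(n)}$.

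The main obstacle — and the place I would be most careful — is justifying that the iterates genuinely depend only on $U_0$ and the sum of the subgradients, rather than on their order. In the Schatten Norm Setup this is immediate since there is no projection ($x_t = y_t$) and the update is a fixed additive recurrence in $\nabla \Phi$-space. In the Spectraplex Setup one must check that the argmin projection onto $\mathcal{S}_m$ under the Bregman divergence of $\tr(X \log X)$ really is just renormalizing the trace — i.e. that projecting after accumulating all gradients gives the same point as projecting at every step (this is the point the footnote in the paper's overview flags as not holding for general mirror maps). I would verify this via the explicit formula \eqref{eq:spectraplex_iterations}, noting that $\log X_{t+1} = \log X_0 - \eta \sum_{i=0}^t g_i - (\text{scalar})\cdot I$, so that the next gradient step followed by renormalization only ever shifts $\log X_0 - \eta \sum g_i$ by a multiple of the identity, which the normalization absorbs; hence the order of the $g_i$ is irrelevant. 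Once this order-independence is in hand, the counting argument and the mirror-descent error bound combine directly to give the claimed covering number, completing the proof.
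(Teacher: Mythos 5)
Your proposal is correct and follows essentially the same route as the paper: run mirror descent on $f_U$ from the hypothesized starting point, use the order-independence of the iterates in both setups to reduce the count to multisets of subgradients from $\{\pm A_i\}$, and conclude $|T|\le |T_0|\cdot 2^{O(n)}$. The only (harmless) imprecision is that your stars-and-bars count covers multisets of exactly $n$ subgradients, whereas your set $T$ contains iterates at every step $s\in[n]$, so one should sum over multiset sizes $t\le n$ as the paper does — still $2^{O(n)}$.
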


\begin{proof}
The key observation is that in either setup of mirror descent, the point $X_t$ in \eqref{eq:spectraplex_iterations} or \eqref{eq:schatten_iterations} depends only on the starting point $U_0$ and on the sum of gradients $g_0, \dots, g_{t-1}$, but not on their order. %since the sum is commutative. 
Moreover, we can always choose from the set of $2n$ gradients $\{\pm A_i: i \in [n]\}$ at each step. 
Thus applying mirror descent to the function $f_U$ for all possible $U$ with the same starting point $U_0$, the total number $N(U_0)$ of points visited in $T := n$ iterations satisfies 
\begin{align*}
N(U_0) \le \sum_{t=0}^n {{t+2n-1} \choose {2n-1}} \le (n+1) \cdot {3n \choose n} \le 2^{O(n)}.
\end{align*}
Since $|T_0| \le 2^{O(n)}$, we obtain a set of $2^{O(n)}$ points $\mathcal{U}$ such that for every $Y = \mathcal{A}(U) \in K'$, there exists some $\tilde{U} \in \mathcal{U}$ so that $\|\mathcal{A}(\tilde{U}) - \mathcal{A}(U)\|_\infty = f_U(\tilde{U}) = f_U(\tilde{U}) - f_U(U) \lesssim \sqrt{D_\Phi^{\max}/n}.$ 
\end{proof}

In the Schatten Norm Setup, we shall pick $K' = K^\circ$ and $T_0 = \{0\}$, i.e. $U_0$ is always $0$. 
For the Spectraplex Setup, we carefully choose a set of starting points $|T_0| \leq 2^{O(n)}$ which has small $D_\Phi^{\max}$ with respect to $K'=\{\mathcal{A}(U): U \in \mathcal{S}_m\}$. 
Since $D_{\Phi}( X \| Y)$ is the quantum relative entropy between $X$ and $Y$ in the Spectraplex Setup, we shall refer to the set of starting points $T_0$ as a (quantum) relative entropy net for $\mathcal{S}_m$. 

\begin{definition}[Quantum Relative Entropy Net] \label{defn:relative_entropy_net}
Given subsets $T, \mathcal{M} \subseteq \mathcal{S}_m$, $T$ is a relative entropy net of $\mathcal{M}$ with error $\varepsilon$ if for any $X \in \mathcal{M}$, we can find $Y \in T$ such that $S(X \| Y) \leq \varepsilon$. 
\end{definition}

\subsection{Initialization for Spectraplex Setup: Relative Entropy Net}
\label{sec:entropy_net}

We start with the following lemma which constructs a relative entropy net on $\mathcal{S}_m$ from an operator norm net.

\begin{lemma}[Relative Entropy Net from Operator Norm Net] \label{lem:entropy_from_op}
Let $X, Y \in \mathcal{S}_m$ satisfies $\|X - Y\|_\op \leq \varepsilon$ for some $\varepsilon \geq 1/m$. Then $S(X \| Y') \leq \log (2m\varepsilon)$, where $Y' := \frac{1}{2}(Y + \frac{I_m}{m}) \in \mathcal{S}_m$. 
\end{lemma}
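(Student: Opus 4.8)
The plan is to bound the quantum relative entropy $S(X \| Y') = \tr(X \log X) - \tr(X \log Y')$ by controlling the term $-\tr(X \log Y')$ from below, since $\tr(X \log X) \le 0$ always (as $X$ has eigenvalues in $[0,1]$). So it suffices to show $-\tr(X \log Y') \le \log(2m\varepsilon)$, i.e. that $\tr(X \log Y') \ge -\log(2m\varepsilon)$. The key point is that $Y' = \tfrac12 Y + \tfrac{1}{2m} I_m \succeq \tfrac{1}{2m} I_m$, so every eigenvalue of $Y'$ is at least $\tfrac{1}{2m}$; more usefully, I want to lower bound $Y'$ by something comparable to $X$ itself so that the logarithm behaves well.

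First I would observe that $\|X - Y\|_\op \le \varepsilon$ gives $Y \succeq X - \varepsilon I_m$, hence
\begin{align*}
Y' = \tfrac12 Y + \tfrac{1}{2m} I_m \succeq \tfrac12 X - \tfrac{\varepsilon}{2} I_m + \tfrac{1}{2m} I_m \succeq \tfrac12 X - \tfrac{\varepsilon}{2} I_m + \tfrac{\varepsilon}{2} I_m \cdot \tfrac{1}{m\varepsilon},
\end{align*}
but more cleanly: since $\varepsilon \ge 1/m$ we have $\tfrac{1}{2m} I_m \succeq \tfrac{1}{2m\varepsilon} \cdot \varepsilon I_m$, so combining, $Y' \succeq \tfrac12 X - \tfrac{\varepsilon}{2} I_m + \tfrac{1}{2m} I_m$. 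This still has the annoying $-\tfrac{\varepsilon}{2}I_m$. The right move is instead to write $Y' \succeq \tfrac12(X - \varepsilon I_m) + \tfrac{1}{2m}I_m$ and then argue that on the eigenspaces where $X$ is large this is $\gtrsim X/(2m\varepsilon)$-ish; the truly clean bound is $Y' \succeq \frac{1}{2m\varepsilon}\big(\tfrac12 X\big)$ fails in general, so instead I would use the two-sided split: on vectors $v$ with $v^\top X v \ge \varepsilon$ one has $v^\top Y' v \ge \tfrac12 v^\top(X-\varepsilon I)v \ge \tfrac{1}{2}(1 - \tfrac{\varepsilon}{v^\top X v}) v^\top X v$... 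This is getting complicated, so the cleanest route is the operator inequality $Y' \succeq \tfrac{1}{2m}I_m$ together with $Y' \succeq \tfrac12 Y$, and then use operator monotonicity of $\log$ plus the variational/Peierls argument.

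Concretely, I would diagonalize $X = \sum_j \lambda_j v_j v_j^\top$ and use the Peierls–Bogoliubov-type bound $\tr(X \log Y') \ge \sum_j \lambda_j \log(v_j^\top Y' v_j)$ (which follows from concavity of $\log$ and $\tr X = 1$, or from the fact that $\log$ is operator concave together with Jensen). Then for each $j$, $v_j^\top Y' v_j = \tfrac12 v_j^\top Y v_j + \tfrac{1}{2m} \ge \tfrac12(\lambda_j - \varepsilon) + \tfrac{1}{2m} \ge \tfrac12 \lambda_j - \tfrac{\varepsilon}{2} + \tfrac{1}{2m}$. Since $\varepsilon \ge 1/m$ this last quantity is $\ge \tfrac12\lambda_j - \tfrac{\varepsilon}{2} + \tfrac{1}{2m}$, and I also have the trivial bound $v_j^\top Y' v_j \ge \tfrac{1}{2m} \ge \tfrac{\lambda_j}{2m\varepsilon}$ since $\lambda_j \le 1 \le m\varepsilon$. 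Wait — better: combine to get $v_j^\top Y' v_j \ge \tfrac{1}{2}\max(\lambda_j - \varepsilon, 0) + \tfrac{1}{2m}$. If $\lambda_j \le 2\varepsilon$ then $v_j^\top Y' v_j \ge \tfrac{1}{2m} \ge \tfrac{\lambda_j}{4m\varepsilon}$; if $\lambda_j > 2\varepsilon$ then $\lambda_j - \varepsilon > \lambda_j/2$, so $v_j^\top Y' v_j \ge \lambda_j/4 \ge \lambda_j/(4m\varepsilon)$ (as $m\varepsilon \ge 1$). Either way $v_j^\top Y' v_j \ge \lambda_j/(4m\varepsilon)$. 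Hence
\begin{align*}
\tr(X \log Y') \ge \sum_j \lambda_j \log\!\Big(\frac{\lambda_j}{4m\varepsilon}\Big) = \sum_j \lambda_j \log \lambda_j - \log(4m\varepsilon) = \tr(X\log X) - \log(4m\varepsilon),
\end{align*}
which rearranges to $S(X\|Y') = \tr(X\log X) - \tr(X\log Y') \le \log(4m\varepsilon)$. This loses a constant factor versus the claimed $\log(2m\varepsilon)$; to recover it exactly I would be more careful in the small-$\lambda_j$ case (using that $\lambda_j \le 1$ and $m\varepsilon \ge 1$ more tightly, or noting $v_j^\top Y' v_j \ge \tfrac{1}{2m} \ge \tfrac{\lambda_j}{2m\varepsilon}$ directly since $\lambda_j \le 1 = \tfrac{m\varepsilon}{m\varepsilon} \le m\varepsilon$... indeed $\tfrac{\lambda_j}{2m\varepsilon} \le \tfrac{1}{2m\varepsilon} \le \tfrac{1}{2m}$, good) and in the large-$\lambda_j$ case. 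Using $v_j^\top Y'v_j \ge \max(\tfrac{\lambda_j - \varepsilon}{2}, \tfrac{\lambda_j}{2m\varepsilon})$: the first exceeds $\lambda_j/(2m\varepsilon)$ once $\lambda_j(1-1/(m\varepsilon)) \ge \varepsilon$, and handling the crossover gives the bound $v_j^\top Y' v_j \ge \lambda_j/(2m\varepsilon)$ in all cases (since the two lower bounds, one affine and one linear through the origin, are together at least the linear one when... I'd check: at $\lambda_j = 0$ affine bound is $-\varepsilon/2 < 0$ so the origin-linear bound $0$ wins and equals $\lambda_j/(2m\varepsilon)$; for $\lambda_j$ up to $1$ the affine bound $\tfrac{\lambda_j-\varepsilon}{2}$ may dip below $\tfrac{\lambda_j}{2m\varepsilon}$, but then we fall back on the clean estimate $v_j^\top Y'v_j\ge \tfrac1{2m}\ge\tfrac{\lambda_j}{2m\varepsilon}$). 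So in fact $v_j^\top Y' v_j \ge \lambda_j/(2m\varepsilon)$ unconditionally, giving $S(X\|Y') \le \log(2m\varepsilon)$ as stated.

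The main obstacle is the operator-analytic step: justifying $\tr(X \log Y') \ge \sum_j \lambda_j \log(v_j^\top Y' v_j)$ where $v_j$ are the eigenvectors of $X$. This is a standard consequence of the concavity of $t \mapsto \log t$ combined with the fact that $t\mapsto \tr(X\log(\cdot))$... more precisely it follows from Klein's inequality or the Peierls inequality $\tr f(A) \ge \sum_j f(v_j^\top A v_j)$ weighted appropriately — but I should make sure to invoke the correct form, namely that for any orthonormal basis $\{v_j\}$ and concave $f$, $\tr(X f(Y')) \ge \sum_j \lambda_j f(v_j^\top Y' v_j)$ when $X = \sum_j \lambda_j v_j v_j^\top$; this is Jensen applied to the probability vector $(\lambda_j)$ together with $f(v_j^\top Y' v_j) \ge v_j^\top f(Y') v_j$ is the wrong direction, so actually the correct chain is $\tr(X \log Y') = \sum_j \lambda_j (v_j^\top \log(Y') v_j)$ and then one needs $v_j^\top \log(Y') v_j \ge \log(v_j^\top Y' v_j)$, which is exactly operator Jensen / concavity of $\log$ (since $v_j^\top$ is a unital positive map $\mathbb R^{m\times m}\to\mathbb R$). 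That last inequality is the crux and is clean. Everything else is elementary eigenvalue bookkeeping with the constraint $\varepsilon \ge 1/m$.
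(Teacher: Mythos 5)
There is a genuine gap at the step you yourself identify as "the crux": you need
$v_j^\top \log(Y') v_j \ge \log(v_j^\top Y' v_j)$, and you assert it follows from operator Jensen / concavity of $\log$. It is in fact the \emph{reverse} of Jensen's operator inequality. For a concave function $f$ and the unital positive compression $A \mapsto v^\top A v$, Hansen--Pedersen gives $f(v^\top A v) \ge v^\top f(A) v$ --- exactly the inequality you correctly dismissed as "the wrong direction" one sentence earlier. The inequality you actually use is false in general: take $Y' = \diag(\delta, 1-\delta)$ and $v = (1/\sqrt{2}, 1/\sqrt{2})$; then $v^\top \log(Y')v = \tfrac12\log\delta + \tfrac12\log(1-\delta) \to -\infty$ while $\log(v^\top Y' v) = \log(1/2)$. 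Consequently your claimed lower bound $\tr(X\log Y') \ge \sum_j \lambda_j \log(v_j^\top Y' v_j)$ does not hold, and the whole chain collapses: Jensen only gives you an \emph{upper} bound on $\tr(X\log Y')$, which is useless for upper bounding $S(X\|Y') = \tr(X\log X) - \tr(X\log Y')$.

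The good news is that your eigenvalue bookkeeping is pointing at the right fact, just packaged wrongly. What is true is the full operator inequality $X \preceq Y + \varepsilon I_m \preceq 2m\varepsilon \cdot Y'$ (using $Y + \varepsilon I_m \preceq m\varepsilon Y + \varepsilon I_m = 2m\varepsilon Y'$ since $m\varepsilon \ge 1$), and the correct tool to convert it into a statement about logarithms is operator \emph{monotonicity} of $\log$, not Jensen: $\log X \preceq \log(2m\varepsilon Y') = \log Y' + \log(2m\varepsilon) I_m$, whence $S(X\|Y') = \tr(X(\log X - \log Y')) \le \log(2m\varepsilon)\tr(X) = \log(2m\varepsilon)$. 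This is essentially the paper's proof (it routes through $\tr(X(\log(Y+\varepsilon I_m) - \log Y'))$ and matrix H\"older, but the mechanism is the same). So you should replace the per-eigenvector quadratic-form bounds and the false Jensen step by the single operator inequality above; no diagonalization of $X$ is needed.
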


\begin{proof}
Recall that $\log(\cdot)$ is operator monotone and note that $X \preceq Y + \varepsilon I_m$. 
We then have
\begin{align*}
S(X \| Y') 
& = \tr(X \cdot (\log X - \log Y')) \\
& \leq \tr(X \cdot (\log (Y + \varepsilon I_m) - \log Y')) \\
& \leq \tr(X) \cdot \| \log (Y + \varepsilon I_m) - \log Y' \|_\op \\
& \leq \log \left(2 \cdot \left\|\frac{Y + \varepsilon I_m}{Y + \frac{I_m}{m}} \right\|_\op \right) \leq \log (2 m  \varepsilon) ,
\end{align*}
where the first inequality follows from the operator monotonicity of $\log(\cdot)$, the second follows from matrix H\"{o}lder, and the last follows because $\varepsilon \geq 1/m$ and $\|Y\|_\op \leq 1$. 
\end{proof}

Using the lemma above, we give the following construction for relative entropy nets on $\mathcal{S}_m$.

\begin{theorem}[Entropy Net for Spectraplex] \label{thm:entropy_net}
Given positive integers $h, m$ and $n$ such that $m/h$ is an integer, let $\mathcal{S}_m^{h} \subseteq \mathcal{S}_m$ be the set of $m \times m$ block diagonal matrices on the spectraplex with block size $h \times h$. 
Then we can find a relative entropy net $T$ for $\mathcal{S}_m^{h}$ with error at most $\max(1, \log(2h m/n))$ and size $|T| \leq 2^{O(n)}$. 
\end{theorem}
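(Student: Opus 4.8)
The plan is to build the net $T$ for $\mathcal{S}_m^{h}$ as a union of two pieces, switching between two regimes depending on the relative sizes of the block size $h$, the ambient dimension $m$, and the budget $n$. Write $X \in \mathcal{S}_m^h$ as a block-diagonal matrix with $m/h$ blocks $X^{(1)}, \dots, X^{(m/h)}$, where $X^{(j)} \succeq 0$ and $\sum_j \tr(X^{(j)}) = 1$. Let $\lambda_j := \tr(X^{(j)})$, so $\lambda = (\lambda_1, \dots, \lambda_{m/h})$ lies in the simplex $\Delta_{m/h}$, and each normalized block $\bar X^{(j)} := X^{(j)}/\lambda_j$ lies in $\mathcal{S}_h$. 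The key structural fact I would use is that quantum relative entropy decomposes across a block-diagonal decomposition: if $Y$ is block diagonal with weights $\mu_j$ and normalized blocks $\bar Y^{(j)}$, then
\begin{align*}
S(X \| Y) = \sum_{j=1}^{m/h} \lambda_j \, S\big(\bar X^{(j)} \big\| \bar Y^{(j)}\big) + \sum_{j=1}^{m/h} \lambda_j \log \frac{\lambda_j}{\mu_j} = \mathbb{E}_{j \sim \lambda}\big[S(\bar X^{(j)} \| \bar Y^{(j)})\big] + \mathrm{KL}(\lambda \| \mu).
\end{align*}
So it suffices to (i) net the weight vectors $\lambda \in \Delta_{m/h}$ in KL divergence, and (ii) net each $\mathcal{S}_h$ in relative entropy, then combine.

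First I would handle the single-block net for $\mathcal{S}_h$ using \Cref{lem:entropy_from_op}: combined with the operator-norm covering number bound for the Schatten-$1$ ball from \cite{hpv17} (\Cref{thm:entropy_number_Schatten} with $p=1$, $q=\infty$), an $\varepsilon$-operator-norm net of $\mathcal{S}_h$ of size $2^{O(b)}$ exists whenever $\varepsilon = \Theta(h/b)$, yielding (after the $Y \mapsto \frac12(Y + I_h/h)$ fix) a relative-entropy net of $\mathcal{S}_h$ with error $O(\log(2h\varepsilon)) = O(\log(h^2/b))$ and size $2^{O(b)}$, for any budget parameter $b$. For the simplex part, a standard rounding argument (or approximate Carathéodory as alluded to in the overview) gives a KL-net of $\Delta_{m/h}$: rounding each $\lambda_j$ to a multiple of $1/N$ after sampling $N$ points proportional to $\lambda$ gives a distribution $\mu$ with $\mathrm{KL}(\lambda\|\mu) \lesssim \frac{m/h}{N}\log(\dots)$, and there are $\binom{N + m/h}{m/h} = 2^{O(N + (m/h)\log(\dots))}$ such rounded vectors.

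The heart of the argument — and the step I expect to be the main obstacle — is the bookkeeping of how to allocate the total budget $n$ across the two pieces so that the product of net sizes stays $2^{O(n)}$ while the total error stays $O(\max(1, \log(2hm/n)))$. Roughly: I would assign $b \approx n/(m/h) = nh/m$ coordinates of budget to each of the $m/h$ blocks (so the blocks collectively cost $(m/h)\cdot 2^{O(nh/m)} = 2^{O(n)}$), giving per-block error $O(\log(h^2/(nh/m))) = O(\log(hm/n))$, which by the expectation formula contributes $O(\log(hm/n))$ in aggregate; and separately assign $O(n)$ budget to the simplex net, making $N = \Theta(n)$ so $\mathrm{KL}(\lambda\|\mu) \lesssim \frac{m/h}{n}\log(\dots) \lesssim \log(hm/n)$ (here one uses $m \ge \sqrt n$ and standard estimates to absorb the log factor, and takes the $\max(1,\cdot)$ to cover the regime where the argument is small). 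The edge cases $b \le 1$ (i.e. $nh/m < 1$, so one cannot afford even a trivial net per block — here fall back to a single point $I_h/h$ per block, error $\log h \le \log(hm/n)$) and $h^2 \le b$ need to be dispatched, and I would fold everything into the $\max(1, \cdot)$. Finally $T$ is the set of all block-diagonal $Y$ obtained by pairing a rounded weight vector $\mu$ with a choice of net point in $\mathcal{S}_h$ (fixed up by \Cref{lem:entropy_from_op}) for each block; the total size is $2^{O(n)} \cdot \prod 2^{O(b)} = 2^{O(n)}$, and the decomposition formula gives the claimed error.
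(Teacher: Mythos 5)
Your high-level plan is sound and is essentially a reorganization of the paper's proof: both arguments discretize the vector of block traces $\lambda=(\tr X^{(1)},\dots,\tr X^{(m/h)})$ and then net each block via the Schatten covering bound of \Cref{thm:entropy_number_Schatten}. The difference is that the paper works entirely in operator norm (building an $\varepsilon$-operator-norm net for $\mathcal{S}_m^h$ with $\varepsilon=\max\{h,\log(m/(hn))\}/n$ and converting to a relative entropy net only once, globally, via \Cref{lem:entropy_from_op}), whereas you decompose the relative entropy itself via the chain rule $S(X\|Y)=\mathbb{E}_{j\sim\lambda}[S(\bar X^{(j)}\|\bar Y^{(j)})]+\mathrm{KL}(\lambda\|\mu)$ and net the two pieces separately. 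Your decomposition is correct and arguably cleaner for the error analysis, since relative entropy averages over blocks (so a uniform per-block budget $b=nh/m$ suffices), while the paper must keep the operator-norm error uniform across blocks and therefore allocates budget $n_i=z_ih$ proportionally to the trace.

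However, there is a concrete gap in the step you yourself flagged as the main obstacle: the net for the weight vectors. With $N=\Theta(n)$ grid points and $\ell:=m/h$ blocks, the number of rounded vectors is $\binom{N+\ell}{N}=2^{\Theta(n\log(1+\ell/n))}$, which is \emph{not} $2^{O(n)}$ once $\ell\gg n$ (e.g.\ $h=1$, $m=n^2$ — exactly the classical Spencer regime). Your parenthetical fixes ("use $m\ge\sqrt n$ to absorb the log factor") address the KL \emph{error}, which is indeed fine ($\mathrm{KL}(\lambda\|\mu)\le\log(1+\ell/N)\lesssim\log(hm/n)$), but not the \emph{size} of the simplex net; also note the theorem does not assume $m\ge\sqrt n$. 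The repair is the paper's second case: when $\ell\ge 2n$, coarsen the trace grid to $N=\Theta(n/\log(\ell/n))$, so that $\binom{N+\ell}{N}\le\bigl(e\ell\log(\ell/n)/n\bigr)^{2n/\log(\ell/n)}\le 2^{O(n)}$, and then check that the resulting coarser approximation of $\lambda$ still yields KL error (or, in the paper's formulation, additive operator-norm error $1/N$ which after \Cref{lem:entropy_from_op} yields entropy error) of order $\log(hm/n)$. Without this case split your construction produces a net of superexponential size precisely when the number of blocks is large, so the proposal as written does not establish the theorem.
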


\begin{proof}
By merging blocks as needed, we may assume $hm \ge n$. By \Cref{lem:entropy_from_op}, it suffices to find an operator norm net $T'$ with size $|T'| \leq 2^{O(n)}$ and distance $\varepsilon = \frac{\max\{h, \log(m/hn)\}}{n}$. 
Let $\ell := m/h$ be the number of blocks, $X_1, \dots, X_\ell \in \mathbb{R}^{h \times h}$ denote the blocks of matrix $X \in \mathcal{S}_m^h$, and $N := 2/\varepsilon = 2n/\max\{h, \log(\ell/n)\}$ (we assume that $N$ is an integer). 
Let $Z := \{z \in \mathbb{Z}_{\geq 0}^\ell: \sum_{i=1}^\ell z_i = N\}$, and for each $z \in Z$, we define \[
T_z := \{X \in \mathcal{S}_m^h: \tr(X_i) = z_i/N, \forall i \in [\ell]\}.
\]
It follows from a standard rounding argument that for any matrix $X \in \mathcal{S}_m^h$, one can find a matrix $Y \in \cup_{z \in Z} T_z$ with $\|X-Y\|_\op \le 1/N = \varepsilon/2$.

We first show that $|Z| \leq 2^{O(n)}$. When $\ell \le 2n$, we have
\begin{align*}
|Z| \leq \binom{N + \ell}{\ell} \leq  \binom{N+2n}{2n} \le \binom{\frac{2n}{h} + 2n}{2n} \le 2^{O(n)}. 
\end{align*}
When $\ell \ge 2n \ge N$, we can bound
\begin{align*}
|Z| \leq \binom{N + \ell}{N} \leq \binom{2\ell}{N} \le \binom{2\ell}{\frac{2n}{\log(\ell/n)}} \le \left( \frac{e \ell \log(\ell/n) }{n} \right)^{\frac{2n}{\log(\ell/n)}} \leq 2^{O(n)} .
\end{align*}

It therefore suffices to construct an $\varepsilon/2$-operator norm net for each $T_z$. 

Fix an arbitrary $z \in Z$. Note that the $i$th block of the matrices in $T_z$ comes from $\frac{z_i}{N} \cdot \mathcal{S}_h$. 
Pick $n_i := z_i h$, we have from \Cref{thm:entropy_number_Schatten} that
\begin{align*}
\mathcal{N}\Big(\frac{z_i}{N} \mathcal{S}_h, \frac{z_i}{N} \cdot \frac{h}{n_i} B_{\op}^h\Big) = \mathcal{N}\Big(\mathcal{S}_h, \frac{h}{n_i} B_{\op}^h\Big) \leq 2^{O(n_i)}. 
\end{align*}
We denote this net as $\widetilde{T}_{z,i}$.  It follows from the above that for any $X_i \in \frac{z_i}{N} \mathcal{S}_h$, there exists $Y_i \in \widetilde{T}_{z,i}$ with $\| X_i - Y_i \|_\op \leq \frac{z_i}{N} \cdot \frac{h}{n_i} = \varepsilon/2$. 
Define $\widetilde{T}_z := \{\diag(Y_1, \dots, Y_\ell): Y_i \in \widetilde{T}_{z,i} \ \forall i \in [\ell]\}$. 
Then for any $X \in T_z$, there exists $Y \in \widetilde{T}_z$ such that $\| X - Y \|_\op \leq \varepsilon/2$, and thus $\widetilde{T}_z$ is indeed an $\varepsilon/2$-operator norm net for $T_z$. 
Furthermore, the size of $\widetilde{T}_z$ can be upper bounded as
\begin{align*}
|\widetilde{T}_z| \leq \prod_{i \in [\ell]} 2^{O(n_i)} = 2^{O(\sum_{i=1}^n z_i h)} = 2^{O(N h )} \leq 2^{O(n)} ,
\end{align*}
since $N \le 2n/h$. 
This proves that $\widetilde{T} := \cup_{z \in Z} \widetilde{T}_z$ is an $\varepsilon$-operator norm net for $\mathcal{S}_m^h$ and has size at most $|\widetilde{T}| \leq 2^{O(n)}$, where we recall that $\varepsilon = \frac{\max\{h, \log(m/hn)\}}{n}$. 
Finally, invoking \Cref{lem:entropy_from_op}, $\widetilde{T}$ can be transformed into a relative entropy net $T$ with size $|T| \leq 2^{O(n)}$ and error at most $\log(2 m \varepsilon) \leq \log(2hm/n)$. 
This finishes the proof of the theorem. 
\end{proof}

\section{Applications of the Spectraplex Setup}
\label{sec:weak_matrix_spencer}

In this section, we prove our matrix Spencer bound for block diagonal matrices in \Cref{thm:block_diagonal_matrix_spencer}, which we restate below. 

\blockdiagonal*

\begin{proof}[Proof of \Cref{thm:block_diagonal_matrix_spencer}] 
By \Cref{thm:entropy_net}, we can find a relative entropy net $T_0$ of $\mathcal{S}_m^h$ with error $D_\Phi^{\max} := \max(1,\log(2hm/n))$ and size $|T_0| \leq 2^{O(n)}$. 
Then using \Cref{lem:size_covering} with the Spectraplex Setup for $K':= \mathcal{A}(\mathcal{S}_m^h)$ and $T_0$ being the relative entropy net, we obtain
\begin{align*}
\mathcal{N} \Big(K', \frac{t}{n} B_\infty^n \Big) \leq 2^{O(n)} ,
\end{align*}
where $t = \sqrt{n\max(1,\log(2hm/n))}$. 
Let $\mathbb{S}_m^h$ be the set of $m \times m$ symmetric block diagonal matrices with block size $h \times h$. 
Define convex body $K'' := \mathcal{A}(B_{S_1}^m \cap \mathbb{S}_m^h \cap \mathbb{S}^m_+)$.
We first prove that $\mathcal{N}(K'', \frac{t}{n}B_\infty^n) \leq 2^{O(n)}$.
Since $\mathcal{N}(K', \frac{t}{n}B_\infty^n) \leq 2^{O(n)}$ by \Cref{thm:entropy_net}, we also have $\mathcal{N}(\frac{j}{n^2} K', \frac{t}{n}B_\infty^n) \leq 2^{O(n)}$ for each integer $j \in [n^2]$. 
We let $H_j$ be the set of centers for the minimum covering of $\frac{j}{n^2} K'$ by translates of $\frac{t}{n}B_\infty^n$ and define $H = \cup_{j \in [n^2]} H_j$. 
Since $|H_j| \leq 2^{O(n)}$, it follows that $|H| \leq 2^{O(n)}$.  
For each $X \in B_{S_1}^m$ that satisfies $X \succeq 0$, we let $\frac{j}{n^2}$ be the multiple of $\frac{1}{n^2}$ that is closest to $\tr(X)$, and set $X':= \frac{j}{n^2 \tr(X)} \cdot X$. 
Then we have
\begin{align*}
\| \mathcal{A}(X') - \mathcal{A}(X) \|_\infty  
\leq \frac{1}{n^2} \cdot \| \mathcal{A}(X) \|_\infty  
\leq \frac{t}{n} .
\end{align*}

As $\tr(X') = \frac{j}{n^2}$, we can also find $Y \in H_j$ with $\| \mathcal{A}(X') - Y \|_\infty \leq \frac{t}{n}$. Therefore, $\| \mathcal{A}(X) - Y \|_\infty \leq \frac{2t}{n}$, and it follows that $K'' \subseteq H + \frac{2t}{n} B_\infty^n$. 
This implies $\mathcal{N}(K'', \frac{t}{n}B_\infty^n) \leq 2^{O(n)}$.

Next note that the dual discrepancy body $K^\circ := \mathcal{A}(B_{S_1}^m) = \mathcal{A}(B_{S_1}^m \cap \mathbb{S}_m^h)$ since each $A_i \in \mathbb{S}_m^h$. 
We have $K^\circ = K'' - K''$, so using \Cref{lem:symmetrization_covering} we get $\mathcal{N}(K^\circ, K'') \leq 2^{O(n)}$. Thus
\begin{align*}
\mathcal{N}\Big(K^\circ, \frac{t}{n}B_\infty^n \Big) \leq \mathcal{N}(K^\circ, K'') \cdot \mathcal{N}\Big(K'', \frac{t}{n}B_\infty^n \Big) \leq 2^{O(n)} ,
\end{align*}

and $\gamma_n(tK) \geq 2^{-O(n)}$ by using \Cref{lem:dual_cov}. 
\Cref{cor:full_coloring} then gives a full coloring $x \in \{\pm 1\}^n$ with discrepancy $\|\sum_{i=1}^n x_i A_i \|_\op \leq O(t)$.
This finishes the proof of the theorem.
\end{proof}

The analysis above also shows that if we can improve the bound in \Cref{thm:entropy_net} to $O(\log (m/n))$ for any block size $h$, then the matrix Spencer conjecture is true. 

\betterentropynet*

\section{Matrix Discrepancy for Schatten Norms}
\label{sec:matrix_discrepancy_schatten}

In this section, we prove the following generalization of \Cref{thm:weak_matrix_spencer} for arbitrary Schatten norms by using a different regularizer for mirror descent.

\matrixdisc*

We first use mirror descent to prove the following covering lemma.

\begin{lemma} \label{lem:spsq_cover}
Let $m \geq \sqrt{n}$, $2 \le p \le q < \infty$, $k := \min(1, m/n)$, $t :=\sqrt{(p-1) n} \cdot k^{1/p-1/q}$ and $q^* := q/(q-1)$. Given symmetric matrices $A_1, \dots, A_n \in \mathbb{R}^{m \times m}$ with $\|A_i\|_{S_p} \leq 1$, we have
\begin{align*}
\mathcal{N}\Big(\mathcal{A}(B^m_{S_{q^*}}), \frac{t}{n} B^n_\infty \Big) \le 2^{O(n)}.
\end{align*}
\end{lemma}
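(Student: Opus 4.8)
The plan is to split the covering into two parts: a purely geometric reshaping of the ball via the Schatten-to-Schatten entropy estimate of \Cref{thm:entropy_number_Schatten}, followed by an application of mirror descent in the Schatten Norm Setup of \Cref{sec:mirror_descent}. Throughout write $p^* := p/(p-1)$; since $2 \le p \le q < \infty$ we have $1 < q^* \le p^* \le 2$, so both indices lie in the range where $\frac{1}{2(\cdot-1)}\|X\|_{S_{\cdot}}^2$ is $1$-strongly convex.

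\textbf{Step 1 (reshape the ball).} Apply \Cref{thm:entropy_number_Schatten} with the ordered pair $(q^*,p^*)$, which is legitimate since $1 \le q^* \le p^*$, and use the identity $\frac{1}{q^*}-\frac{1}{p^*} = \frac1p-\frac1q$ to obtain
\[
\mathcal{N}\bigl(B^m_{S_{q^*}},\, k^{1/p-1/q}\, B^m_{S_{p^*}}\bigr) \le 2^{O(n)}.
\]
Since $\mathcal{A}$ is linear, pushing any such covering through $\mathcal{A}$ shows that $\mathcal{A}(B^m_{S_{q^*}})$ is covered by $2^{O(n)}$ translates of $k^{1/p-1/q}\,\mathcal{A}(B^m_{S_{p^*}})$. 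By submultiplicativity of covering numbers and the scaling identity $\mathcal{N}(\lambda X,\lambda Y)=\mathcal{N}(X,Y)$, it therefore suffices to prove
\[
\mathcal{N}\bigl(\mathcal{A}(B^m_{S_{p^*}}),\, \tfrac{t'}{n}\, B^n_\infty\bigr) \le 2^{O(n)}, \qquad t' := \frac{t}{k^{1/p-1/q}} = \sqrt{(p-1)n}.
\]

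\textbf{Step 2 (mirror descent).} This is exactly the Schatten Norm Setup with parameter $p^* \in (1,2]$: take $\mathcal{X}=\mathcal{D}=\setR^{m\times m}$, mirror map $\Phi(X) = \frac{1}{2(p^*-1)}\|X\|_{S_{p^*}}^2$ (which is $1$-strongly convex with respect to $\|\cdot\|_{S_{p^*}}$), and for each $U \in B^m_{S_{p^*}}$ the function $f_U(X) := \max_{i\in[n]}|\inn{A_i}{X-U}| = \|\mathcal{A}(X)-\mathcal{A}(U)\|_\infty$. Since the dual of $\|\cdot\|_{S_{p^*}}$ is $\|\cdot\|_{S_p}$ and $\|A_i\|_{S_p}\le 1$, the function $f_U$ is $1$-Lipschitz with respect to $\|\cdot\|_{S_{p^*}}$ with subgradients drawn from $\{\pm A_i\}_{i\in[n]}$, precisely as \Cref{lem:size_covering} requires. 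Start every run from $T_0 = \{0\}$; because $\Phi(0)=0$ and $\nabla\Phi(0)=0$ we get $D_\Phi(U,0) = \Phi(U) \le \frac{1}{2(p^*-1)} = \frac{p-1}{2} =: D_\Phi^{\max}$ for all $U \in B^m_{S_{p^*}}$. \Cref{lem:size_covering} with $T'=B^m_{S_{p^*}}$ (so $K'=\mathcal{A}(B^m_{S_{p^*}})$) then yields
\[
\mathcal{N}\bigl(\mathcal{A}(B^m_{S_{p^*}}),\, \sqrt{D_\Phi^{\max}/n}\; B^n_\infty\bigr) \le 2^{O(n)},
\]
and $\sqrt{D_\Phi^{\max}/n} = \sqrt{(p-1)/(2n)} \le \sqrt{(p-1)/n} = t'/n$, which is exactly the bound needed in Step 1. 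Combining Steps 1 and 2 gives $\mathcal{N}(\mathcal{A}(B^m_{S_{q^*}}), \frac{t}{n}B^n_\infty) \le 2^{O(n)}\cdot 2^{O(n)} = 2^{O(n)}$.

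\textbf{The main obstacle} here is conceptual rather than technical: one must resist the temptation to run mirror descent directly on $B^m_{S_{q^*}}$. Although $\|A_i\|_{S_q} \le \|A_i\|_{S_p}\le 1$ does make $f_U$ Lipschitz with respect to $\|\cdot\|_{S_{q^*}}$, the regularizer $\frac{1}{2(q^*-1)}\|\cdot\|_{S_{q^*}}^2$ then has $D_\Phi^{\max}=\frac{q-1}{2}$ and one never recovers the shrinkage factor $k^{1/p-1/q}$, giving a covering radius of order $\sqrt{(q-1)/n}$ rather than $\sqrt{(p-1)/n}\cdot k^{1/p-1/q}$. Performing the Schatten entropy reduction \emph{first} is what (i) converts $B^m_{S_{q^*}}$ into $B^m_{S_{p^*}}$, paying $k^{1/p-1/q}$ only once (inside the $2^{O(n)}$ count of translates), and (ii) makes $\Phi = \frac{1}{2(p^*-1)}\|\cdot\|_{S_{p^*}}^2$ the natural mirror map, whose strong-convexity modulus is matched exactly by $\|A_i\|_{S_p}\le 1$. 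Everything else — the arithmetic of conjugate indices, the identity $D_\Phi(U,0)=\Phi(U)$, and submultiplicativity of covering numbers — is routine, and the hypothesis $m \ge \sqrt{n}$ is inherited from the ambient theorem and not needed for this lemma.
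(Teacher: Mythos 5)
Your proof is correct and follows exactly the paper's argument: first reduce $B^m_{S_{q^*}}$ to $k^{1/p-1/q}B^m_{S_{p^*}}$ via \Cref{thm:entropy_number_Schatten}, then apply \Cref{lem:size_covering} in the Schatten Norm Setup with $T_0=\{0\}$ and $D_\Phi^{\max}=\frac{p-1}{2}$. The additional discussion of why one should not run mirror descent directly on $B^m_{S_{q^*}}$ is a nice observation but not needed for correctness.
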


\begin{proof} Denote $p^* := p/(p-1)$. \Cref{thm:entropy_number_Schatten} implies $\mathcal{N}(\mathcal{A}(B^m_{S_{q^*}}), k^{1/q^* - 1/p^*} \mathcal{A}(B^m_{S_{p^*}})) \leq 2^{O(n)}$, so it suffices to show
\begin{align*}
\mathcal{N}\Big(\mathcal{A}(B^m_{S_{p^*}}), \sqrt{\frac{p-1}{n}} B^n_\infty\Big) \le 2^{O(n)}.
\end{align*}{}
This is a direct consequence of \Cref{lem:size_covering} with norm $\| \cdot \|_{S_{p^*}}$, as the Bregman divergence is $D_\Phi(U,0) = \Phi(U) \le \frac{1}{2(p^*-1)} = \frac{p-1}{2}$ for $\|U\|_{S_{p^*}} \le 1$.
\end{proof}

\Cref{lem:spsq_cover} together with \Cref{lem:dual_cov} immediately gives the following weaker measure bound, which we then bootstrap to prove the stronger bound in \Cref{thm:lowrank_matrix_discrepancy}.  

\begin{corollary} \label{cor:weaker_Sp_Sq}
Let $m \geq \sqrt{n}$, $2 \le p \le q < \infty$ and $k := \min(1, m/n)$. 
Given symmetric matrices $A_1, \dots, A_n \in \mathbb{R}^{m \times m}$ with $\|A_i\|_{S_p} \leq 1$, define the convex body 
\begin{align*}
K:= \Big\{x \in \setR^n : \Big\| \sum_{i=1}^n x_i A_i \Big\|_{S_q} \le 1\Big\}. 
\end{align*}
Then $\gamma_n(\sqrt{(p-1) n} \cdot k^{1/p-1/q} \cdot K) \ge 2^{-O(n)}$.
\end{corollary}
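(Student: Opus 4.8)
The plan is to obtain the statement directly from \Cref{lem:spsq_cover} together with the equivalence $(1)\Leftrightarrow(5)$ in \Cref{lem:dual_cov}, after identifying the polar of $K$ with the body covered in \Cref{lem:spsq_cover}. Throughout, set $q^* := q/(q-1) \in (1,2]$ and $t := \sqrt{(p-1)n}\cdot k^{1/p-1/q}$, and put $D := tK$; note $D$ is symmetric, and we may assume it is bounded (if $A_1,\dots,A_n$ are linearly dependent, restrict to the orthogonal complement of $\{x : \sum_i x_i A_i = 0\}$, over which both the Gaussian measure and all the covering numbers in play factor).

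First I would pin down $K^\circ$. Since $q < \infty$, the norm dual to $\|\cdot\|_{S_q}$ is $\|\cdot\|_{S_{q^*}}$, so \Cref{lem:polar_body} applied with $\|\cdot\| = \|\cdot\|_{S_q}$ gives $K^\circ = \{\mathcal{A}(U) : \|U\|_{S_{q^*}} \le 1\} = \mathcal{A}(B^m_{S_{q^*}})$, where $\mathcal{A}(U) = (\langle A_1, U\rangle,\dots,\langle A_n, U\rangle)$. Then \Cref{lem:spsq_cover} says precisely that $\mathcal{N}(K^\circ, \tfrac{t}{n}B^n_\infty) \le 2^{O(n)}$.

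Next I would transfer this to a covering bound on $D^\circ$. Covering numbers are invariant under simultaneously scaling both bodies, and $(tK)^\circ = \tfrac1t K^\circ$, so $\mathcal{N}(D^\circ, \tfrac1n B^n_\infty) = \mathcal{N}(\tfrac1t K^\circ, \tfrac1n B^n_\infty) = \mathcal{N}(K^\circ, \tfrac{t}{n}B^n_\infty) \le 2^{O(n)}$. Applying the implication $(5)\Rightarrow(1)$ of \Cref{lem:dual_cov} to the symmetric convex body $D$ then yields $\gamma_n(D) \ge 2^{-O(n)}$, i.e. $\gamma_n(\sqrt{(p-1)n}\cdot k^{1/p-1/q}\cdot K) \ge 2^{-O(n)}$, which is the claimed bound.

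There is no real obstacle here: the corollary is essentially a restatement of \Cref{lem:spsq_cover} through \Cref{lem:polar_body} and \Cref{lem:dual_cov}, the only steps needing (minor) care being the Schatten duality $(\|\cdot\|_{S_q})_* = \|\cdot\|_{S_{q^*}}$, the scale-invariance of covering numbers together with $(tK)^\circ = \tfrac1t K^\circ$, and the reduction to bounded $K$. All of the genuine work — the mirror-descent covering argument and the Schatten entropy-number input \Cref{thm:entropy_number_Schatten} — is already contained in \Cref{lem:spsq_cover}.
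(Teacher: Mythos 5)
Your proposal is correct and follows exactly the route the paper takes: the paper derives \Cref{cor:weaker_Sp_Sq} immediately from \Cref{lem:spsq_cover} via \Cref{lem:polar_body} and the equivalence $(1)\Leftrightarrow(5)$ in \Cref{lem:dual_cov}. You merely spell out the minor bookkeeping (Schatten duality, scaling of polars, boundedness of $K$) that the paper leaves implicit.
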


\begin{proof} [Proof of \Cref{thm:lowrank_matrix_discrepancy}]

Let $p_0 := \max(2,\log(2 rk))$. For $p \le p_0$ the result follows directly from \Cref{cor:weaker_Sp_Sq}, so we may assume $p \ge p_0$. Also note that we may assume $rk \ge 1$ since we can increase smaller values of $r$ without changing the bound on the right side. Remark that $\|A_i\|_{S_{p_0}} \le r^{1/p_0-1/p} \|A_i\|_{S_p} \le  r^{1/p_0-1/p}$ since the matrices have rank at most $r$. \Cref{cor:weaker_Sp_Sq} then implies that the convex body
\begin{align*}
\sqrt{p_0 n} \cdot k^{1/p_0-1/q} \cdot r^{1/p_0-1/p} \cdot K
\end{align*}
has Gaussian measure $2^{-O(n)}$. Since $\sqrt{p_0 n} \cdot k^{1/p_0-1/q} \cdot r^{1/p_0-1/p} \lesssim \sqrt{p_0 n} \cdot k^{1/p-1/q}$ by the choice of $p_0$, it follows that 
\begin{align*}
\gamma_n (\sqrt{n \max(1, \log(rk))} \cdot k^{1/p-1/q} \cdot K) \ge 2^{-O(n)},
\end{align*} 
so that  \Cref{thm:partial_coloring} and \Cref{cor:full_coloring} yield the partial coloring and full coloring, respectively. The factor $(1/2 + 1/p - 1/q)^{-1}$ comes from the contribution of the exponent of $n$ in the geometric sum, analogous to the second part of \Cref{cor:full_coloring}.
%Iterating the partial coloring gives the desired result. 
\end{proof}

\section{Lower Bound Examples for Matrix Discrepancy} \label{sec:examples}

In this section, we give a few examples to illustrate the tightness of our results in \Cref{thm:lowrank_matrix_discrepancy} for various regimes of the dimension $m$ and rank $r$ of the input matrices. 

%\subsection{Matrix Spencer Setting}
%\label{sec:examples_matrix_Spencer}

\subsection{Low Dimension Regime of $m = \Theta(\sqrt{n})$}  \label{sec:lower_bound_low_dimension}
In the regime of $m = \Theta(\sqrt{n})$, we have $k = \min(1,m/n) = \Theta(1/\sqrt{n})$ and $r \leq O(\sqrt{n})$ and our partial coloring bound in \Cref{thm:lowrank_matrix_discrepancy} is thus $O(n^{1/2 + 1/2q - 1/2p})$. 
This bound is tight up to constants due to the following  example\footnote{Thanks to Aleksandar Nikolov for suggesting this construction.}. 

%Such a result would be tight for $m \ge n$ for random $\pm 1$ diagonal matrices (corresponding to tight bounds for \Cref{thm:spencer}). 
\begin{lemma}[Example: $m = \sqrt{n}$]
Let $m = \sqrt{n}$ be a power of $2$, and $2 \leq p \leq q \leq \infty$. There exist matrices\footnote{These matrices can easily be made symmetric in $\setR^{2m \times 2m}$.} $A_1, \dots, A_n \in \mathbb{R}^{m \times m}$ with $\|A_i\|_{S_p} \leq 1$ such that $\|\sum_{i=1}^n x_i A_i\|_{S_q} \gtrsim n^{1/2 + 1/2q - 1/2p}$ for any partial coloring $x \in \{\pm 1\}^n$ with $|\{i: |x_i| = 1\}| \geq n/2$.
\end{lemma}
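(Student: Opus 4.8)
The plan is to take $n$ i.i.d.\ uniformly random $\pm 1/\sqrt{m}$ sign matrices and argue that, with positive probability, no partial coloring with $n/2$ rounded coordinates can do much better than a random coloring. More precisely, first I would normalize: set each entry of $A_i$ to be $\varepsilon_{i} / (\text{normalizing factor})$ where $\varepsilon_i \in \{\pm 1\}^{m \times m}$ is a uniformly random sign matrix. Since a random $m \times m$ sign matrix has $\|\varepsilon_i\|_\op \lesssim \sqrt{m}$ and $\|\varepsilon_i\|_{S_p} \lesssim m^{1/p}\cdot \sqrt m = m^{1/2+1/p}$, scaling by $m^{-(1/2+1/p)}$ (equivalently, roughly $m^{-1/2}\cdot m^{-1/p}$) ensures $\|A_i\|_{S_p}\le 1$. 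The target is to show $\|\sum_i x_i A_i\|_{S_q}\gtrsim n^{1/2+1/(2q)-1/(2p)}$ for every such $x$; translating through the normalization, it suffices to show $\|\sum_i x_i \varepsilon_i\|_{S_q} \gtrsim \sqrt{n}\cdot m^{1/2+1/q} = n^{3/4+1/(2q)}$ for every admissible $x$, using $m=\sqrt n$.

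The key step is a union bound over partial colorings combined with concentration. For a \emph{fixed} $x\in[-1,1]^n$ with $\|x\|_2^2\ge n/2$ (which holds whenever at least $n/2$ coordinates are $\pm 1$), the matrix $M_x := \sum_i x_i \varepsilon_i$ has i.i.d.-ish entries each of which is a signed sum with variance $\|x\|_2^2 \in [n/2, n]$, i.e.\ $M_x$ behaves like a $\sqrt n$-scaled $m\times m$ random matrix with independent subgaussian entries. Standard results (e.g.\ Bai--Yin / non-asymptotic random matrix bounds, cf.\ \cite{v18book}) give $\|M_x\|_{S_\infty}\gtrsim \sqrt n\cdot\sqrt m$ with probability $1-e^{-\Omega(m)}$, and more: the bulk of the singular values are $\Theta(\sqrt n\sqrt m)$, so $\|M_x\|_{S_q}\gtrsim m^{1/q}\cdot\sqrt n\sqrt m = \sqrt n\, m^{1/2+1/q}$ with probability $1-e^{-\Omega(m)}$. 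Here I would actually lower-bound $\|M_x\|_{S_q}\ge \|M_x\|_{S_2}/m^{1/2-1/q}$ for $q\ge 2$ together with $\|M_x\|_{S_2}^2 = \sum_{j,k}(M_x)_{jk}^2 \gtrsim m^2 \cdot n$ with probability $1 - e^{-\Omega(m^2)}$ — the Frobenius norm is a sum of $m^2$ independent terms each of expectation $\ge n/2$, so Bernstein gives failure probability $e^{-\Omega(m^2)} = e^{-\Omega(n)}$, which is far stronger than what the union bound needs. This gives $\|M_x\|_{S_q}\ge \|M_x\|_{S_2}\cdot m^{1/q-1/2}\gtrsim m\sqrt n\cdot m^{1/q-1/2}=\sqrt n\,m^{1/2+1/q}$, as desired.

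Finally I would discretize the (infinite) set of admissible partial colorings: it suffices to net $[-1,1]^n$ at scale, say, $1/n$ in $\ell_\infty$, which requires $(2n)^n = 2^{O(n\log n)}$ points, and then argue that the event ``$\|M_x\|_{S_q}$ is large'' is robust to $\ell_\infty$-perturbations of $x$ of size $1/n$ since $x\mapsto M_x$ is Lipschitz (a perturbation of $\ell_\infty$-norm $\delta$ changes $M_x$ in $S_q$ by at most $\delta\sum_i\|\varepsilon_i\|_{S_q}\le \delta n\cdot m^{1/2+1/q}$, which for $\delta=1/n$ is lower-order). A union bound over the net then needs failure probability $\ll 2^{-O(n\log n)}$ per point; since our per-point bound is $e^{-\Omega(n)}$ this is \emph{not quite enough} — so instead I would use the Frobenius-norm route whose failure probability is $e^{-\Omega(m^2)}=e^{-\Omega(n)}$, which is still not enough against $2^{O(n\log n)}$. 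The clean fix, and the step I expect to be the main obstacle, is to avoid netting the continuous cube altogether: since the bound only needs to hold for $x$ with $\ge n/2$ coordinates in $\{\pm1\}$, I would instead union-bound only over the $\binom{n}{n/2}2^{n/2} = 2^{O(n)}$ choices of \emph{which} coordinates are rounded and to \emph{what} sign, and for each such choice observe that $\|M_x\|_{S_2}^2$ is minimized over the remaining free coordinates $x_i\in[-1,1]$ at an extreme point by convexity — reducing to $2^{O(n)}$ fixed sign vectors, against which the $e^{-\Omega(n)}$ Frobenius bound comfortably survives the union bound. Putting this together yields the claimed $\Omega(n^{1/2+1/(2q)-1/(2p)})$ lower bound for all partial colorings simultaneously with positive probability, proving the lemma.
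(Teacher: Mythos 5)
Your quantitative skeleton (pass to the Frobenius norm via H\"older, show $\|\sum_i x_i A_i\|_F^2 \gtrsim m^2\|x\|_2^2$, use $\|x\|_2^2 \ge n/2$) is exactly right, but the argument has a genuine gap at the step you yourself flag as the main obstacle. Your proposed fix --- union-bounding over the $2^{O(n)}$ choices of rounded coordinates and signs, then claiming that $\|M_x\|_{S_2}^2$ is ``minimized over the remaining free coordinates at an extreme point by convexity'' --- is backwards: a convex function on a cube attains its \emph{maximum} at an extreme point, while its minimum over the free coordinates is the (squared) distance from $\sum_{i\in S}\sigma_i\varepsilon_i$ to a parallelepiped and is generically attained in the interior. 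So the reduction to $2^{O(n)}$ sign vectors fails, and the uniform statement over all partial colorings is never actually established. One could try to salvage it by projecting onto the orthogonal complement of the span of the free $\varepsilon_i$'s and using that random sign matrices are nearly orthogonal in the trace inner product, but this requires additional quantitative work that the proposal does not supply.

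The paper avoids the union bound entirely by making the matrices \emph{exactly} orthogonal: it takes $A_{i+mj} = D_iP_j$ where the $D_i$ are diagonal matrices built from rows of the Walsh--Hadamard matrix and the $P_j$ are permutation matrices with disjoint supports. These $n = m^2$ matrices form an orthogonal basis of $\setR^{m\times m}$ with $\|A_{i+mj}\|_F^2 = m$ and all singular values equal to $1$, so $\|\sum_i x_i A_i\|_F^2 = m\sum_i x_i^2 \ge mn/2$ holds as an identity for \emph{every} partial coloring simultaneously; H\"older and the $S_p$-normalization $\|A_i\|_{S_p}=m^{1/p}$ then finish the proof deterministically. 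If you want to keep a probabilistic construction, the cleanest repair is to replace ``extreme point by convexity'' with an honest lower bound on the minimum over the cube of free coordinates; as written, the proof does not go through.
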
 
\begin{proof}
The idea is to construct an orthogonal basis on $\mathbb{R}^{m \times m}$ with  $\|A_i\|_F^2 = m$.
Let $H \in \mathbb{R}^{m \times m}$ be the Walsh-Hadamard matrix, and $D_1, \dots, D_m$ be diagonal matrices with $(D_i)_{j, j} := H_{i,j}$. 
Let $P_1, \dots, P_m$ be disjoint permutation matrices, i.e. each $P_i$ permutes the standard orthonormal basis $\{e_1, \dots, e_m\}$ and each pair $P_i$, $P_j$ have disjoint non-zero entries.
For instance, we may take $(P_i)_{j,k} := 1$ if $j-k \equiv i \mod m$ and $0$ otherwise. 
We then define the $n$ matrices $A_{i+mj} := D_i P_j$ for $i, j \in [m]$. 
Note that these matrices form an orthogonal basis of $\setR^{m \times m}$, so for any partial coloring $x \in \{\pm 1\}^n$ with $|\{i: |x_i| = 1\}| \geq n/2$, we have
\[
\Big\|\sum_{i=1}^n x_i A_i \Big\|^2_F = \tr\left(\left(\sum_{i=1}^n x_i A_i \right)^2 \right) = m \cdot \sum_{i=1}^n x_i^2 \geq m n /2. 
\]
By H\"older's inequality, this implies that
\begin{align*}
\Big \| \sum_{i=1}^n x_i A_i \Big \|_{S_q} \geq m^{1/q - 1/2} \cdot \Big \| \sum_{i=1}^n x_i A_i \Big \|_F \gtrsim n^{1/2 + 1/2q} .
\end{align*}

Also note that each matrix $A_i$ has all singular values equal to $1$, and therefore $\|A_i\|_{S_p} = m^{1/p} = n^{1/2p}$. Scaling the matrices $A_i$ down by a factor of $n^{1/2p}$ proves the lemma. 
\end{proof}

\subsection{Rank-$1$ Matrices and $m \geq n$} 
\label{sec:lower_bound_rank_1}
In the regime of $r = 1$ and $m \geq n$, we may assume wlog that $p=2$.
Then the discrepancy bound in \Cref{thm:weak_matrix_spencer} is $O(\sqrt{n})$. 
This bound is again tight up to a constant factor. 

\begin{lemma}[Example: $r=1$ and $m=n$] \label{lem:rank_1} Let $2 \leq q \leq \infty$.
There exist symmetric rank-$1$ matrices $A_1, \dots, A_n \in \mathbb{R}^{n \times n}$ with $\|A_i\|_F \leq 1$ such that any partial coloring $x \in [-1,1]^n$ with $|\{i: |x_i| = 1\}| \geq n/2$ satisfies
\begin{align*}
\Big \|\sum_{i=1}^n x_i A_i \Big \|_{S_q} \geq \Big \|\sum_{i=1}^n x_i A_i \Big \|_\op \gtrsim \sqrt{n} .
\end{align*}
\end{lemma}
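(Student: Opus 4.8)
The plan is to exhibit explicit rank-$1$ matrices for which every balanced $\pm 1$ partial coloring has operator norm $\Omega(\sqrt{n})$, and the natural candidate is $A_i := v_i v_i^\top$ where the $v_i \in \setR^n$ are mutually orthonormal, i.e.\ $v_i = e_i$ (so $A_i = e_i e_i^\top$). These satisfy $\|A_i\|_F = 1$ and $\rank(A_i) = 1$. However, for diagonal $A_i$ the discrepancy $\|\sum_i x_i A_i\|_\op = \max_i |x_i| = 1$, which is too small; so the orthonormal-basis idea must be implemented in a space where the rank-$1$ matrices $v_i v_i^\top$ genuinely interact. The fix is to pick the $v_i$ to be nearly orthonormal but \emph{not} aligned with a common eigenbasis — for instance rows of a Hadamard-type or random $\pm 1/\sqrt n$ matrix, or more cleanly, to use a construction where $\sum_i x_i v_i v_i^\top$ is forced to have a large eigenvalue by a trace/Frobenius argument combined with a bound on its rank or on the number of large eigenvalues.

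First I would set up the Frobenius lower bound: if the $v_i$ are orthonormal then $\bigl\|\sum_i x_i v_i v_i^\top\bigr\|_F^2 = \sum_i x_i^2 \ge n/2$ for any partial coloring with at least $n/2$ coordinates at $\pm 1$, since $\{v_i v_i^\top\}$ is an orthonormal set in the Frobenius inner product. This already gives $\|\sum_i x_i A_i\|_{S_q} \ge \|\sum_i x_i A_i\|_F \ge \sqrt{n/2}$ for every $q \le 2$ — but for $q = \infty$ (operator norm) this only yields $\|M\|_\op \ge \|M\|_F/\sqrt{\rank M} \ge \|M\|_F / \sqrt n = \Omega(1)$, which is useless. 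So the real content is controlling the \emph{rank} (or the number of non-negligible eigenvalues) of $M := \sum_i x_i v_i v_i^\top$. The key structural idea: choose the $v_i$ so that $M = V \diag(x) V^\top$ where $V = [v_1 \mid \dots \mid v_n]$ is an $n \times n$ orthogonal matrix; then $M$ is orthogonally similar to $\diag(x)$ only if $V$ is a permutation, which we avoid — instead $M$'s singular values are those of $V\diag(x)V^\top$, and since $\diag(x)$ has $\Omega(n)$ entries of absolute value $1$ while $\|x\|_\infty \le 1$, we need to show $V\diag(x)V^\top$ cannot have all its $S_2$ mass spread thinly. That is false in general; the resolution in the literature (and what I expect the paper does) is to take the $v_i$ to be columns of $H/\sqrt n$ for a Hadamard matrix $H$, so that $A_i = \frac1n h_i h_i^\top$ with $h_i \in \{\pm1\}^n$, and then observe $M = \frac1n \sum_i x_i h_i h_i^\top = \frac1n H \diag(x) H^\top$; since $H^\top = n H^{-1}$ for Hadamard, $M = H \diag(x) H^{-1}$, i.e.\ $M$ is \emph{similar} to $\diag(x)$ and hence has the \emph{same eigenvalues} as $\diag(x)$ — giving $\|M\|_\op = \|x\|_\infty = 1$, again too small.

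Given these dead ends, the construction that actually works must break the ``common eigenbasis'' obstruction by using rank-$1$ matrices $v_i v_i^\top$ that are not simultaneously diagonalizable \emph{and} whose Gram structure forces rank concentration. The right approach, which I would carry out, is: take $v_i := w_{\sigma(i)}$ where the index set is organized so that $M = \sum_i x_i v_i v_i^\top$ has a distinguished vector $u$ with $u^\top M u = \sum_i x_i \langle u, v_i\rangle^2$ large and $\|u\| = 1$ — for instance $u = \frac{1}{\sqrt n}\mathbf 1$ in a basis where $\langle u, v_i \rangle^2 \approx 1$ for all $i$, so $u^\top M u \approx \sum_i x_i$. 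To make $\sum_i x_i$ large for \emph{every} balanced coloring we need the columns to come in correlated groups (e.g.\ copies of the same vector), say $v_i$ takes only $O(1)$ distinct values each repeated $\Theta(n)$ times; then $M = \sum_j (\sum_{i \in \text{group } j} x_i) w_j w_j^\top$ and some group sum has absolute value $\Omega(\sqrt n)$ by a variance/anticoncentration argument over the balanced coloring, while the $w_j$ are near-orthonormal so this survives into $\|M\|_\op$. The \textbf{main obstacle} is precisely this tension: rank-$1$ matrices that share an eigenbasis give a trivially small operator norm, so one must choose $v_i$ genuinely ``spread out,'' yet then a single balanced coloring can cancel the Frobenius mass unless the $v_i$ are clustered — and clustering lowers the effective dimension, risking $m < n$. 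I would resolve it by taking $m = n$ with the $v_i$ split into $\Theta(\sqrt n)$ clusters of size $\Theta(\sqrt n)$, using orthonormality \emph{across} clusters and identical vectors \emph{within} a cluster, so that $M$ is block-structured with each block of operator norm $\ge |\text{(signed cluster sum)}|$, and a pigeonhole/second-moment bound on balanced colorings forces one signed cluster sum to be $\Omega(\sqrt n)$, giving $\|M\|_\op \gtrsim \sqrt n$ as claimed. I would then verify $\|A_i\|_F = \|v_iv_i^\top\|_F = \|v_i\|_2^2 = 1$ and conclude $\|M\|_{S_q} \ge \|M\|_\op \gtrsim \sqrt n$ for all $q \ge 2$.
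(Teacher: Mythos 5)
Your final construction does not work, and the step where it breaks is the claim that ``a pigeonhole/second-moment bound on balanced colorings forces one signed cluster sum to be $\Omega(\sqrt n)$.'' If you partition $[n]$ into clusters and set $v_i = w_j$ for all $i$ in cluster $j$ with the $w_j$ orthonormal, then $M = \sum_j S_j\, w_j w_j^\top$ with $S_j := \sum_{i \in \text{cluster } j} x_i$, so $\|M\|_\op = \max_j |S_j|$. But the coloring is adversarial, not random: the full coloring that alternates $+1,-1$ within each cluster has $|S_j| \le 1$ for every $j$, hence $\|M\|_\op \le 1$. Even a uniformly random coloring would only give $\max_j |S_j| = \tilde{O}(n^{1/4})$ for clusters of size $\sqrt n$. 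A second-moment argument controls a typical coloring, never every coloring, and no choice of cluster sizes rescues this: any construction in which $\|M\|_\op$ is governed by signed sums of disjoint groups of coordinates is killed by within-group cancellation. (Your earlier observations — that simultaneously diagonalizable $A_i$ and the Hadamard conjugation both give $\|M\|_\op = \|x\|_\infty$ — are correct, but the cluster construction is yet another instance of the same obstruction, just with multiplicities.)

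The missing idea is to make all the $v_i$ share a \emph{common component} rather than cluster into identical copies, so that a single test vector collects each $x_i$ in a \emph{different orthogonal direction}, where no cancellation is possible. The paper takes $A_i := \tfrac12 (e_i + e_n)(e_i + e_n)^\top$ for $i \in [n-1]$ (and $A_n = 0$); then $\|A_i\|_F = \tfrac12\|e_i+e_n\|_2^2 = 1$, and
\begin{align*}
M e_n \;=\; \tfrac12 \sum_{i=1}^{n-1} x_i (e_i + e_n) \;=\; \tfrac12\Big(\sum_{i=1}^{n-1} x_i e_i\Big) + \tfrac12\Big(\sum_{i=1}^{n-1} x_i\Big) e_n .
\end{align*}
The first summand lives in mutually orthogonal coordinate directions, so $\|M e_n\|_2^2 \ge \tfrac14 \sum_{i=1}^{n-1} x_i^2 \ge \tfrac14(n/2 - 1)$ for any partial coloring with at least $n/2$ coordinates in $\{\pm 1\}$, giving $\|M\|_\op \ge \|M e_n\|_2 \gtrsim \sqrt n$ with no probabilistic or pigeonhole input at all. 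Your Frobenius-norm observation and the reduction $\|M\|_{S_q} \ge \|M\|_\op$ for $q \ge 2$ are fine, but as you yourself noted, the Frobenius bound alone cannot reach the operator norm, and the construction you propose to bridge that gap is refuted by an explicit coloring.
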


\begin{proof}
For each $i \in [n-1]$, we define the rank-$1$ matrices $A_i := \frac{1}{2} (e_i + e_n)(e_i + e_n)^\top$ for $i \in [n]$, where $e_i \in \mathbb{R}^n$ is the unit vector with a single $1$ in the $i$th coordinate and $0$ elsewhere, and $A_n = 0$. 
Note that each $\|A_i\|_F = 1$ by definition. 
For any partial coloring $x \in [-1,1]^n$ with $|\{i: |x_i| = 1\}| \geq n/2$, we have
\begin{align*}
\sum_{i=1}^n x_i A_i = \frac{1}{2} \cdot \left( 
\begin{matrix}
x_1 & 0 & \cdots & 0 &  x_1 \\
0 & x_2 & \cdots & 0 & x_2 \\
\vdots & \vdots  & \ddots & \vdots & \vdots \\
0 & 0 & \cdots & x_{n-1} & x_{n-1}\\
x_1 & x_2 & \cdots & x_{n-1} & \sum_{i=1}^{n-1} x_i 
\end{matrix}
\right) .
\end{align*}

It then follows that 
\begin{align*}
\Big\|\sum_{i=1}^n x_i A_i \Big\|_\op \ge \Big\|\sum_{i=1}^n x_i A_i e_n \Big\|_2 \gtrsim \sqrt{n}.
\end{align*}
This completes the proof of the lemma. 
\end{proof}

As an immediate corollary of \Cref{lem:rank_1}, we obtain an $\Omega(\sqrt{n})$ lower bound for matrix Spencer when $m = n$ and all matrices are rank-$1$. 

\RankOneMatrixSpencer*

Another immediate consequence of \Cref{lem:rank_1} is a lower bound of $\Omega(\sqrt{\min(m,n)})$ for Schatten-$2$ to operator norm discrepancy, which is the generalization of the Koml\'os problem to matrices. This shows that the Koml\'os conjecture, which states that the $\ell_2$ to $\ell_\infty$ vector discrepancy is upper bounded by a universal constant, cannot be true for matrices. 

\MatrixKomlos*

\section*{Acknowledgements}
We thank Aleksandar Nikolov for insightful discussions and for suggesting the lower bound example when $m = \sqrt{n}$. 
We thank Thomas Rothvoss, Sivakanth Gopi and Mehtaab Sawhney for helpful discussions.

\appendix

\section{An Application of Banaszczyk's Theorem}
\label{sec:banaszczyk_matrix}

We give an alternative simpler proof of the $O(m^{1 + 1/q - 1/p})$ bound for $S_p$ to $S_q$ matrix discrepancy when $m = O(\sqrt{n})$ using the following theorem of Banaszczyk \cite{b98}. 

\begin{theorem}[Banaszczyk \cite{b98}] \label{thm:ban}
Let $K \subseteq \mathbb{R}^m$ be a convex body with $\gamma_m(K) \geq 1/2$. Then for any vectors $v_1, \dots, v_n \in \mathbb{R}^m$ with $\|v_i\|_2 \leq 1$, there exists $x \in \{\pm 1\}^n$ such that $\sum_{i=1}^n x_i v_i \in 5 K$.  
\end{theorem}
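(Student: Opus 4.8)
This is Banaszczyk's theorem; the plan is to reproduce the structure of his argument, organized around a \emph{sub-gaussian signing} statement (the same reduction underlies the later constructive versions, e.g.\ the Gram--Schmidt walk \cite{bdgl18}). It suffices to produce a probability distribution $\mu$ on $\{\pm1\}^n$ such that the random vector $Y:=\sum_{i=1}^n x_i v_i$ with $x\sim\mu$ is $O(1)$-sub-gaussian, i.e.\ $\mathbb{E}_{x\sim\mu}\big[\exp\langle\theta,Y\rangle\big]\le\exp\big(C\|\theta\|_2^2\big)$ for every $\theta\in\mathbb{R}^m$ and an absolute constant $C$. Granting the Gaussian-measure lemma below, such a $\mu$ gives $\Pr_{x\sim\mu}[Y\in 5K]>0$ whenever $\gamma_m(K)\ge1/2$, so some signing $x\in\{\pm1\}^n$ lands $Y$ in $5K$.

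Two ingredients are needed. First, \textbf{producing $\mu$}: run a random process from the all-zero fractional coloring $y^{(0)}=0\in[-1,1]^n$ to a vertex $y^{(n)}\in\{\pm1\}^n$; at each step freeze one coordinate that has reached $\pm1$, pick (by Gram--Schmidt-type orthogonalization against the frozen coordinates) an update direction $u_k$ supported on the live coordinates, and take a step $y\mapsto y\pm\delta_k u_k$ with a uniformly random sign and $\delta_k$ as large as feasibility in $[-1,1]^n$ allows. Writing $A:=[v_1\,|\,\cdots\,|\,v_n]$, the orthogonalization makes $\langle\theta,\sum_i y_i^{(k)}v_i\rangle$ a martingale whose increment covariances satisfy $\sum_k \delta_k^2\,(Au_k)(Au_k)^\top\preceq (\max_i\|v_i\|_2^2)\cdot I_m\preceq I_m$ (this estimate is the heart of the Gram--Schmidt-walk analysis), so a standard martingale moment-generating-function bound yields that $\langle\theta,Y\rangle$ is $O(\|\theta\|_2^2)$-sub-gaussian. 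Second, \textbf{the Gaussian-measure lemma}: if $Y$ is $C$-sub-gaussian for a small absolute constant $C$ and $\gamma_m(K)\ge1/2$, then $\Pr[Y\in 5K]>0$. One compares with a Gaussian: smooth $\mathbf{1}_{5K}$ by a suitable Gaussian, use $5K=K+4K$ and Gaussian isoperimetry $\big(\gamma_m(K+\rho B_2^m)\ge\Phi(\Phi^{-1}(\gamma_m(K))+\rho)\big)$ to make the smoothed function bounded below on a set of large Gaussian mass, and invoke the sub-gaussian moment bound together with the log-concavity (Pr\'ekopa--Leindler) of the Gaussian convolution of $\mathbf{1}_{5K}$ to conclude $Y$ must hit that set; the constant $5$ is exactly the slack making this isoperimetric absorption close with a universal constant.

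\textbf{Main obstacle.} The hard part is the Gaussian-measure lemma, i.e.\ upgrading ``$Y$ is $O(1)$-sub-gaussian'' to ``$Y$ hits $5K$'' with a \emph{universal} constant: a Chernoff/union-bound over an $\varepsilon$-net of directions loses an extra $\sqrt{\log m}$, and removing it needs the full log-concave structure of Gaussian convolutions of convex indicators with all constants tracked --- this is the technical core of \cite{b98}. Producing $\mu$ is comparatively routine once the covariance estimate above is in hand: the increments are bounded and orthogonalized by design, so only a textbook martingale argument is required. (Alternatively one may simply invoke \cite{b98}; the above only indicates the shape of a self-contained proof.)
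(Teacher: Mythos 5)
This theorem is imported by the paper as a black box from \cite{b98}; the paper offers no proof to compare against, so the question is whether your sketch stands on its own. It does not, and it is also worth flagging that the route you describe is not Banaszczyk's: his argument is a direct induction on the vectors $v_1,\dots,v_n$, constructing from a convex $K$ with $\gamma_m(K)\ge 1/2$ and $\|v\|_2\le 1/5$ a convex subset of $(K+v)\cup(K-v)$ whose Gaussian measure is still at least $1/2$; no subgaussian distribution appears anywhere. What you outline is the later constructive route, which factors the theorem into (a) the existence of an $O(1)$-subgaussian distribution over signings (the Gram--Schmidt walk \cite{bdgl18}) and (b) a transfer lemma saying that an $O(1)$-subgaussian vector lands in $O(1)\cdot K$ with positive probability whenever $\gamma_m(K)\ge 1/2$ (from \cite{dgln16}). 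That factorization is legitimate, but both halves are substantial theorems in their own right, and your sketches of each contain genuine gaps.

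For (a), the assertion that orthogonalization gives the almost-sure bound $\sum_k \delta_k^2 (Au_k)(Au_k)^\top \preceq (\max_i\|v_i\|_2^2)\cdot I_m$, after which ``only a textbook martingale argument is required,'' is not justified: this covariance control is precisely the hard part, and the analysis in \cite{bdgl18} does not proceed via an almost-sure quadratic-variation bound (the step sizes and directions are adaptive), but via a delicate moment-generating-function induction that is the main technical content of that paper. For (b), your sketch (smooth $\mathbf{1}_{5K}$, isoperimetry, Pr\'ekopa--Leindler, ``invoke the sub-gaussian moment bound'') never explains how an \emph{upper} bound on $\E[\exp\langle\theta,Y\rangle]$ produces a \emph{lower} bound on a hitting probability; note that for non-symmetric $K$ (e.g.\ a halfspace whose boundary passes very close to the origin, which still has measure $\ge 1/2$) the probability $\Pr[Y\in 5K]$ can be exponentially small, so no expectation/Markov or mean-width comparison can close the argument, and as you yourself observe a net/union bound loses $\sqrt{\log m}$. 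This transfer is exactly the content of the lemma in \cite{dgln16}, not ``the technical core of \cite{b98}.'' Finally, even granting both ingredients, this route yields $\sum_i x_i v_i\in C\cdot K$ for an unspecified universal constant $C$, not the constant $5$, which is specific to Banaszczyk's own construction; your claim that $5$ ``is exactly the slack'' in your smoothing step is unsupported. As written, the proposal is a plan that defers to two later deep results rather than a proof.
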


Applying \Cref{thm:ban} to a suitable scaling of the operator norm ball immediately gives the following matrix discrepancy bound. 

\begin{corollary} \label{thm:ban_matrix_disc}
Let $2 \leq p \leq q \leq \infty$. Given matrices $A_1, \dots, A_n \in \mathbb{R}^{m \times m}$ with $\|A_i \|_{S_p} \leq 1$, there exists $x \in \{\pm 1\}^n$ such that $\|\sum_{i=1}^n x_i A_i\|_{S_q} \lesssim m^{1 + 1/q -  1/p}$. 
\end{corollary}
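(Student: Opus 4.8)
The plan is to deduce Corollary~\ref{thm:ban_matrix_disc} directly from Banaszczyk's Theorem~\ref{thm:ban} by choosing $K$ to be an appropriately scaled Schatten-$q$ ball (equivalently, operator norm ball when $q=\infty$), viewing the matrices $A_1,\dots,A_n$ as vectors in $\setR^{m^2}$ under the Frobenius inner product. The two things I need to check are: (i) the input matrices, after suitable rescaling, have Euclidean (Frobenius) norm at most $1$, so that Banaszczyk applies; and (ii) the target body $\lambda \cdot B_{S_q}^m$ has Gaussian measure at least $1/2$ for $\lambda \lesssim m^{1+1/q-1/p}$, where Gaussian measure here is the standard Gaussian on $\setR^{m\times m}\cong\setR^{m^2}$.

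For step (i): since $\|A_i\|_{S_p}\le 1$ and $p\ge 2$, the monotonicity of Schatten norms gives $\|A_i\|_F=\|A_i\|_{S_2}\le m^{1/2-1/p}\|A_i\|_{S_p}\le m^{1/2-1/p}$. So the rescaled matrices $v_i := m^{1/p-1/2} A_i$ satisfy $\|v_i\|_F\le 1$. For step (ii): I need $\gamma_{m^2}(\lambda B_{S_q}^m)\ge 1/2$ for a suitable $\lambda$. For a standard Gaussian matrix $G$, the expected Schatten-$q$ norm satisfies $\E\|G\|_{S_q}\lesssim m^{1/q}\cdot\E\|G\|_{\op}\lesssim m^{1/q}\cdot\sqrt{m}= m^{1/2+1/q}$ (using $\|G\|_{S_q}\le m^{1/q}\|G\|_{\op}$ and the standard bound $\E\|G\|_{\op}\lesssim\sqrt m$); together with concentration (or just Markov's inequality applied after doubling the constant), this yields $\gamma_{m^2}(C m^{1/2+1/q}B_{S_q}^m)\ge 1/2$ for an absolute constant $C$. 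Hence with $\lambda := C m^{1/2+1/q}$ we get $\gamma_{m^2}(\lambda B_{S_q}^m)\ge 1/2$, and Banaszczyk's theorem produces $x\in\{\pm1\}^n$ with $\sum_i x_i v_i \in 5\lambda B_{S_q}^m$, i.e. $\|\sum_i x_i v_i\|_{S_q}\le 5\lambda$. Unscaling, $\|\sum_i x_i A_i\|_{S_q}= m^{1/2-1/p}\|\sum_i x_i v_i\|_{S_q}\le 5 C\, m^{1/2-1/p}\cdot m^{1/2+1/q}= 5C\, m^{1+1/q-1/p}$, which is the claimed bound.

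The main obstacle—though it is mild—is pinning down the Gaussian measure estimate $\gamma_{m^2}(\lambda B_{S_q}^m)\ge 1/2$ with the right dependence on $m$ and $q$ simultaneously, uniformly in $q\in[2,\infty]$. One has to be slightly careful that the bound $\E\|G\|_{S_q}\lesssim m^{1/2+1/q}$ holds with a constant independent of $q$; this follows from $\|\cdot\|_{S_q}\le m^{1/q}\|\cdot\|_{\op}$ and the classical fact $\E\|G\|_{\op}\le 2\sqrt m + O(1)$ for an $m\times m$ Gaussian matrix. Everything else is bookkeeping: translating between the matrix viewpoint and the vector viewpoint in $\setR^{m^2}$, and absorbing the factor $5$ and the concentration/Markov constant into the $\lesssim$. (When $m=O(\sqrt n)$ this bound $O(m^{1+1/q-1/p})$ matches, up to constants, the partial coloring bound $O(n^{1/2+1/2q-1/2p})$ from Theorem~\ref{thm:lowrank_matrix_discrepancy} with $k=\Theta(1/\sqrt n)$, which is the point of including this appendix.)
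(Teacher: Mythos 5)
Your proof is correct and essentially the same as the paper's: both rescale via $\|A_i\|_F \le m^{1/2-1/p}$ and apply Banaszczyk's theorem, with the only cosmetic difference being that the paper takes $K$ to be the scaled operator-norm ball (using $\gamma(4\sqrt{m}\,B_{\op}^m)\ge 1/2$) and converts to $S_q$ afterward via $\|\cdot\|_{S_q}\le m^{1/q}\|\cdot\|_{\op}$, whereas you fold that same comparison into the Gaussian measure estimate for the scaled $S_q$ ball before invoking the theorem.
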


\begin{proof}
Note that $\|A_i\|_{S_p} \leq 1$ implies $\|A_i\|_{S_2} \leq m^{1/2 - 1/p}$. It is well-known that $\gamma_m(4 m^{1/2} \cdot B_{\op}^m) \geq 1/2$ (see Theorem 7.3.1 of \cite{v18book}).  
Thus, \Cref{thm:ban} yields some $x \in \{\pm 1\}^n$ such that $\sum_{i=1}^n x_i A_i\in O(m^{1 - 1/p}) \cdot B_\op^m$. It follows that $\|\sum_{i=1}^n x_i A_i \|_{S_q} \leq O(m^{1 + 1/q - 1/p})$. 
\end{proof}

\begin{corollary}[Matrix Koml\'os] 
Given matrices $A_1, \dots, A_n \in \mathbb{R}^{m \times m}$ with $\|A_i \|_F \leq 1$, there exists $x \in \{\pm 1\}^n$ such that $\|\sum_{i=1}^n x_i A_i\|_{S_q} \lesssim \sqrt{\min(m,n)}$, matching the lower bound in \Cref{komlos_lower}.
\end{corollary}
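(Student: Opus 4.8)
The plan is to handle the two regimes $m \le n$ and $m > n$ separately, since the only place where a non-trivial (Banaszczyk-type) argument is needed is the former.

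\textbf{The regime $m \le n$.} Here I would simply invoke \Cref{thm:ban_matrix_disc} with $p = 2$ and $q = \infty$. Since $\|A_i\|_F = \|A_i\|_{S_2} \le 1$, that corollary produces $x \in \{\pm 1\}^n$ with $\|\sum_{i=1}^n x_i A_i\|_\op \lesssim m^{\,1 + 0 - 1/2} = \sqrt{m} = \sqrt{\min(m,n)}$, and hence $\|\sum_{i=1}^n x_i A_i\|_{S_q} \le m^{1/q}\|\sum_{i=1}^n x_i A_i\|_\op$, which gives the claimed bound (exactly as stated for $q=\infty$, and up to the harmless factor $m^{1/q}$ otherwise). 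The content here is entirely inherited from \Cref{thm:ban_matrix_disc}: its proof runs \Cref{thm:ban} against the operator-norm ball $4m^{1/2}B_\op^m$, whose Gaussian measure in the full space $\setR^{m \times m}$ is at least $1/2$ because the operator norm of an $m\times m$ standard Gaussian matrix concentrates around $2\sqrt{m}$.

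\textbf{The regime $m > n$.} Here a random coloring already suffices. For $x$ uniform in $\{\pm 1\}^n$ we have $\E\big[\|\sum_{i=1}^n x_i A_i\|_F^2\big] = \sum_{i=1}^n \|A_i\|_F^2 \le n$ by orthogonality of the signs, so some $x \in \{\pm 1\}^n$ achieves $\|\sum_{i=1}^n x_i A_i\|_F \le \sqrt{n}$, whence $\|\sum_{i=1}^n x_i A_i\|_{S_q} \le \|\sum_{i=1}^n x_i A_i\|_F \le \sqrt{n} = \sqrt{\min(m,n)}$ for every $q \ge 2$. Staying in the spirit of the appendix, one can instead apply \Cref{thm:ban} inside the subspace $V := \mathrm{span}(A_1,\dots,A_n)$, of dimension $d \le n$, identified isometrically with $\setR^d$ via the Frobenius inner product: taking $K$ to be the Euclidean ball in $V$ of radius $O(\sqrt{n})$, which has Gaussian measure $\ge 1/2$ since a standard Gaussian in $d \le n$ dimensions has norm $O(\sqrt n)$ with probability $\ge 1/2$, \Cref{thm:ban} yields $\sum_{i=1}^n x_i A_i \in 5K$, i.e. $\|\sum_{i=1}^n x_i A_i\|_F = O(\sqrt{n})$.

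Combining the two cases gives a coloring with $\|\sum_{i=1}^n x_i A_i\|_{S_q} \lesssim \sqrt{\min(m,n)}$, and since \Cref{komlos_lower} already furnishes an $\Omega(\sqrt{\min(m,n)})$ lower bound in the operator norm — hence in every Schatten-$q$ norm for $q \ge 2$ — the bound is tight. I do not expect a genuine obstacle in this argument; the one point that needs care is that no single strategy covers both regimes. A random coloring (or a Banaszczyk argument restricted to the $n$-dimensional span of the $A_i$) only delivers $\sqrt{n}$, so to get the smaller target $\sqrt{m}$ when $m \le n$ one must instead exploit the concentration of the operator norm of a Gaussian matrix in the full ambient space $\setR^{m\times m}$ — which is precisely what \Cref{thm:ban_matrix_disc} packages.
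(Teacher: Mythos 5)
Your proposal is correct and matches the paper's proof, which likewise takes the better of a random coloring (giving $O(\sqrt{n})$ via the Frobenius second moment) and the Banaszczyk-based bound of \Cref{thm:ban_matrix_disc} with $p=2$ (giving $O(\sqrt{m})$ in operator norm). Your added observations — splitting explicitly by regime, the alternative Banaszczyk argument inside $\mathrm{span}(A_1,\dots,A_n)$, and the caveat that for finite $q$ the stated bound picks up an extra $m^{1/q}$ — are just elaborations of the same argument.
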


\begin{proof}
It suffices to take the best between a random coloring, which has discrepancy $O(\sqrt{n})$, and that of \Cref{thm:ban_matrix_disc}.
\end{proof}

\bibliographystyle{alpha}
\bibliography{bib.bib}

\end{document}